\numberwithin{equation}{section}
\newcommand{\qtx}[1]{\quad\text{#1}\quad}
\newcommand{\beq}{\begin{equation}}
\newcommand{\eeq}{\end{equation}}
\newcommand{\bl}{\begin{lemma}}
\newcommand{\el}{\end{lemma}}
\newcommand{\bc}{\begin{coro}}
\newcommand{\ec}{\end{coro}}
\newcommand{\bp}{\begin{prop}}
\newcommand{\ep}{\end{prop}}
\newcommand{\bd}{\begin{defini}}
\newcommand{\ed}{\end{defini}}
\DeclareMathOperator{\im}{\mathrm{Im}}  
\DeclareMathOperator{\re}{\mathrm{Re}}
\newtheorem{theorem}{Theorem}[section]
\newtheorem{lemma}[theorem]{Lemma}
\newtheorem{coro}[theorem]{Corollary}
\newtheorem{prop}[theorem]{Proposition}
\newtheorem{remark}[theorem]{Remark} 
\newtheorem{defini}[theorem]{Definition}
\newtheorem*{assumptions}{Assumptions}
\newcommand{\de}{\delta}
\newcommand{\lb}{\lambda}
\newcommand{\ze}{\zeta}
\newcommand{\bvp}{{\boldsymbol{\varphi}}}
\newcommand{\BPhi}{\mathbf{\Phi}}
\newcommand{\BPsi}{\boldsymbol{\Psi}}
\newcommand{\ve}{\varepsilon }
\newcommand{\e}{{\rm e}}
\newcommand{\I}{\mathcal{I}}
\newcommand{\Hh}{\mathcal{H}}
\newcommand{\Ff}{\mathcal{F}}
\newcommand{\Kk}{\mathcal{K}}
\newcommand{\Tt}{\mathcal{T}}
\newcommand{\cB}{\mathcal{B}}
\newcommand{\Bb}{\mathcal{B}}
\newcommand{\Aa}{\mathcal{A}}
\newcommand{\Mm}{\mathcal{M}}
\newcommand{\Ll}{\mathcal{L}}
\newcommand{\Tr}{{\rm Tr}}
\newcommand{\Cc}{\mathcal{C}}
\newcommand{\Pp}{\mathcal{P}}
\newcommand{\Ee}{\mathcal{E}}
\newcommand{\Ss}{\mathcal{S}}
\newcommand{\Oo}{\mathcal{O}}
\newcommand{\RR}{\mathbb{R}}
\newcommand{\NN}{\mathbb{N}}
\newcommand{\ZZ}{\mathbb{Z}}
\newcommand{\CC}{\mathbb{C}}
\newcommand{\VV}{\mathbb{V}}
\newcommand{\TT}{\mathbb{T}}
\newcommand{\WW}{\mathbb{W}}
\newcommand{\HH}{\mathbb{H}}
\newcommand{\IS}{\mathbb{T}}
\newcommand{\E}{\mathbb E}
\newcommand{\G}{\mathbb G}
\newcommand{\aaa}{\mathbf{a}}
\newcommand{\bbb}{\mathbf{b}}
\newcommand{\ccc}{\mathbf{c}}
\newcommand{\Sym}{{\rm Sym}}
\newcommand{\diag}{{\rm diag}}
\newcommand{\sgn}{{\rm sgn}}
\newcommand{\hn}{|\!|\!|}
\newcommand{\hnn}{|\!|\!|\!|}
\newcommand{\PE}{\mbox{$\Pp\Ee$}}
\newcommand{\vze}{\vec{\zeta}}
\newcommand{\vxi}{\vec{\xi}}
\newcommand{\od}{\odot}
\newcommand{\ot}{{\od 2}}
\newcommand{\one}{{\bf 1}}
\newcommand{\blb}{{\boldsymbol\theta}}
\newcommand{\PP}{\mathfrak{P}}
\newcommand{\bpart}{\boldsymbol{\partial}}
\newcommand{\brc}{\mathbf{c}}
\newcommand{\III}{\boldsymbol{\I}}
\begin{document}

\title[A.C. Spectrum on tree-strips of finite cone type]
{Absolutely Continuous Spectrum for random Schr\"odinger operators on tree-strips of finite cone type.}

\author{Christian Sadel}
\address{University of California, Irvine,
Department of Mathematics,
Irvine, CA 92697-3875,  USA}
 \email{csadel@math.uci.edu}

\subjclass[2010]{Primary 82B44, Secondary 47B80, 60H25}  
\keywords{random Schrodinger operators, Anderson model, tree of finite cone type, extended states, absolutely continuous spectrum, localization.}


\begin{abstract}
A tree-strip of finite cone type is the product of a tree of finite cone type with a finite set.
We consider random Schr\"odinger operators on these tree strips, similar to the Anderson model.
We prove that for small disorder the spectrum is almost surely, purely, absolutely continuous
in a certain set.
\end{abstract}

\maketitle 

\section{Introduction} 

If $\TT$ denotes the set of vertices of a tree, i.e. a discrete graph without loops, then we call the cross product of $\TT$ with a finite set 
$\I=\{1,\ldots,m\}$ a tree-strip. The cardinality $m$ of $\I$ is referred to as 'width' or 'number of orbitals'.
The expressions 'strip' and 'width' come from the fact, that in dimension one, for the 'tree' $\ZZ$, the set $\ZZ\times \I$ corresponds to a strip of width $m$.
The expression 'number of orbitals' refers to the fact that a Schr\"odinger operator on $\TT \times \I$ may model a multi-orbital system 
on $\TT$ since there is a natural isomorphism between the Hilbert spaces $\ell^2(\TT\times \I)$, $\ell^2(\TT)\otimes \CC^m$ and $\bigoplus_{k=1}^m \ell^2(\TT)$. The $m$ copies of $\ell^2(\TT)$ model the orbitals and a Schr\"odinger operator can contain hopping terms along the tree, a potential and interactions between the orbitals.
If $\I=\G$ is a finite graph, then there is a natural adjacency operator on the product graph $\TT\times\G$ and there is an Anderson model on this product graph which can also be considered as a random Schr\"odinger operator on the tree-strip. 

A tree-strip of finite cone type is a tree-strip $\TT\times \I$, where $\TT$ is a tree of finite cone type. Such trees will be constructed below by a substitution rule.
A certain class of operators on trees of finite cone type, including the (ordinary, one-orbital) Anderson model, have been studied in the PhD thesis by Matthias Keller \cite{Kel} and the related papers \cite{KLW, KLW2}. In particular, they showed the existence of purely absolutely continuous spectrum for small disorder for a wide class of trees of finite cone type.

It is widely accepted that for the Anderson model on $\ZZ^d$ and $\RR^d$ and dimension $d\geq 3$ one expects the existence of a.c. (absolutely continuous) spectrum for small disorder and localization for large disorder and at spectral edges. In dimensions $d=1$ and $d=2$ one expects localization for any disorder, except if some built in symmetries prevent localization, e.g. \cite{SS}.

Localization for one-dimensional models \cite{GMP,KuS,CKM}, quasi-one dimensional models (i.e. strips \cite{Lac,KLS}, 
and finite dimensional trees \cite{Breu})
and at spectral edges and high disorder \cite{FS,FMSS,DLS,SW,CKM,DK,Kl2,AM,A,Wang,Klo} is well understood. Localization for low disorder in 2 dimensions and a.c. spectrum for low disorder for $d\geq3$ dimensions remain an open problem.

The existence of a.c. spectrum has only been proved for the Anderson model on trees and tree-like graphs of infinite dimension.
The first proof was done by Klein for regular trees, also called Bethe lattices \cite{Kl3, Kl4, Kl6}. Klein also proved ballistic behavior for the wave spreading on such trees \cite{Kl5}.
Later, different proofs and extensions where given in \cite{ASW,FHS,FHS2,H,AW}. Froese, Hasler, Spitzer and Halasan used hyperbolic geometry and recursive relations of the Green's function in the upper half plane to obtain their results \cite{FHS,FHS2,H}. 
A similar approach is used by Keller, Lenz and Warzel to study the Anderson model on trees of finite cone type \cite{KLW,KLW2}.

A tree-strip where the tree is a regular tree (Bethe lattice) is also called a Bethe strip.
Froese, Hasler and Halasan generalized their method to obtain pure a.c. spectrum for an Anderson model on the Bethe strip of width 2 \cite{FHH}. More precisely, they considered the Anderson model on the product of the regular tree of degree 3 with the finite graph consisting of two vertices and one edge connecting them and proved pure a.c. spectrum in a specific interval. Then, Klein and Sadel extended this result to the Bethe strip with arbitrary degree and arbitrary width \cite{KS}. They used supersymmetric methods as in the original proof by Klein and they could also show ballistic behavior for the wave spreading \cite{KS2}.

All the mentioned results for the existence of a.c. spectrum are valid for small disorder, i.e. small variance of the random potential, and they all rely on some perturbation arguments.
One of the key ingredients in Klein's method is the use of the Implicit Function Theorem which demands to show that 0 is not in the spectrum of a certain Frechet derivative. 
In this work, we will see that this method also works for random Schr\"odinger operators on tree-strips of finite cone type. 
However, the spectrum of the Frechet derivative is given by rather technical expressions leading to a quite technical theorem as one needs to exclude the energies where 0 is in the spectrum of the Frechet derivative. Using some results from \cite{KLW,Kel} we can show that under certain conditions this only excludes a nowhere dense set of energies. For the special case of the (one-orbital) Anderson model on a tree of finite cone type, the result obtained in this article is weaker than the one in \cite{KLW2}. The random potentials treated in \cite{KLW2} are more general
and they only need to exclude finitely many energies for their perturbation arguments.
However, as one can see in \cite{FHH}, the hyperbolic geometry gets a lot more complicated when it is applied to strips. This is the main reason why only a Bethe strip of width 2 has been considered with this method so far.
The supersymmetric method on the contrary does not get more complicated, the width of the strip (number of orbitals) just appears as a parameter in the setting.
The new result in this article compared to the work of Keller, Lenz and Warzel \cite{Kel,KLW2} is the treatment of 'strips', i.e. multi-orbital random Schr\"odinger operators, and the new result compared to our old work \cite{KS} is the treatment of non regular trees of finite cone type.
This leads to some more technicalities in this paper compared to \cite{KS}.
For instance, in order to avoid some extra condition on the type of trees of finite cone type, we work in some slightly different Banach spaces,
$\Hh^{(0)}$ and $\Kk^{(0)}$ as defined in Section~\ref{sec-ban}. Another technical detail that will be dealt with is the fact that
the distribution of the Green's function at different vertices $x$ might be different as one deals with non-regular trees.
As a by product, this work contains the case of a rooted Bethe strip which was not considered in \cite{KS}.
(Note that the number of neighbors at the root for a rooted Bethe lattice is one less than for other vertices.)

\vspace{.2cm}

We will now describe a set of rooted trees of finite cone type that are associated to a substitution matrix $S$, like in \cite{Kel,KLW,KLW2}.
Let $S\in\ZZ_+^{s\times s}$ be an $s\times s$ matrix whose entries are non-negative integers.
Associated to $S$ are the following $s$ rooted trees.
Each vertex $x$ has a label $l(x)=p\in\{1,\ldots,s\}$, a vertex with label $p$ has exactly $S_{p,q}$ children (forward neighbors)
of label $q$, the total number of children is the row sum
$$
S_p:=\sum_{q=1}^s S_{p,q}\;.
$$
Hence, except for the root, a vertex of label $p$ has $S_p+1$ neighbors (one parent and $S_p$ children) and the root has
$S_q$ neighbors if it has label $q$.
Such an infinite tree is uniquely determined (up to tree isomorphisms) by $S$ and the label of the root.
We denote the tree where the root has label $q$ by $\IS^{(q)}$.
If one cuts the path going to the root at a vertex with label $p$, then one obtains the tree $\IS^{(p)}$, or in other words,
the cone of descendants at each vertex of label $p$ is isomorphic to $\IS^{(p)}$. Hence, there are only finitely many cones and therefore the tree
is said to be of finite cone type. Vice versa, each tree of finite cone type can be constructed this way.
Figure~\ref{tree} shows an example of a tree of finite cone type associated to $\left(\begin{smallmatrix} 2&1 \\2&2\end{smallmatrix}\right)$.

\newsavebox{\smlmat}
\savebox{\smlmat}{$\left(\begin{smallmatrix} 2&1 \\2&2\end{smallmatrix}\right)$}

\begin{figure}[ht]
\begin{center}
 \includegraphics[width=7cm]{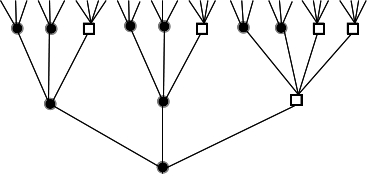}
\end{center}
\caption{Tree $\TT^{(1)}$ associated to substitution matrix 
\usebox{\smlmat}. Vertices of label 1 are filled circles and vertices of label 2 are non-filled squares.}
\label{tree}
\end{figure}

As one can think of a vertex of label $p$ as being substituted by
$S_p$ vertices in the next generation, where $S_{p,q}$ of them are of label $q$,
the matrix $S$ is called the 'substitution matrix'. Trees of finite cone type may also be called 'substitution trees' for this reason.
The special case of a rooted Bethe lattice with connectivity $K$ ($K$ children for each vertex) is given by the $1\times 1$ substitution matrix $S=K$. The regular tree of degree $K+1$ is given by $\IS^{(1)}$ 
using the substitution matrix $S=\left(\begin{smallmatrix} 0 & K+1 \\ 0 & K \end{smallmatrix}\right)$.

Instead of considering these trees individually, we will simply consider the forest $\IS=\bigcup_{q=1}^s \IS^{(q)}$ consisting of the $s$ connected trees $\IS^{(q)}$.

For $x,y\in\IS^{(q)}$ let $d(x,y)$ denote the distance, i.e. the length of the shortest path from $x$ to $y$.
If $x\in\IS^{(p)}$ and $y\in\IS^{(q)}$ with $p\neq q$, then let $d(x,y)=\infty$.
We will consider the self-adjoint operators
\begin{equation} \label{eq-H_lb}
(H_{\lb} u)(x)=\left(\sum_{y:d(x,y)=1} u(y)\right) + Au(x) + \lb V(x) u(x)
\end{equation}
on the Hilbert space of $\CC^m$ valued $\ell^2$ functions on $\IS$, $\ell^2(\IS,\CC^m)\ni u$.
This means $u(x)\in\CC^m$ and $\sum_{x\in\IS} \|u(x)\|^2 < \infty$.
The Hilbert space $\ell^2(\IS,\CC^m)$ is canonically equivalent to $\bigoplus_{q=1}^s \ell^2(\IS^{(q)},\CC^m) \cong \ell^2(\IS)\otimes \CC^m \cong
\ell^2(\IS\times\{1,\ldots,m\})$. The set of real symmetric $m\times m$ matrices will be denoted by $\Sym(m)$ and
$A\in \Sym(m)$ represents the 'free vertical operator'.
The matrices $V(x)\in\Sym(m)$ for $x\in\IS$ are independent identically distributed random variables, distributed according to the
probability measure $\nu$ on $\Sym(m)$.
$A$ and $V(x)$ describe the potential and the interactions between the $m$ orbitals.
Clearly, $H_\lb=\bigoplus_{q=1}^s H^{(q)}_{\lb}$ where $H^{(q)}_{\lb}$ is the restriction of $H_\lb$ to $\ell^2(\IS^{(q)},\CC^m)$ and can be seen
as  random Schr\"odinger operator on the tree strip $\IS^{(q)}\times \I$.

For $\lambda=0$ one has $H_0=\Delta \otimes \one + \one \otimes A$ on $\ell^2(\IS)\otimes\CC^m$ where $\Delta$
describes the adjacency operator on $\ell^2(\IS)$ given by
\begin{equation}
 (\Delta v)(x)=\sum_{y:d(x,y)=1} v(y)\;,\qquad v\in\ell^2(\IS)\;.
\end{equation}
Setting $A$ to be the adjacency matrix for a finite graph $\G$
and $\nu$ to be supported on the diagonal matrices, i.i.d. in each diagonal entry, we obtain 
the Anderson model on the product of the finite graph $\G$ with $\IS$. Setting $A=0$ and $\nu$ to be the
distribution as in the orthogonal ensemble (GOE), we obtain the Wegner $m$-orbital model on the forest $\IS$.

Let us remark that there is an orthogonal matrix $O\in{\rm O}(m)$ such that $O^\top A O$ is diagonal.
Then, using the equivalence $\ell^2(\IS,\CC^m)\cong\ell^2(\IS)\otimes\CC^m$, the operator $(\one\otimes O)$ is unitary and one obtains
\begin{equation}
 \left[ (\one\otimes O)^* H_\lambda (\one\otimes O) \right] u(x)=
\left(\sum_{y:d(x,y)=1} u(y)\right) + O^\top A O u(x) + \lambda O^\top V(x) O u(x)\;.\notag
\end{equation}
Hence, without loss of generality, we can assume that $A$ is a diagonal matrix and we will do so in the proofs.
In particular, the non-random operator $H_0$ is unitarily equivalent to a direct sum of shifted adjacency operators on $\TT$,
$H_0\cong \bigoplus_{j=1}^m \Delta+a_j$, where the $a_j$ are the eigenvalues of $A$.

Our interest lies in the spectral type of $H_\lb$ which is determined by the matrix-valued spectral measures at the 
vertices of the trees $\IS^{(q)}$. For $x\in\IS,\,j\in\{1,\ldots,m\}$
let $|x,j\rangle$ denote the
element in $\ell^2(\IS,\CC^m)$ satisfying $|x,j\rangle (y) = \delta_{x,y} e_j$ where $e_j$ is the $j$-th canonical basis vector in $\CC^m$.
Moreover, for an operator $H$ we denote by $\langle x,j| H |y,k\rangle$ the scalar product between
$|x,j\rangle$ and $H|y,k\rangle$ with the convention that the scalar product is linear in the second and anti-linear in the first component.
Then, for $x\in \IS$ we define the random, positive matrix valued measure $\mu_x$ on $\RR$ by
\begin{equation}
 \int f(E)\,d\mu_x = \left[ \,\langle x,j |f(H_\lambda)|x,k\rangle\, \right]_{j,k\in\I}\;
\end{equation}
for all compactly supported, continuous functions $f$ on $\RR$.

The roots of $\IS^{(q)}\subset\IS$ will have a special role, therefore let us denote them by $0_q\in\IS^{(q)}$.
Similarly to above, we let $|x\rangle$ denote the function $|x\rangle(y)=\delta_{x,y}$ on $\ell^2(\IS)$ and for
$\im(z)>0$ we define
\begin{equation} \label{eq-def-Gamma}
 \Gamma^{(q)}_z := \langle 0_q | (\Delta-z)^{-1} | 0_q \rangle.
\end{equation}
Furthermore, we define 
\begin{equation}
\Gamma^{(q)}_E:=\lim_{\eta\downarrow 0} \Gamma^{(q)}_{E+i\eta}
\end{equation}
for real energies $E$, if this limit exists as a limit in $\CC$ (if the limit approaches infinity it is considered as non-existent).
Let $\Delta_q$ denote the adjacency operator on $\TT^{(q)}$ which is the restriction of $\Delta$ to $\TT^{(q)}$ and
define the set $I_q \subset \sigma(\Delta_q)$ by 
\begin{equation} \label{eq-def-Iq}
 I_q:=\left\{E\in\RR\,:\, \Gamma^{(q)}_E=\lim_{\eta\downarrow 0} \Gamma^{(q)}_{E+i\eta}\;\; \text{exists in $\CC$ and}\;\;
\im(\Gamma^{(q)}_E)>0\;\right\}.
\end{equation}
Now let $a_1\leq a_2\leq\ldots\leq a_m$ be the eigenvalues of the free vertical operator $A$. 
Then, we define
\begin{equation} \label{eq-def-hat-I}
 I_S:=\bigcap_{q=1}^s I_q\;,\quad  I_{A,S}:= \bigcap_{j=1}^m (I_S+a_j) =\{E: E-a_j\in I_S \;,\;\forall\,j=1,\ldots,m\}\;.
\end{equation}
Note that $H_0\cong\bigoplus_{j=1}^m \Delta+a_j \cong \bigoplus_{j=1}^m \bigoplus_{q=1}^s \Delta_q + a_j$.
The set $I_{A,S}$ is the set of energies, such that for all $q$ and $a_j$ the limit 
$\lim_{\eta\downarrow 0} \langle 0_q|[(\Delta_q+a_j)-(E+i\eta)]^{-1} |0_q \rangle$ exists and the imaginary part of it is positive.
In particular, 
\begin{equation}
I_{A,S}\subset\bigcap_{j=1}^m \bigcap_{q=1}^s \sigma(\Delta_q+a_j) \subset \sigma(H_0) \;.
\end{equation}

Let further $\Delta(m,\ZZ_+)$ denote the set of upper triangular matrices with non-negative integer entries.
For $J\in\Delta(m,\ZZ_+)$ and $\im(z)>0$ as well as for $z=E\in  I_{A,S}$ we define
\begin{align}
\theta^{(q)}_{J,z} &:= \prod_{\substack{j,k\in\{1,\ldots,m\} \\j\leq k}} \left[\Gamma^{(q)}_{z-a_j} \Gamma^{(q)}_{z-a_k} \right]^{J_{jk}}\;\in\,\CC\qtx{and} \\
\blb_{J,z}&:=\diag(\theta^{(1)}_{J,z},\theta^{(2)}_{J,z},\ldots,\theta^{(s)}_{J,z})\;\in\;\CC^{s\times s}.
\label{eq-def-blb}
\end{align}
By $\diag(m_1,\ldots, m_s)$ we denote the diagonal matrix with entries $m_1,\ldots, m_s$ on the diagonal.
Furthermore, let us define
\begin{equation}
 |J|:=\sum_{j,k=1}^m J_{j,k}\,\in\ZZ_+\quad\text{for}\quad J\in\Delta(m,\ZZ_+)\;.
\end{equation}

The method of \cite{KS} will be applicable to the set of energies $E\in  I_{A,S}$ where
\begin{equation}\label{eq-cond-blb}
\det(\blb_{J,E}\blb^*_{J',E}S-\one)\neq 0\qtx{for all} J,J' \in \Delta(m,\ZZ_+) \qtx{with} |J|+|J'|\geq 1\;.
\end{equation}
Hence, let us define the set
\begin{equation} \label{eq-def-I_AS}
 \hat I_{A,S}:=  \bigcap_{\substack{J,J'\in\Delta(m,\ZZ_+)\\ |J|+|J'|\geq 1}}
 \big\{E\in I_{A,S}\,:\, \det(\blb_{J,E} \blb^*_{J',E} S-\one) \neq 0 \,\big\} \,\subset\,I_{A,S}.
\end{equation}
This means if $E\in \hat I_{A,S}$, then $1$ is not an eigenvalue of $\blb_{J,E} \blb^*_{J',E} S$ for $|J|+|J'|\geq1$.
The set $\hat I_{A,S}$ will be precisely the set where we will be able to use the Implicit Function Theorem, similar as in \cite{KS}.

\vspace{.2cm}

\begin{assumptions} 
The following assumptions on the distribution on the potential $V(x)$ and on the substitution matrix $S$ will play an important role.
\noindent \begin{enumerate}
\item[{\rm (V)}] We assume that all mixed finite moments of the random entries of the random matrix $V(x)$ exist.
This implies that partial derivatives of the Fourier transform
\begin{equation} \label{eq-def-h}
 h(M):=\E \left(e^{-i\Tr(MV(x))}\right)\,,\quad M\in\Sym(m)
\end{equation}
exist to any order and are bounded.
\item[{\rm (S1)}] $S_p=\sum_q S_{p,q} \geq 2$, for any $p\in\{1,\ldots,s\}$, i.e. each vertex has at least 2 children.
\item[{\rm (S2)}] For all $p,q\in\{1,\ldots,s\}$ there exists a natural number $n$ such that the matrix entry $(S^n)_{p,q}$ of $S^n$ is positive. This  means that for all $p,q\in\{1,\ldots,s\}$ the tree $\IS^{(p)}$ contains vertices labeled by $q$.
\item[{\rm (S3)}] $\|S\|<K^2$ for $K:=\min\big\{S_{q,q}\,:\, q\in\{1,\ldots,s\}\big\}$.
\end{enumerate}
\end{assumptions}

\begin{remark}
\noindent \begin{enumerate}[{\rm (i)}]
\item For $S\neq 0,\,S\in\ZZ_+^{s\times s}$, assumption {\rm (S3)} implies $S_{q,q}\geq K\geq 2$ since $\|S\|\geq 1$.
This in turn implies {\rm (S1)}.
\item Assumptions {\rm (V)} and {\rm (S1)} are important to be able to use the method from \cite{KS} for energies $E\in \hat I_{A,S}$
(cf. Theorem~\ref{main} below).
Assumptions {\rm (S2)} and {\rm (S3)} will assure that the set $\hat I_{A,S}$ is a dense open subset of the interior
$ \mathring{I}_{A,S}$ of $I_{A,S}$ and not empty for $\|A\|$ small enough (cf. Theorem~\ref{main1}~(i)). 
Therefore, Theorem~ \ref{main} is not an empty statement.
\end{enumerate}
\end{remark}

\begin{theorem} \label{main}
If assumptions {\rm (V)} and {\rm (S1)} are satisfied, then
there is an open neighborhood $U$ of $\{0\}\times \hat I_{A,S}$ in $\RR^2$ such that for $U_\lambda=\{E:(\lambda,E)\in U\}$ one has the following:
\begin{enumerate}[{\rm (i)}]
\item 
The spectrum of $H_\lambda$ is almost surely
purely absolutely continuous in $U_\lambda$.
\item For every $x\in\IS$ the average spectral measure $\E(\mu_{x})$ is absolutely continuous with respect to the Lebesgue measure in 
$U_\lambda$ and the density is a positive semi-definite matrix valued function which depends continuously on $(\lambda,E)\in U$.
Moreover, at the roots $0_q$, the density of $\E(\mu(0_q))$, $q=1,\ldots,s$, is a positive definite matrix valued function in $U_\lb$, showing that
there is spectrum in $U_\lambda$ with positive probability.
\end{enumerate}
\end{theorem}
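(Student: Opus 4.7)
The plan is to follow the supersymmetric fixed-point strategy of \cite{KS}, adapted to trees of finite cone type. The first step is to exploit the self-similarity of each $\IS^{(q)}$ and the independence of the $V(x)$ to obtain a self-consistent distributional equation for the matrix-valued forward Green's functions $G^{(q)}_z \in \CC^{m\times m}$ at a vertex of label $q$ (defined by restricting $H_\lb$ to the cone of descendants of that vertex),
\[
 (G^{(q)}_z)^{-1} \;=\; A + \lb V(0_q) - z\,\one \;-\;\sum_{q'=1}^{s}\sum_{\alpha=1}^{S_{q,q'}} \widetilde G^{(q',\alpha)}_z,
\]
with the $\widetilde G^{(q',\alpha)}_z$ independent copies of $G^{(q')}_z$. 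Passing to matrix-valued characteristic functions of the distributions of the $G^{(q)}_z$ turns this into a fixed-point equation
\[
 \BPhi \;=\; \Ff_{\lb,z}(\BPhi)
\]
for a tuple $\BPhi = (\Ph^{(1)},\ldots,\Ph^{(s)})$ in the Banach spaces $\Hh^{(0)}\oplus\Kk^{(0)}$ of Section~\ref{sec-ban}. Assumption~(V) gives the required smoothness of $\Ff_{\lb,z}$ in $\lb$ via smoothness of the Fourier transform $h$ in \eqref{eq-def-h}, while (S1) (each vertex has $\geq 2$ children) supplies the quadratic-decay structure on which $\Hh^{(0)},\Kk^{(0)}$ are modelled and makes $\Ff_{0,z}$ behave like a quadratic map on the sector-wise components.

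The second step is the Implicit Function Theorem at $\lb=0$. For $E\in\hat I_{A,S}\subset I_{A,S}$ the free limits $\Gamma^{(q)}_{E-a_j}$ exist with strictly positive imaginary parts, so they determine an explicit continuous solution $\BPhi^0_E$ of $\BPhi = \Ff_{0,E}(\BPhi)$ in the interior of the admissible domain of $\Hh^{(0)}\oplus\Kk^{(0)}$. I would then compute $D_\BPhi \Ff_{0,E}(\BPhi^0_E)$ and show that $\one - D_\BPhi\Ff_{0,E}(\BPhi^0_E)$ block-diagonalises along monomials indexed by $(J,J')\in\Delta(m,\ZZ_+)^2$, acting on the $(J,J')$-sector with $|J|+|J'|\geq 1$ as $\one - \blb_{J,E}\blb^*_{J',E}S$. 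Invertibility on every such sector is exactly condition~\eqref{eq-cond-blb}, i.e.\ the definition of $\hat I_{A,S}$ in \eqref{eq-def-I_AS}; summing the sector-wise norm estimates in the weighted spaces $\Hh^{(0)},\Kk^{(0)}$ produces a bounded two-sided inverse. The IFT then yields a smooth family $\BPhi^\lb_E$ of solutions on an open neighborhood $U$ of $\{0\}\times\hat I_{A,S}$, extending continuously down to $\im z = 0^+$.

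The third step converts smoothness of $\BPhi^\lb_E$ at real energies into the two spectral statements. Fourier inversion of the averaged $\E(\Ph^{(q)})$ recovers the distribution of $G^{(q)}_z$ with continuous positive semi-definite density in $E$, and in particular bounds $\E\!\left(\im\langle x,j|(H_\lb - E - i\eta)^{-1}|x,j\rangle\right)$ uniformly in $\eta>0$ on $U_\lb$; this gives the continuous matrix-valued density of $\E(\mu_x)$ asserted in part~(ii). At the roots $0_q$, the Schur-complement identity for $G(0_q,0_q;z)$ combined with $\im\Ph^{(q)}(E)>0$ forces that density to be positive definite, as claimed. Part~(i) then follows from the standard deterministic criterion used in \cite{KS}: finiteness of $\E(\im\langle x,j|(H_\lb-E-i0)^{-1}|x,j\rangle)$ on an open set $U_\lb$ implies, by Fatou and Fubini, that almost surely the spectrum of $H_\lb$ in $U_\lb$ restricted to the cyclic subspace generated by $|x,j\rangle$ is purely absolutely continuous; taking unions over a countable dense family $(x,j)$ yields the full a.c.\ statement on $U_\lb$.

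The main obstacle is the second step: choosing the norms on $\Hh^{(0)}$ and $\Kk^{(0)}$ tight enough that $\Ff_{\lb,z}$ is analytic in $\lb$ near $\BPhi^0_E$ and $D_\BPhi\Ff_{0,E}$ is bounded, yet loose enough that the sector-wise inverses of $\one - \blb_{J,E}\blb^*_{J',E}S$ sum to a bounded inverse uniformly in $(J,J')$. The non-regularity of the trees forces $\BPhi$ to carry a label index $q$ and makes the linearisation mix the $s$ components through the substitution matrix $S$---this is the origin of the factor $S$ inside \eqref{eq-cond-blb} and the reason why a direct-sum Banach space is needed in place of the single space sufficient for the Bethe strip in \cite{KS}. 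Moreover, since the laws of the $G^{(q)}_z$ differ across $q$, sector-wise bounds must be compatible across the $s$ factors. Uniformity of the neighborhood $U$ across $\hat I_{A,S}$ will follow from continuity of $D_\BPhi\Ff_{0,E}$ in $E$ on $I_{A,S}$, itself a consequence of the continuity of $E\mapsto \Gamma^{(q)}_E$ on $I_{A,S}$ guaranteed by its definition in \eqref{eq-def-Iq}.
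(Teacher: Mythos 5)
Your overall strategy---a supersymmetric fixed-point equation in the spaces $\Hh^{(0)}\oplus\Kk^{(0)}$, linearisation at $\lb=0$ block-diagonalising along monomial sectors $(J,J')$ with diagonal blocks $\blb_{J,E}\blb^*_{J',E}S$, and invertibility of $\one - D_\BPhi\Ff_{0,E}$ supplied exactly by the defining condition \eqref{eq-cond-blb} of $\hat I_{A,S}$---is essentially the route the paper takes (Propositions~\ref{zeta}, \ref{xita}, Lemmas~\ref{lem-DF}, \ref{lem-DQ}, Proposition~\ref{xize}). The derivation of part~(ii) from continuity of $\E(G^{[x]})$ as the Stieltjes transform of $\E(\mu_x)$ is also right.

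However, there is a genuine gap in your deduction of part~(i). You write that \emph{finiteness of $\E(\im\langle x,j|(H_\lb-E-i0)^{-1}|x,j\rangle)$ implies, by Fatou and Fubini, almost surely pure absolute continuity}. This is false: control of the \emph{first} moment $\E(\im G)$ only gives absolute continuity of the \emph{averaged} spectral measure $\E(\mu_x)$---that is part~(ii), not part~(i). Singular measures can average to an absolutely continuous one, so no amount of Fatou/Fubini on $\E(\im G)$ alone rules out a.s.\ singular spectrum. The deterministic criterion actually invoked in \cite{KS} (namely \cite[Theorem~4.1]{Kl6} or \cite[Theorem~2.6]{Kel}) is an $L^2$-Stieltjes-transform criterion: if
\begin{equation*}
\liminf_{\eta\downarrow 0}\int_a^b \Tr\left(\left|G^{[x]}_\lb(E+i\eta)\right|^2\right)\,dE < \infty ,
\end{equation*}
then $\mu_x$ restricted to $(a,b)$ is a.c.\ with density in $L^2$. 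To get this to hold almost surely one needs the \emph{second-moment} bound $\E\bigl(|G^{[x]}(E+i\eta)|^2\bigr)$ to extend continuously to $\eta=0$, which is precisely what Theorem~\ref{main2} provides via the $\xi$-functions and the $\Kk^{(0)}$-valued fixed-point equation. You do carry the $\Kk^{(0)}$ component through the IFT step, so the needed estimate is available to you; the error is only that your final paragraph discards it and relies on the first moment, which is insufficient. Replace that sentence with the Fubini--Fatou argument applied to $\E(|G^{[x]}|^2)$ and then the $L^2$ criterion, and the proof closes correctly, matching the paper's final argument at the end of Section~\ref{sec-proofs}.

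One smaller imprecision: the $\Hh^{(0)}$ sector is indexed only by $J$, not by $(J,J')$, with diagonal blocks $\blb_{J,E}S$; this corresponds to $J'=\mathbf{0}$ in your notation and is subsumed in the definition of $\hat I_{A,S}$, but it should be stated that the two components $\Hh^{(0)}$ and $\Kk^{(0)}$ yield two separate operators $C_E$ and $\Cc_E$ with spectra $\bigcup_{|J|\geq 1}\sigma(\blb_{J,E}S)\cup\{0\}$ and $\bigcup_{|J|+|J'|\geq 1}\sigma(\blb_{J,E}\blb^*_{J',E}S)\cup\{0\}$, respectively, as in \eqref{eq-lbj} and \eqref{eq-spectrum-Cc_E}.
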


In order to show that this is not an empty statement in the sense that the set $\hat I_{A,S}$ is not always empty, 
we will also show the following.

\begin{theorem} \label{main1}
Let the substitution matrix $S$ satisfy {\rm (S2)} and {\rm (S3)}. Then, the following hold:
\begin{enumerate}[{\rm (i)}]
\item The interior of the set $I_S$ is not empty and consists of finitely many intervals. 
In particular, letting $a(S)$ denote the length of the longest of these intervals in $I_S$ and letting $a_{{\rm max}}$ and $a_{{\rm min}}$
be the largest and smallest eigenvalue of $A$, one obtains:
If $a_{{\rm max}}-a_{{\rm min}} < a(S)$, then the interior $\mathring{I}_{A,S}$ of the set $ I_{A,S}$ 
is not empty and consists of finitely many intervals.
\item $\hat I_{A,S}$ is a dense open set in $\mathring{I}_{A,S}$, i.e. the closure of
$\hat I_{A,S}$ contains $\mathring{I}_{A,S}$.
Consequently, if $a_{{\rm max}}-a_{{\rm min}} < a(S)$ and $\lambda$ is small enough, then the set 
$U_\lambda$ as in Theorem~\ref{main} contains some intervals and the theorem is not an empty statement.
\item For a natural number $b\in\NN$ the matrix $bS$ also satisfies
{\rm (S2)} and {\rm (S3)} and one obtains $I_{bS}=\sqrt{b}\,I_S$ and hence $a(bS)=\sqrt{b} \,a(S)$.
In particular, for fixed $A$ and $b$ large enough one has $a_{{\rm max}}-a_{{\rm  min}}<a(bS)$ and $\hat I_{A,bS}$ is not empty.
\end{enumerate}
\end{theorem}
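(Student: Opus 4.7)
The plan is to handle the three parts in order: (i) and (iii) follow quickly from existing spectral analysis of adjacency operators on trees of finite cone type combined with a scaling identity, while (ii) is where the main work lies.

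For (i), the plan is to invoke the results of \cite{Kel, KLW}. Under (S2) and (S3), the Green's function $\Gamma^{(q)}_z$ satisfies a polynomial self-consistent equation, and its Herglotz solution on the upper half plane extends algebraically, in particular real-analytically, to the interior $\mathring I_q$ of a finite union of intervals, with $\im \Gamma^{(q)}_E > 0$ there. Hence $\mathring I_S = \bigcap_q \mathring I_q$ is itself a finite union of open intervals, and non-emptiness follows from (S2), which couples the recursions sufficiently. The shifted-sum claim is elementary: if $(\alpha, \alpha + a(S)) \subset \mathring I_S$ is the longest constituent interval, then
$$
\bigcap_{j=1}^m (\alpha + a_j,\, \alpha + a_j + a(S)) = (\alpha + a_{\max},\, \alpha + a_{\min} + a(S)),
$$
which is nonempty precisely when $a_{\max} - a_{\min} < a(S)$.

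For (iii), the plan is to substitute a scaling ansatz into the self-consistent equation for the forward Green's function on $\IS^{(q)}$,
$$
\gamma^{(q)}(z) = \Big(-z - \sum_{p=1}^s S_{q,p}\, \gamma^{(p)}(z)\Big)^{-1}.
$$
Inserting $\gamma'^{(q)}(z) = b^{-1/2}\, \gamma^{(q)}(z/\sqrt{b})$ into the analogous recursion for $bS$ and pulling $\sqrt{b}$ out of the sum reduces the equation for $bS$ to the one for $S$ at argument $z/\sqrt{b}$; uniqueness of the Herglotz solution then forces the ansatz. The same relation propagates to $\Gamma'^{(q)}(z) = b^{-1/2}\, \Gamma^{(q)}(z/\sqrt{b})$, so $I_{bS} = \sqrt{b}\, I_S$ and $a(bS) = \sqrt{b}\, a(S)$. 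Preservation of (S2) is immediate, and (S3) is preserved since $\|bS\| = b\|S\| < bK^2 \le (bK)^2$ for $b \ge 1$, with $bK = \min_q (bS)_{q,q}$. The remaining numerical statement is then obvious.

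Part (ii) is the crux. Writing $Z_{J,J'} := \{E \in \mathring I_{A,S} : \det(\blb_{J,E}\blb^*_{J',E}S - \one) = 0\}$, the set $\hat I_{A,S}$ is the complement in $\mathring I_{A,S}$ of $\bigcup_{|J|+|J'|\ge 1} Z_{J,J'}$. The plan is a two-range argument. First, using the bound $|\Gamma^{(q)}_E| \le K^{-1/2}$ valid on $I_q$ (a consequence of the self-consistent equation and Herglotz monotonicity, from \cite{KLW, Kel}), one obtains $\|\blb_{J,E}\| \le K^{-|J|}$ and hence $\|\blb_{J,E}\blb^*_{J',E}S\| \le \|S\|\,K^{-(|J|+|J'|)}$, which is strictly less than $1$ for $|J|+|J'| \ge 2$ by (S3); so $Z_{J,J'} = \emptyset$ in that range. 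Only the finitely many pairs with $|J|+|J'| = 1$ remain (up to transposition, $J = 0$ and $J'$ is a standard matrix unit). For each such pair the map $E \mapsto \det(\blb^*_{J',E}S - \one)$ is real-analytic on each connected component of $\mathring I_{A,S}$ (as a polynomial in the algebraic functions $\Gamma^{(q)}_E$ and their complex conjugates), so on each component it is either identically zero or has only isolated zeros. The hard part, and the main obstacle, will be ruling out the identically-zero alternative on each component. The plan is to exploit the square-root branching of $\Gamma^{(q)}_E$ at the spectral edges of $\mathring I_q$: at an edge $E_*$, $\Gamma^{(q)}_{E_* + h} = \Gamma^{(q)}_{E_*} + c_q \sqrt{h} + O(h)$ with $\im c_q \ne 0$, and plugging this expansion into the determinant yields a non-vanishing leading correction of order $\sqrt h$, incompatible with identical vanishing. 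Hence each $Z_{J,J'}$ is discrete, and $\hat I_{A,S}$ is open and dense in $\mathring I_{A,S}$. The concluding sentence about $U_\lambda$ is then immediate: the neighborhood $U$ from Theorem~\ref{main} is open in $\RR^2$ and contains $\{0\} \times \hat I_{A,S}$, so a neighborhood of $\{0\} \times I$ sits in $U$ for every open interval $I \subset \hat I_{A,S}$, giving $U_\lambda$ nonempty with interval structure for small $\lambda$.
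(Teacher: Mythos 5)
Parts (i) and (iii) essentially follow the paper's route: part (i) relies on the structure theory from \cite{Kel,KLW} (the paper's Lemma~\ref{lem-1}), and part (iii) uses the same scaling ansatz plugged into the self-consistent equation \eqref{eq-gm-S}, \eqref{eq-gm-bS}. Both are correct. The splitting of $|J|+|J'|\geq 2$ from $|J|+|J'|=1$ in part (ii), using $|\Gamma^{(q)}_E|\leq K^{-1/2}$ and $\|S\|<K^2$, also matches the paper.

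The gap is in your treatment of the $|J|+|J'|=1$ case. You claim $E\mapsto\Gamma^{(q)}_E$, and hence $E\mapsto\det(\blb^*_{J',E}S-\one)$, is real-analytic on each connected component of $\mathring{I}_{A,S}$. That is not established and is likely false in general. Although $\Gamma^{(q)}_z$ satisfies an algebraic system, the analytic Implicit Function Theorem applies only where the Jacobian $\Gamma_z^{-1}(\Gamma_z^2S-\one)$ is invertible; the paper's own remark shows this holds only on a dense open \emph{subset} of $\Sigma_S$, not on all of $\mathring{I}_S$ — which is precisely why the paper only concludes that $\mathring{I}_{A,S}\setminus\hat I_{A,S}$ is nowhere dense rather than discrete, and explicitly warns it could be a Cantor set. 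Consequently the dichotomy ``identically zero or isolated zeros on each component'' is unavailable, and the square-root expansion at spectral edges you propose to rule out identical vanishing is both unjustified (edge behavior for trees of finite cone type need not be of square-root type) and beside the point. The paper instead exploits that $f_J(z)=\det(\blb_{J,z}S-\one)$ is \emph{holomorphic} in the upper half plane and extends continuously to the real set $\Sigma_{A,S}$ (where $\mathring{I}_{A,S}\setminus\Sigma_{A,S}$ is finite); if $f_J$ vanished on an open real subinterval it would be real there, hence extend by Schwarz reflection, hence vanish identically in $\HH$, contradicting $\lim_{\eta\to\infty}f_J(E+i\eta)=(-1)^s$. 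That reflection argument is what you need in place of the analyticity claim; the anti-holomorphic factor $\det(\blb^*_{J',z}S-\one)$ is handled symmetrically. Note also that this Schwarz-reflection route is exactly why assumption {\rm (S3)} (rather than the weaker {\rm (S3')}) is imposed: it reduces $|J|+|J'|=1$ to pure $J$ or pure $J'$, giving holomorphic or anti-holomorphic determinants rather than the mixed $\blb_{J,z}\blb_{J',z}^*S$ which is neither.
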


\begin{remark}
\noindent \begin{enumerate}[{\rm (i)}]
\item The meaning of assumption {\rm (S1)} is that it forbids any kind of line segments.
In particular, non of the trees $\TT^{(p)}$ can be isomorphic to the lattice of positive integers $\ZZ_+$.
But it also forbids the trees where the usual Anderson model (not strips) was treated in
the PhD thesis by Halasan \cite{H}, such as the Fibonacci trees associated to the substitution matrix
$S=\left( \begin{smallmatrix} 0 & 1 \\ 1 & 1 
\end{smallmatrix} \right)$.
If we denote by $\#_n(\TT^{(p)})$ the total number of vertices in the $n$-th generation (with the root being the 1st generation), 
then the Fibonacci trees satisfy 
$\#_n(\TT^{(1)})=f_n, \#_n(\TT^{(2)})=f_{n+1}$ where $(f_n)_n$ is the Fibonacci sequence starting with $f_1=f_2=1$. 

With some technical adjustments one can treat the Fibonacci tree-strip as well.
However, these adjustments are quite different from the ones needed in this paper. The Fibonacci tree-strip is a very special case 
where {\rm (S1)} is not satisfied and it will be dealt with elsewhere.

The real necessary assumption should be that no tree $\TT^{(p)}$ is isomorphic to $\ZZ_+$. This case needs to be excluded as the Anderson model on $\ZZ_+$ leads to localization even for small disorder.

\item 
One way to satisfy assumption {\rm (S2)} is to make all entries in the secondary diagonals  (above and below the diagonal) positive. 
As the Hilbert-Schmidt norm is an upper bound for the usual matrix norm, {\rm (S3)}
is satisfied if $\sum_{p,q} (S_{p,q})^2 < K^4$. 
These facts lead to a class of substitution matrices satisfying {\rm (S2)} and {\rm (S3)},
such as e.g. $\left(\begin{smallmatrix} 2 & 1 \\ 2 & 2 \end{smallmatrix} \right)$ or
$\left(\begin{smallmatrix} 4 & 1 & 0\\ 1 & 3 & 2 \\ 0 & 3 & 3 \end{smallmatrix} \right)$.
The tree $\TT^{(1)}$ associated to 
$S=\left(\begin{smallmatrix} 2 & 1 \\ 2 & 2 \end{smallmatrix} \right)$  is given in figure~\ref{tree}.

\item I think that one can improve Theorem~\ref{main1}~(ii) and conjecture that the set $I_{A,S}\setminus \hat I_{A,S}$ is always finite.
In fact, for rooted regular trees (Bethe lattices) of degree $K\geq 2$, the set $I_{A,S}\setminus\hat I_{A,S}$ is actually empty
(cf. \cite{KS}).
The statement that $\hat I_{A,S}$ is a dense open subset of $\mathring {I}_{A,S}$ is quite a lot weaker. For instance, 
$ \mathring{I}_{A,S} \setminus \hat I_{A,S}$ might be a cantor set consisting of uncountably many points.

 \item Part (i) and (iii) of Theorem~\ref{main1} are still valid if one replaces assumption {\rm (S3)} by the weaker one:\\
{\rm (S3')}\;\; {\it $S_{q,q}\geq 1$ for all $q\in\{1,\ldots,s\}$}\,.\\
In fact, part (i) is already proved in \cite{Kel,KLW2} and the assumptions 
{\rm (S1)}, {\rm (S2)} and {\rm (S3')} together are equivalent to assumptions {\rm (M0), (M1)} and {\rm (M2)} in \cite{Kel,KLW2}.
In view of my conjecture above, I also expect part (ii) to  be true in this case.
An example of a substitution matrix satisfying {\rm (S1)}, {\rm (S2)} and {\rm (S3')} but not {\rm (S3)} is $S=\left(\begin{smallmatrix}  1 & 1 \\ 1 & 2
\end{smallmatrix} \right)$. The trees associated to this matrix can be obtained from the Fibonacci-trees by removing every 2nd generation of vertices. The number of vertices in the $n$-th generations are $\#_n(\TT^{(1)})=f_{2n-1}$ and $\#_n(\TT^{(2)})=f_{2n}$, where $f_n$ is the $n$-th Fibonacci number.

The substitution matrix for the Fibonacci-tree as given above
is an interesting example satisfying {\rm (S2)} but not {\rm (S3')} and also not {\rm (S1)}.

\item Part (iii) of Theorem~\ref{main1} is particularly interesting for the Anderson model on a product graph where $A$ is fixed to be the adjacency matrix of a  finite graph $\G$. It shows that there are substitution matrices $S$ (and corresponding trees) 
such that the set $\hat I_{A,S}$ is not empty.
\item  I expect the set $\hat I_{A,S}$ to be non empty in many more cases than the once covered by Theorem~\ref{main1}. 
However, since one can not obtain explicit formulas for the Greens functions $\Gamma_z^{(q)}$ as defined in 
\eqref{eq-def-Gamma} in general, it is not so simple to show that the set $\hat I_{A,S}$ is in fact not empty.
\end{enumerate}
\end{remark}

The important objects we work with are the matrix Green's functions given by
\begin{equation}\label{eq-def-G^x}
 G_\lambda^{[x]}(z):=\left[\, \langle x,j|(H-z)^{-1} | x,k\rangle\,\right]_{j,k\in\I}\,\in\,{\CC^{m\times m}}
\end{equation}
for $\im(z)>0$.
The most important ingredient to obtain Theorem~\ref{main} is the following.

\begin{theorem}\label{main2}
 Under assumptions {\rm (V)} and {\rm (S1)} there exists an open neighborhood $U$ of $\{0\}\times \hat I_{A,S}$ in $\RR^2$ 
such that for all vertices $x\in\IS$ the functions
\begin{align} 
(\lambda,E,\eta) &\mapsto \E\left(G_\lambda^{[x]}(E+i\eta)\right)\;, \notag \\
(\lambda,E,\eta) &\mapsto \E\left(\left|G_\lambda^{[x]}(E+i\eta)\right|^2\right)\;, \notag
\end{align}
defined for $\eta>0$, have continuous extensions to $U\times [0,\infty)$.
\end{theorem}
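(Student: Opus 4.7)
The plan is to adapt the characteristic-function / supersymmetric fixed-point argument of \cite{KS} to non-regular trees of finite cone type. For a vertex $x\in\IS$ of label $p=l(x)$, let $\Xi^x_\lambda(z)\in\CC^{m\times m}$ denote the \emph{forward} matrix Green's function, i.e.\ the $(x,x)$-block of the resolvent of the restriction of $H_\lambda$ to the cone of descendants of $x$. The Schur complement identity at $x$ yields the recursion
\beq
\Xi^x_\lambda(z) \,=\, \bigl(A+\lb V(x)-z\one - \sum_{y\text{ child of }x}\Xi^y_\lambda(z)\bigr)^{-1},
\eeq
and since the cone at any child $y$ of label $q$ is isomorphic to $\IS^{(q)}$, the law of $\Xi^x_\lambda(z)$ depends on $x$ only through $p$. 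I would encode this common law by its matrix characteristic function $\Ph^{(p)}_{\lambda,z}:\Sym(m)\to\CC$, which is the object used in \cite{KS}; the recursion then becomes a system of nonlinear fixed-point equations $\Ph^{(p)}_{\lambda,z}=T^{(p)}_{\lambda,z}(\Ph^{(1)}_{\lambda,z},\ldots,\Ph^{(s)}_{\lambda,z})$ for $p=1,\ldots,s$, where the combinatorial structure of $T^{(p)}_{\lambda,z}$ encodes the row $(S_{p,1},\ldots,S_{p,s})$ of the substitution matrix and the disorder enters through convolution against the Fourier transform $h$ from \eqref{eq-def-h}.

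At $\lambda=0$ the system decouples into scalar equations whose unique solution $\Ph_0(z)$ can be written explicitly in terms of the scalar Green's functions $\Gamma^{(p)}_{z-a_j}$ from \eqref{eq-def-Gamma}; by the very definition of $I_{A,S}$ in \eqref{eq-def-hat-I}, this $\Ph_0$ extends continuously to $z=E+i0$ for every $E\in I_{A,S}$. I would then apply the Implicit Function Theorem to the map $(\lambda,z,\Ph)\mapsto T_{\lambda,z}(\Ph)-\Ph$ at the reference point $(0,E,\Ph_0)$. A computation in the spirit of \cite{KS}, expanding in the monomial basis of polynomial symbols on $\Sym(m)^s$, identifies the Fr\'echet derivative $D_\Ph T_{0,E}|_{\Ph_0}$ as a block-diagonal operator indexed by pairs $(J,J')\in\Delta(m,\ZZ_+)^2$, with the $(J,J')$-block equal to multiplication by the $s\times s$ matrix $\blb_{J,E}\,\blb^*_{J',E}\,S$ from \eqref{eq-def-blb}. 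Hence $I-D_\Ph T_{0,E}|_{\Ph_0}$ is invertible precisely when $E\in\hat I_{A,S}$, and the IFT produces a unique continuous solution $\Ph_{\lambda,E+i\eta}$ on an open neighborhood $U$ of $\{0\}\times\hat I_{A,S}$ (intersected with $\{\eta\geq0\}$).

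The main technical obstacle is the choice of Banach spaces. The regular-tree setting of \cite{KS} could use a single scalar space, but here the reference solutions $\Ph_0^{(p)}$ differ across labels, forcing one to work with the slightly modified function spaces $\Hh^{(0)}$ and $\Kk^{(0)}$ announced in the introduction. One then has to verify (a) that $\Ph_0\in\Hh^{(0)}$ and extends continuously up to $\eta=0$ throughout $I_{A,S}$; (b) that $T_{\lambda,z}$ is jointly continuous in $(\lambda,z)$ and $C^1$ in $\Ph$ on these spaces, for which assumption (V) is crucial since it guarantees the required smoothness and decay of $h$; and (c) that the integral operators defining $T_{\lambda,z}$ are bounded on these spaces, which uses assumption (S1): having at least two children per vertex provides the decay needed for the convolution kernel, while a single child would degenerate into the $\ZZ_+$ case where the method fails.

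Once the fixed point $\Ph_{\lambda,z}$ is available and continuous on $U\times\{\im z\geq0\}$, the averaged full Green's function $\E(G_\lambda^{[x]}(z))$ and its second moment $\E(|G_\lambda^{[x]}(z)|^2)$ at any vertex $x\in\IS$ are obtained by a final Schur complement expressing the full resolvent at $x$ in terms of the forward Green's functions at all neighbors of $x$ (parent and children), followed by integration of appropriate polynomial observables against $\Ph_{\lambda,z}$. Since $\Ph_{\lambda,z}$ encodes all joint moments of the forward Green's functions and depends continuously on $(\lambda,E,\eta)\in U\times[0,\infty)$, both averaged quantities inherit the desired continuous extension, completing the proof of Theorem~\ref{main2}.
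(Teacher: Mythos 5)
Your overall strategy---encode the law of the forward matrix Green's functions by characteristic-function-type objects, derive a system of fixed-point equations indexed by the label $p$, compute the Fr\'echet derivative at the free point in a polynomial basis indexed by $J\in\Delta(m,\ZZ_+)$, and invoke the Implicit Function Theorem for $E\in\hat I_{A,S}$---is indeed the route the paper takes (via the supersymmetric functions $\vze_{\lb,z}$ and $\vxi_{\lb,z}$, Propositions~\ref{zeta}, \ref{xita}, Lemmas~\ref{lem-DF}, \ref{lem-DQ} and Proposition~\ref{xize}). However, there are two genuine gaps in the proposal.

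First, your identification of the Fr\'echet derivative spectrum is incomplete in a way that makes your invertibility claim false as stated. If one works on the full spaces $\Hh^s$, $\Kk^s$, the Fr\'echet derivative restricted to the degree-$u$ filtration is block \emph{upper triangular}, not block diagonal, with diagonal blocks $\blb_{J,E}\blb_{J',E}^*S$ including the block $(J,J')=(0,0)$, whose block is $S$ itself. Since $\hat I_{A,S}$ only excludes $1$ from the spectrum of $\blb_{J,E}\blb_{J',E}^*S$ for $|J|+|J'|\geq 1$, the IFT hypothesis $1\notin\sigma$ of the Fr\'echet derivative fails whenever $\det(S-\one)=0$ (e.g.\ $S=\left(\begin{smallmatrix}4&3\\2&3\end{smallmatrix}\right)$), even though $E\in\hat I_{A,S}$. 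The paper's actual reason for working in the codimension-$1$ subspaces $\Hh^{(0)}$, $\Kk^{(0)}$ is precisely to delete the constant block $\VV_{\mathbf 0}$, $\WW_{\mathbf 0,\mathbf 0}$ from the filtration and thereby remove $\sigma(S)$ from the spectrum of the Fr\'echet derivative (cf.\ eq.~\eqref{eq-lbj}, \eqref{eq-spectrum-Cc_E} and the remark after Proposition~\ref{xize}). Your stated reason---that the reference solutions $\Ph_0^{(p)}$ differ across labels---is not the issue (that is handled simply by taking the $s$-fold product $\Hh^s$), so as written your argument would silently require the unwanted extra hypothesis $\det(S-\one)\neq 0$.

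Second, the proposal jumps from the fixed point at the roots to continuity of $\E(G^{[x]})$ and $\E(|G^{[x]}|^2)$ at an \emph{arbitrary} vertex $x$. At a non-root $x$ the Schur-complement formula for $G^{[x]}$ involves the ``backward'' Green's function $G^{(x'|x)}$ through the parent $x'$, whose underlying graph $\IS^{(x'|x)}$ is a truncated tree that is not isomorphic to any $\IS^{(q)}$, hence its law is not one of the fixed-point components. The paper closes this gap with an induction along the path from the root (Corollary~\ref{cor-xize}), showing $\ze^{(x'|x)}_{\lb,z}\in\Hh$ with continuous extension, and then pushes through the representation formulas \eqref{eq-EG-x-D}, \eqref{eq-EGG-x-D}; note also that (S1) is used again here to make sure the product of the factors over neighbors has at least two terms so that Lemma~\ref{lemma-prod} applies. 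Your phrase ``integration of appropriate polynomial observables against $\Ph_{\lambda,z}$'' glosses over exactly this point. Relatedly, a single matrix characteristic function $\Ph^{(p)}_{\lambda,z}:\Sym(m)\to\CC$ cannot encode the joint law of $G$ and $\bar G$ needed for $\E(|G|^2)$; the paper runs the whole argument a second time for the two-argument object $\xi^{(q)}_{\lb,z}(\bvp_+^\ot,\bvp_-^\ot)$ in the tensor spaces $\Kk^{(0)}$, $\Kk^{(0)}_\infty$, and that is precisely where the $\blb_{J,E}\blb_{J',E}^*S$ blocks (rather than just $\blb_{J,E}S$) arise.
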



\vspace{.2cm}

We will first prove Theorem~\ref{main1} in Section~\ref{sec-proofmain1}. 
Then, in Section~\ref{sec-ban}, we introduce the important Banach spaces that were also used in \cite{KS}.
Appendix~\ref{sec-super} will give the super-symmetric formalism that leads to these spaces.
In Section~\ref{sec-fxp} we derive some fixed point equations in these Banach spaces. Next,  we calculate the Frechet derivative of the operators appearing in these fixed point equations in Section~\ref{sec-frechet}.
Finally, in Section~\ref{sec-proofs} we use the Implicit Function Theorem to obtain Theorem~\ref{main2} and Theorem~\ref{main}.

\vspace{.2cm}

\noindent {\bf Acknowledgement.} I am thankful to Matthias Keller for interesting discussions.


\section{Proof of Theorem~\ref{main1}} \label{sec-proofmain1}


The following observations are important.
As in \cite{KLW,Kel} define $\Oo$ to be the system of all open sets $O$ in $\RR$, such that 
all the Green's functions $z\mapsto \Gamma^{(q)}_z$, 
defined on the upper half plane $\HH=\{z\in\CC:\im(z)>0\}$, extend continuously to $O\cup\HH$
with $\im(\Gamma^{(q)}_E)>0$ for $E\in O$.
Then let
\begin{equation}\label{eq-def-Sigma}
\Sigma_S:=\bigcup_{O\in\Oo} O\;.
\end{equation}
Clearly, $\Sigma_S$ is the largest set in $\Oo$ and the Green's functions as mentioned above extend continuously to $\Sigma_S \cup \HH$. Moreover, $\Sigma_S\subset I_q$, by definition of $I_q$.
The following lemma is a consequence of Theorem~6 in \cite{KLW}.
\begin{lemma}\label{lem-1}
Let $S$ satisfy {\rm (S2)} and {\rm (S3)}. Then, the following holds:
\begin{enumerate}[{\rm (i)}]
\item $\Sigma_S\subset I_S = \bigcap_{q=1}^s I_q$ 
\item $\Sigma_S$ consists of finitely many intervals.
\item The closure of $\Sigma_S$ is equal to the spectrum of the adjacency operator $\Delta$ on $\IS$, i.e.
$\overline \Sigma_S  =\sigma(\Delta)\supset \bigcup_{q=1}^s I_q$.
\item $\sigma(\Delta)\setminus \Sigma_S$ is finite, and hence $I_q \setminus \Sigma_S$ is finite for every 
label $q\in\{1,\ldots,s\}$.
\end{enumerate}
\end{lemma}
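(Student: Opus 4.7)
The plan is to deduce all four statements from Theorem~6 of \cite{KLW}. As noted in the remarks, our assumptions {\rm (S2)} and {\rm (S3)} (together with the consequence {\rm (S1)}) imply the assumptions {\rm (M0)--(M2)} of \cite{Kel,KLW}, under which that theorem describes the spectrum of $\Delta$ on $\IS$ quite precisely: $\sigma(\Delta)$ is a finite union of closed intervals, and there is a finite exceptional set $F\subset\RR$ such that every Green's function $\Gamma^{(q)}_z$ extends continuously from $\HH$ to $\HH\cup U$, with $\im\Gamma^{(q)}_E>0$ on $U$, where $U$ denotes the interior of $\sigma(\Delta)$ with $F$ removed.

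Granting this, part (i) is essentially tautological: for any $E\in\Sigma_S$ there is some $O\in\Oo$ containing $E$ on which each $\Gamma^{(q)}$ extends continuously with positive imaginary part; hence the limit $\Gamma^{(q)}_E$ exists in $\CC$ with $\im\Gamma^{(q)}_E>0$ for every $q$, so $E\in I_q$ for all $q$ and thus $E\in I_S$.

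For part (iii) I would argue as follows. By construction $U\in\Oo$, so maximality of $\Sigma_S$ gives $U\subset\Sigma_S$. Since $U$ is an open dense subset of $\sigma(\Delta)$, this already yields $\overline{\Sigma_S}\supset\overline{U}=\sigma(\Delta)$. Conversely, on $\Sigma_S$ one has $\im\Gamma^{(q)}_E>0$, and by Stieltjes inversion this means that the spectral measure of $\Delta_q$ at the root has positive (absolutely continuous) density at $E$; hence $E\in\sigma(\Delta_q)\subset\sigma(\Delta)$, giving $\Sigma_S\subset\sigma(\Delta)$ and therefore $\overline{\Sigma_S}=\sigma(\Delta)$. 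The same Stieltjes-inversion argument applied pointwise shows $I_q\subset\sigma(\Delta_q)\subset\sigma(\Delta)$, yielding the inclusion $\sigma(\Delta)\supset\bigcup_q I_q$.

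Parts (ii) and (iv) are then immediate bookkeeping: the complement $\sigma(\Delta)\setminus\Sigma_S$ is contained in $\partial\sigma(\Delta)\cup F$, a finite set, so $\Sigma_S$ is obtained from a finite union of closed intervals by removing finitely many points, hence is itself a finite union of open intervals (proving (ii)); and $I_q\setminus\Sigma_S\subset\sigma(\Delta)\setminus\Sigma_S$ is finite (proving (iv)). The main obstacle I anticipate is simply unpacking Theorem~6 of \cite{KLW} and confirming that the exceptional set there matches exactly what needs to be removed from $\sigma(\Delta)$ to obtain $\Sigma_S$; this is a translation of conventions rather than a conceptual difficulty.
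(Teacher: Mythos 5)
Your proposal is correct and follows essentially the same route as the paper, which gives no explicit proof of this lemma beyond the single remark that it ``is a consequence of Theorem~6 in \cite{KLW}.'' Your unpacking of that citation (via the finite exceptional set, the Stieltjes-inversion argument for $\Sigma_S, I_q\subset\sigma(\Delta)$, and the maximality of $\Sigma_S$) is the expected chain of reasoning and is sound.
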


One also obtains the following.

\begin{lemma}\label{lem-2}
Let $S$ satisfy {\rm (S2)} and {\rm (S3)} and let 
us denote the dependence of the Green's functions $\Gamma^{(q)}_z$ for the adjacency operator $\Delta$
on the substitution matrix $S$ (which defines $\IS$) by $\Gamma^{(q)}_z(S)$. 
Then, $bS$ satisfies {\rm (S2)} and {\rm (S3)} as well and for a natural number $b\in\NN$ one finds
\begin{equation}\label{eq-gm-bS-S}
\Gamma^{(q)}_{\sqrt{b} z}(bS) = \frac1{\sqrt{b}}\,\Gamma^{(q)}_z (S)\qtx{for} \im(z)>0\;.
\end{equation}
\end{lemma}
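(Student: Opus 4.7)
The plan is to separate the statement into two parts: first verify that $bS$ inherits assumptions {\rm (S2)} and {\rm (S3)}, then establish the scaling identity \eqref{eq-gm-bS-S} by comparing fixed-point equations for the diagonal Green's functions.

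The inheritance of assumptions is routine. Since $(bS)^n_{p,q}=b^n S^n_{p,q}$, positivity of some power of $S$ at the entry $(p,q)$ immediately gives positivity of the corresponding power of $bS$, so {\rm (S2)} holds for $bS$. Denoting $K=\min_q S_{q,q}$ as in {\rm (S3)}, the corresponding minimum diagonal entry for $bS$ is $bK$, and the assumption $\|S\|<K^2$ together with $b\geq 1$ yields $\|bS\|=b\|S\|<bK^2\leq (bK)^2$, which is {\rm (S3)} for $bS$.

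For the scaling identity, the key step is to derive a closed recursion for $\Gamma^{(q)}_z(S)$. By self-similarity of $\IS^{(q)}$ (the forward cone at any vertex of label $r$ is isomorphic to $\IS^{(r)}$) together with a standard Schur-complement/resolvent computation applied at the root $0_q$, one gets
\begin{equation}\label{eq-recursion}
\Gamma^{(q)}_z(S)\;=\;\frac{1}{-z\,-\,\sum_{r=1}^s S_{q,r}\,\Gamma^{(r)}_z(S)}\,,\qquad q=1,\ldots,s,\quad\im(z)>0.
\end{equation}
Now define $F^{(q)}(z):=\sqrt{b}\,\Gamma^{(q)}_{\sqrt{b}z}(bS)$. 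Each $F^{(q)}$ maps $\HH$ into $\HH$ since $z\mapsto \Gamma^{(q)}_z(bS)$ is Herglotz and $\sqrt{b}>0$. Substituting $w=\sqrt{b}z$ into the recursion \eqref{eq-recursion} for $bS$ and pulling a factor of $\sqrt{b}$ out of the denominator yields that the tuple $(F^{(q)})_{q=1}^s$ satisfies exactly the recursion \eqref{eq-recursion} associated to $S$.

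The main obstacle is then the uniqueness of the solution: \eqref{eq-recursion} is a system of polynomial equations in $(\Gamma^{(q)})$ which a priori admits several branches. However, within the class of Herglotz tuples that behave like $-1/z$ as $\im(z)\to\infty$, it has a unique solution, since each $\Gamma^{(q)}_z(S)$ is the diagonal matrix element at $0_q$ of the resolvent of the self-adjoint operator $\Delta$ on $\IS^{(q)}$, which is uniquely determined by the tree. Since both $(F^{(q)})$ and $(\Gamma^{(q)}_z(S))$ are Herglotz solutions with the correct asymptotics, they must coincide. Rewriting $F^{(q)}(z)=\Gamma^{(q)}_z(S)$ gives $\sqrt{b}\,\Gamma^{(q)}_{\sqrt{b}z}(bS)=\Gamma^{(q)}_z(S)$, which is \eqref{eq-gm-bS-S}.
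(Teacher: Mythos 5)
Your overall strategy mirrors the paper's: establish that $bS$ inherits (S2) and (S3), write down the closed recursion for $(\Gamma^{(q)}_z)_q$, check that the scaled functions $F^{(q)}(z) := \sqrt{b}\,\Gamma^{(q)}_{\sqrt{b}z}(bS)$ satisfy the same recursion, and conclude by uniqueness. The inheritance of (S2), (S3) and the algebraic verification that $F^{(q)}$ solves the $S$-recursion are fine.

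The problem is your justification of the uniqueness step. You claim that the recursion has a unique Herglotz solution with $-1/z$ asymptotics, and justify this by saying that $\Gamma^{(q)}_z(S)$ is the diagonal resolvent matrix element at the root, ``which is uniquely determined by the tree.'' That sentence proves the Green's function is unique (trivially true), but it does not prove that the \emph{recursion} has a unique solution in the Herglotz class. A priori there could be another Herglotz tuple $(H^{(q)}_z)_q$ satisfying the same algebraic system that is not the Green's function of any tree; nothing you say rules this out, and it is exactly what you would need to rule out to conclude $F^{(q)} = \Gamma^{(q)}_z(S)$ from the mere fact that both solve the recursion. The paper sidesteps this by citing the uniqueness result of Keller--Lenz--Warzel, who prove (under (S2), (S3)) that the system
\[
\Gamma_z^{(p)}\Big(z + \sum_{q} S_{p,q}\,\Gamma_z^{(q)}\Big) + 1 = 0, \qquad \im \Gamma_z^{(q)} > 0,
\]
has a unique solution for $\im(z)>0$. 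Replacing your circular sentence with an appeal to that result (or with an actual proof of it, e.g.\ via a contraction/hyperbolic-distance argument) would close the gap; as written, the key step is asserted rather than proved.
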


\begin{proof}
To see that $bS$ satisfies {\rm (S2)} is straight forward, for {\rm (S3)} note that with
$K=\min\{S_{q,q}:q=1,\ldots,s\},\;\|S\|<K^2$ one has $(bS)_{q,q}\geq bK$ and $\|bS\|=b\|S\|<bK^2<(bK)^2$.

As shown in \cite{KLW,Kel}, under the assumptions {\rm (S2)} and {\rm (S3)} for $\im(z)>0$ the Green's functions are uniquely
determined by the equations
\begin{align}
 \Gamma_z^{(p)}(S) \left( z+\sum_{q=1}^s S_{p,q} \Gamma_z^{(q)} (S) \right) +1&=0\;, \quad \im\left(\Gamma_z^{(q)}(S)\right)>0\;; \label{eq-gm-S}\\
 \Gamma_z^{(p)}(bS) \left( z+\sum_{q=1}^s bS_{p,q}\Gamma_z^{(q)} (bS) \right) +1&=0\;, \quad \im\left(\Gamma_z^{(q)}(bS)\right)>0 \label{eq-gm-bS}\;.
\end{align}
If $\Gamma_z^{(q)}(S)$ satisfy \eqref{eq-gm-S} then it is a simple calculation to verify that
$\Gamma^{(q)}_z(bS)$ defined by \eqref{eq-gm-bS-S} satisfy \eqref{eq-gm-bS}.
\end{proof}

Now we can go ahead with the proof of the theorem.

\renewcommand{\proofname}{Proof of Theorem~\ref{main1}}
\begin{proof}
Let $S$ satisfy  {\rm (S2)} and {\rm (S3)}.
Let $\Sigma_S$ be defined as in \eqref{eq-def-Sigma}. Then, Lemma~\ref{lem-1} states that $\Sigma_S\subset I_S$,
$I_S\setminus \Sigma_S$ is finite and $\Sigma_S$ consists of finitely many open intervals. As $\bar \Sigma_S = \sigma(\Delta)\neq \emptyset$ 
we also find that the interior of $I_S$ is not empty and consists of finitely many intervals. 
Let $a(S)$ be the length of the largest of those intervals. Then, the definition \eqref{eq-def-hat-I} of $ I_{A,S}$ yields that $ I_{A,S}$ is not empty for $a_{{\rm max}}-a_{{\rm min}} <a(S)$. This shows part (i).

For part(ii) let us define
\begin{equation}
\Sigma_{A,S}:=\bigcap_{j=1}^m (\Sigma_S+a_j)=\{E:E-a_j\in\Sigma\,,\;\forall\,j=1,\dots,m\}\;.
\end{equation}
Then, Lemma~\ref{lem-1} implies that $\Sigma_{A,S}$ is an open subset of $ I_{A,S}$ and hence also of its interior $\mathring{I}_{A,S}$. Furthermore, $\Sigma_{A,S}$ consists of finitely many intervals and 
$\mathring{I}_{A,S} \setminus \Sigma_{A,S}$ is finite. Hence, $\Sigma_{A,S}$ is a dense open set in $\mathring{I}_{A,S}$.

As the dependence of $I_{A,S}$ on $A$ lies only in the eigenvalues of $A$, one may assume $A$ to be diagonal. Then, for $\lambda=0$ the matrix Green's function $G_0^{(q)}(z)$ is diagonal and one has
$G_0^{(q)}(z)=\diag(\Gamma^{(q)}_{z-a_1},\ldots,\Gamma^{(q)}_{z-a_m})$.
Therefore, the continuity statement of Theorem~\ref{main2} for the case $\lambda=0$ implies that 
$\hat I_{A,S}$ lies inside $\Sigma_{A,S}$, so $\hat I_{A,S}\subset \Sigma_{A,S}$. 

We now prove that $\hat I_{A,S}$ is a dense open subset of $\Sigma_{A,S}$.
Since $\Sigma_{A,S}$ is a dense open set in $\mathring{I}_{A,S}$, this implies that
$\hat I_{A,S}$ is a dense open set in $\mathring{I}_{A,S}$.

Let $K=\min\big\{S_{q,q}:q\in\{1,\ldots,s\}\big\}$, then one has $K\geq 2$ by {\rm (S3)}. 
Then, \cite[Lemma~3]{KLW} or alternatively, \cite[Lemma~3.2]{Kel} gives
$|\Gamma^{(q)}_{E-a_j}|\leq\frac{1}{\sqrt{S_{q,q}}}\leq\frac1{\sqrt{K}} $ for all $q$.
This implies
\begin{equation}
 \|\blb_{J,E} \|\leq K^{-|J|}\quad\text{and}\quad
 \left\| \blb_{J,E} \blb^*_{J',E} S \right\| < K^{2-|J|-|J'|}\;,
\end{equation}
where we used $\|S\|<K^2$ as stated in assumption {\rm (S3)}. Thus,
\begin{equation} \label{eq-J+J'>=2}
\det(\blb_{J,E} \blb^*_{J',E} S-\one)\neq 0\quad\text{for}\quad
|J|+|J'|\geq 2\;.
\end{equation}
Therefore, the only possibility to have $E\in \Sigma_{A,S}\setminus \hat I_{A,S}$ is if $\det(\blb_{J,E} \blb^*_{J',E} S-\one)=0$
for some $J,J'\in\Delta(m,\ZZ_+)$ with $|J|+|J'|=1$, i.e. either $J=0$ and $|J'|=1$ or
$|J|=1$ and $J'=0$. 
Let us consider the latter case first and define $f_J(z):=\det(\blb_{J,z} S-\one)$
for $\im(z)>0$ and $z=E\in \Sigma_{A,S}$.
Moreover, let $N_J=\{E\in \Sigma_{A,S}: f_J(E)=0\}$.

By definition of $\Sigma_{A,S}$, the functions $z\mapsto \blb_{J,z}$ extend continuously to $\Sigma_{A,S}\cup\HH$.
Therefore, by continuity the set $N_J \cap \Sigma_{A,S}$ is closed in $\Sigma_{A,S}$ (with respect to the relative topology
in $\Sigma_{A,S}$).

Claim: $N_J \subset  \Sigma_{A,S}$ is a closed, nowhere dense set in $\Sigma_{A,S}$ (i.e. the interior w.r.t. the topology in 
$\Sigma_{A,S}$ is empty).

Assume the interior of $N_J$ in $\Sigma_{A,S}$ is not empty. This means, there is some non-empty open interval $I\subset N_J$. Then,
$f_J(E)=0$ for $E\in I$, so $f_J$ restricted to $I$ is real and the limit
of the holomorphic function $f_J(z)$ for $\im(z)>0$ with $z\to E$. Therefore, by the Schwarz reflection principle, $f$ extends to a 
holomorphic function in a neighborhood of $I$ in the complex plane by defining $f(z)=\bar f(\bar z)$ for $\im(z)<0$.
But since $f_J$ restricted to $I$ is zero this means $f$ would be identically zero in a complex neighborhood of $I$ which
contains an open set in the upper half plane. Therefore $f(z)$ would be zero in the entire upper half plane.
However, as the Green's functions $\Gamma^{(q)}_z$ go to zero if the imaginary part of $z$ goes to infinity, one obtains
\begin{equation}
 \lim_{\eta\to\infty} \blb_{J,E+i\eta} = 0\quad\text{implying}\quad
\lim_{\eta\to\infty} f_J(E+i\eta)= \det(-\one)=(-1)^s\;
\end{equation}
which gives a contradiction. Therefore, the interior of $N_J$ is empty and $N_J$ is a closed, nowhere dense set in $\Sigma_{A,S}$.

Using the anti-holomorphic function $g_J(z)=\det(\blb_{J,z}^*S-\one)$ for $\im(z)>0$, the same arguments give that the set
$\tilde N_J=\{E\in \Sigma_{A,S}:g_J(E)=0\}$ is also closed and nowhere dense in $ \Sigma_{A,S}$.
Now, finite unions of closed, nowhere dense sets are closed and nowhere dense and 
complements of closed nowhere dense sets are open dense sets.
Therefore, by  \eqref{eq-def-I_AS} and \eqref{eq-J+J'>=2} one obtains that
\begin{equation}
 \hat I_{A,S} =  \Sigma_{A,S} \setminus \Big( \bigcup_{\substack{J\in\Delta(m,\ZZ_+)\\ |J|=1}} N_J \cup \tilde N_J \Big)
\end{equation}
is a dense open set in $ \Sigma_{A,S}$.
This finishes the proof of part (ii). 

Part (iii) basically follows from Lemma~\ref{lem-2}. In particular,
for the set $I_S$ as defined in \eqref{eq-def-hat-I}, equation \eqref{eq-gm-bS-S} 
implies $I_{bS}=\sqrt{b}\,I_S$ which by the definition of $a(S)$ above implies
$a(bS)=\sqrt{b}\,a(S)$.
\end{proof}
\renewcommand{\proofname}{Proof} 

\begin{remark} 
\noindent \begin{enumerate}[{\rm (i)}]
\item A similar argument can be used to obtain that the limits 
$\Gamma^{(q)}_E=\lim\limits_{\eta\downarrow 0}\Gamma^{(q)}_{E+i\eta} $ depend analytically on $E$ for a dense open set in $\Sigma_S$.
By \eqref{eq-gm-S}, $\Gamma^{(q)}_E$ is implicitly defined for $z=E\in \Sigma_S$ and for $\im(z)>0$ by 
$F(z,\Gamma^{(1)}_z,\ldots,\Gamma^{(s)}_z)=0$ where $F=(F_1,\ldots,F_s)$ with 
$$F_p(z,x_1,\ldots,x_s)=x_p\,\left(z+\sum_{q=1}^s S_{p,q} x_q\right)+1\;.$$
Then $\partial_{x_r} F_p = \delta_{p,r} (z+\sum_q S_{p,q} x_q) + x_p S_{p,r}\;.$
Using \eqref{eq-gm-S} and defining the diagonal matrix $\Gamma_z=\diag(\Gamma^{(1)}_z,\ldots, \Gamma^{(s)}_z)$ this leads to the Jacobi matrix
$$
\left[\partial_{x_r} F_p(z,\Gamma^{(1)}_z,\ldots,\Gamma^{(s)}_z)\right]_{p,r=1,\ldots,s} =
-\Gamma_z^{-1} + \Gamma_z S = \Gamma_z^{-1}(\Gamma_z^2 S - \one)\;.
$$
By the same arguments as used above, the determinant of this Jacobi matrix is not zero in a dense open subset $\tilde I$ in $I_q$. 
Using the analytic version of the Implicit Function Theorem, this implies that $\Gamma_E^{(q)}$ is analytic in $E$ for $E\in\tilde I$.
\item For $J(k) \in \Delta(m,\ZZ_+)$ defined by $|J(k)|=1, [J(k)]_{k,k}=1$ one finds 
$\blb_{J(k),E}=\Gamma^2_{E-a_k}$. As $\det(\Gamma_{E-a_k}^2 S-\one)=\det(\blb_{J(k),E} S -\one)\neq 0$ for $E\in \hat I_{A,S}$
the arguments above give that $\blb_{J,E}$ is analytic in $E$ for $E\in \hat I_{A,S}$.
\item The reason for assumption {\rm (S3)} is that one only has to consider the limits $\det(\blb_{J,z} S-\one)$ and 
$\det(\blb_{J',z}^* S-\one)$ for $z\to E\in  I_{A,S}$ and these are holomorphic and anti-holomorphic functions for $z$ in the upper half plane.
This is not true for $\det(\blb_{J,z}\blb_{J',z}^*S-\one)$. However, intuitively these functions should not go to zero too often.
\end{enumerate}
\end{remark}


\section{Some Banach spaces} \label{sec-ban} 

We first introduce the important Banach spaces and operators as in \cite{KS, KS2}.
For the supersymmetric background of these definitions see Appendix~\ref{sec-super}.

Let $\I=\{1,\ldots,m\}$ and let $\PP(\I)$ denote the power set of $\I$, i.e. the set of all subsets, 
$\PP(\I)=\{a : a\subset \I\}$. Furthermore,
let $\Sym^+(m)$ denote the set of real, symmetric $m \times m$ matrices $M$ satisfying
$M\geq 0$ in matrix sense, i.e. for all $v\in\RR^m$ one has $v^\top M v\geq 0$.
We define $\Pp$ to be the set of pairs $(\bar a, a)$ of subsets of $\I$ with the same cardinality,
\begin{equation}
\Pp:=\{(\bar a, a)\,: \bar a, a \subset \I,\,|\bar a|=|a| \}\;.
\end{equation}
Moreover, let $n\in\ZZ$,  $n\geq\frac{m}2$, and let
 $\bar\aaa=(\bar a_1,\ldots,\bar a_1),\, \aaa=(a_1,\dots, a_n)\,\in(\PP(\I))^n$ and define
\begin{equation}
\Pp^n:=\{(\bar\aaa,\aaa) \in (\PP(\I))^n \times (\PP(\I))^{n}\,:\,(\bar a_l, a_l)\in \Pp \}\;.
\end{equation}

For functions on $(\Sym^+(m))$ let $\partial_{j,k}$ denote the derivative with respect to the $j,k$- entry of $M$,
$\partial_{j,k} f(M)= \frac{\partial}{\partial M_{j,k}} f(M)$ 
(by symmetry, $\partial_{j,k}=\partial_{k,j}$) and let $\tilde \partial_{j,k}=\frac12(1+\delta_{j,k}) \partial_{j,k}$, where $\delta_{j,k}$ denotes the Kronecker delta symbol, $\delta_{j,k}=0$ for $j\neq k$, and $\delta_{j,j}=1$.
For $(\bar a, a) \in \Pp$ with $a=\{k_1,\ldots,k_c\},\;k_1 < k_2 \ldots < k_c;\;
\bar a =\{\bar k_1,\ldots,\bar k_c\},\;\bar k_1 < \bar k_2 < \ldots \bar k_c$,
we define as in \cite{KS, KS2}
\begin{equation}
 \bpart_{\bar a, a}:=\begin{pmatrix}
       \tilde \partial_{\bar k_1,k_1} & \cdots & \tilde \partial_{\bar k_1,k_c}  \\
        \vdots & \ddots & \vdots \\
	\tilde \partial_{\bar k_c,k_1} & \cdots & \tilde \partial_{\bar k_c, k_c} \end{pmatrix}\;.  
\end{equation}
Now set $D_{\emptyset,\emptyset}$ to be the identity operator and define
\begin{equation}\label{eq-def-Daa}
D_{\bar a,a}:=\det(\bpart_{\bar a, a})\qtx{and}
D_{\bar\aaa,\aaa}:=\prod_{\ell=1}^n D_{\bar a_\ell, a_\ell}\;.
\end{equation}
for $(\bar a, a)\in\Pp$ and $(\bar\aaa, \aaa)\in\Pp^n$.
Something quite important is the following Leibniz-type rule. 
There is a function 
$(\bar\aaa,\aaa,\bar\bbb,\bbb,\bar\bbb',\bbb') \mapsto \sgn(\bar\aaa,\aaa,\bar\bbb,\bbb,\bar\bbb',\bbb')
\in \{-1,0,1\}$ such that
\begin{equation} \label{eq-Leibn0}
D_{\bar\aaa,\aaa}\,(fg) = 
\sum_{ (\bar\bbb,\bbb),(\bar\bbb',\bbb')\in\Pp^n}\;\sgn(\bar\aaa,\aaa,\bar\bbb,\bbb,\bar\bbb',\bbb')
(D_{\bar\bbb,\bbb}\, g)\,(D_{\bar\bbb',\bbb'}\, f)\;.
\end{equation}
The exact expression of the function $\sgn$ above is explained in more detail in Appendix~\ref{sec-super}, cf. \eqref{eq-Leibn}, in many cases one actually has
$\sgn(\bar\aaa,\aaa,\bar\bbb,\bbb,\bar\bbb',\bbb')=0$ (the expression in \eqref{eq-Leibn} does not some over all ($\bar\bbb,\bbb),(\bar\bbb',\bbb') \in \Pp^n$).

For $\bvp,\bvp' \in\RR^{m\times 2n}$, $B\in\CC^{m\times m}$ we define 
\beq
\bvp \cdot B \bvp':=\Tr(\bvp^\top B \bvp')\;,\quad
\bvp^\ot:=\bvp \bvp^\top\,\in\,\Sym^+(m)\,.
\eeq
Let $C_n^\infty(\Sym^+(m))$ denote the set of smooth functions $f$ on the interior of $\Sym^+(m)$ such that
the functions $\bvp \mapsto D_{\bar\aaa, \aaa} f(\bvp^\ot)$ extend to $C^\infty$ functions on $\RR^{m\times 2n}$. (Since $2n\geq m$, $\bvp^\ot$ is in the interior of $\Sym^+(m)$ if $\bvp$ has full rank, hence $\bvp\mapsto f(\bvp^\ot)$ is well defined for a dense open set in $\RR^{m\times 2n}$.
Moreover, let $\Ss_n(\Sym^+(m))$ denote the set of functions where
$\bvp \mapsto D_{\bar\aaa, \aaa} f(\bvp^\ot)$ extends to a Schwartz function.
A smooth function $\bvp\mapsto g(\bvp)$ is a Schwartz function if for any polynomial
function $p(\bvp)$ of the entries of $\bvp$ and any combination of derivatives 
$D^\alpha=\prod_{j,k} \left(\frac{\partial}{\partial \bvp_{j,k}}\right)^{\alpha_{j,k}}$ for a multi-index $\alpha$, one has $\sup_\bvp \left| p(\bvp) D^{\alpha} g(\bvp)\right|<\infty$.

For $f\in \Ss_n(\Sym^+(m))$ we introduce the norms as in \cite{Kl1,KSp,KS,KS2}
\begin{equation} \label{eq-def-norms}
\hn f \hn_p^2\;: =\;
\sum_{(\bar\aaa,\aaa)\in\Pp^n}\; \left\|
2^{|\aaa|} D_{\bar\aaa,\aaa} \;f\,(\bvp^\ot)\right\|^2_{L^p(\RR^{m\times 2n}, d^{2mn}\bvp)}.
\end{equation}
where $|\aaa|=|a_1|+|a_2|+\ldots+|a_n|$.
Now let $\widehat{\Hh}$ be the completion of ${ \Ss_n(\Sym^+(m))}$ with respect to the norm $\hn\cdot\hn_2$, $\widehat{\Hh}$ is a Hilbert space. 
The Banach spaces  $\widehat{\Hh}_p$,  $p \in [1,\infty]$, are defined by
\begin{equation}\label{eq-def-spaces}
\widehat{\Hh}_p : =\;\{f\;\in\;\widehat{\Hh}\;:\; \|f\|_{\widehat{\Hh}_p}\; :=\;\hn f \hn_2 +\hn f \hn_p\;<\; \infty\;\}\;.
\end{equation}
So $f\in\widehat{\Hh}_p$ basically means that for all $(\bar\aaa,\aaa)\in\Pp^n$ the function
$\bvp\mapsto D_{\bar\aaa,\aaa} f(\bvp^\ot)$ is an $L^2$ and $L^p$ function of $\bvp$.

For the same technical reasons as in \cite{KS} we have to work on some specific closed subspaces.
\begin{defini}
\noindent \begin{enumerate}[{\rm (i)}]
\item Let $\Sym_\CC(m)$ denote the complex symmetric, $m\times m$ matrices.
\item For $B\in\Sym_\CC(m)$ with strictly positive real part
(i.e., $\re B > 0$), let $\PE(B)$ denote the vector space spanned by functions $f \in C^\infty_n(\Sym^+(m))$ of the form
$f(M) = p(M) \exp(-\Tr(MB))$, where $p(M)$ is a polynomial in the entries of $M\in\Sym^+(m)$.  
Clearly, $\PE(B) \subset{\Ss_n(\Sym^+(m))}$. 
\item Let $\PE^{(0)}(B)=\{f\in\PE(B):f(0)=0\}$.
\item Define $\PE(m) \subset{ \Ss_n(\Sym^+(m))}$ as the smallest vector spaces containing all vector spaces $\PE(B)$ for all $B\in\Sym_\CC(m)$ with $\re(B)>0$.
\item For $p\in[1,\infty]$ let $\Hh$ and $\Hh_p$ be the closures of $\PE(m)$ in $\widehat\Hh$ and  
$\widehat\Hh_p$, respectively.
\item Let $\Hh^{(0)}=\{f\in\Hh\,:\,f(0)=0\}$ and
for $p\in[1,\infty]$ let $\Hh_p^{(0)}=\{f\in\Hh_p\,:\,f(0)=0\}$.
\end{enumerate}
\end{defini}

Next we show that the last definition actually makes sense, i.e. $f\mapsto f(0)$ can be interpreted as continuous linear functional on $\Hh$ and $\Hh_p$.
As shown in \eqref{eq-int-id0}, integrating over the Grassmann variables in \eqref{eq-sup-id} one obtains 
\begin{equation}\label{eq-int-id}
 \frac{(-1)^{mn}}{\pi^{mn}} \int D_{\III,\III} f(\bvp^\ot) d^{2mn}\bvp=  f(0)\;
\end{equation}
for $f\in \Ss_n(\Sym^+(m))$, where $\III=(\I,\I,\ldots,\I) \in (\PP(\I))^n$ (all entries are the full set $\I$).
Using the Leibniz rule \eqref{eq-Leibn0} and the fact that the product of two $L^2$ functions is $L^1$, one sees that the map
\begin{equation}
 f\mapsto L(f):= \int D_{\III,\III} [e^{-\Tr(\bvp^\ot)} f(\bvp^\ot)] d^{2mn}\bvp
\end{equation}
defines a continuous linear functional on $\Hh$ and any $\Hh_p$, extending the functional $f\mapsto (-1)^{mn}\pi^{mn} f(0)$. 
Therefore, $\Hh^{(0)}$ and $\Hh_p^{(0)}$ are the kernel of $L$ in $\Hh$
and $\Hh_p$, respectively, and we obtain the following.
\begin{lemma}
 $\Hh^{(0)}$ and $\Hh_p^{(0)}$ are closed subspaces of $\Hh$ and $\Hh_p$, respectively, with co-dimension 1.
\end{lemma}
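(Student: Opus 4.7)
The plan is to view $\Hh^{(0)}$ and $\Hh_p^{(0)}$ as kernels of the functional $L$ defined in the paragraph preceding the lemma, and to verify that $L$ extends continuously to all of $\Hh$ and $\Hh_p$. Once that is established, $\Hh^{(0)} = \ker(L|_{\Hh})$ and $\Hh_p^{(0)} = \ker(L|_{\Hh_p})$ are closed as preimages of $\{0\}$ under continuous linear maps, and codimension $1$ follows as soon as we exhibit a single element of $\Hh$ (resp.\ $\Hh_p$) on which $L$ does not vanish.

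The main work is the continuity of $L$. Starting from
\[
L(f)\;=\;\int D_{\III,\III}\big[e^{-\Tr(\bvp^\ot)}\,f(\bvp^\ot)\big]\,d^{2mn}\bvp,
\]
I would apply the Leibniz-type formula \eq{eq-Leibn0} with $g(M)=e^{-\Tr(MB)}$ (for $B=\one$) to expand the integrand into a finite sum of terms of the shape $\sgn(\cdot)\,[D_{\bar\bbb,\bbb}\,g](\bvp^\ot)\,[D_{\bar\bbb',\bbb'}\,f](\bvp^\ot)$. Each factor $[D_{\bar\bbb,\bbb}\,g](\bvp^\ot)$ is a polynomial in the entries of $\bvp$ times $e^{-\Tr(\bvp^\ot)}$ and therefore lies in $L^2(\RR^{m\times 2n})\cap L^{p'}(\RR^{m\times 2n})$ for every $p'\in[1,\infty]$, with norm a fixed finite constant $C_{\bar\bbb,\bbb}$. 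Applying Cauchy--Schwarz (for the $\Hh$-bound) and H\"older with exponents $p,p'$ (for the $\Hh_p$-bound) to each summand yields
\[
|L(f)|\;\leq\;\sum_{(\bar\bbb,\bbb),(\bar\bbb',\bbb')}\!\! C_{\bar\bbb,\bbb}\;\big\|\,2^{|\bbb'|}D_{\bar\bbb',\bbb'}f(\bvp^\ot)\,\big\|_{L^2}\;\leq\;C\,\hn f\hn_2,
\]
with the analogous $L^p$ bound giving $|L(f)|\leq C(\hn f\hn_2+\hn f\hn_p)=C\|f\|_{\widehat\Hh_p}$. Since the sum in \eq{eq-Leibn0} is finite, $C$ is finite, so $L$ is bounded on the dense subspace $\PE(m)$ and extends to a continuous linear functional on $\Hh$ and on each $\Hh_p$, still agreeing with $(-1)^{mn}\pi^{mn} f(0)$ wherever pointwise evaluation at $0$ is defined.

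For non-triviality take $B\in\Sym_\CC(m)$ with $\re(B)>0$ and let $f_0(M)=\exp(-\Tr(MB))$. Then $f_0\in\PE(B)\subset\PE(m)\subset \Hh\cap\Hh_p$, and the identity $L(f_0)=(-1)^{mn}\pi^{mn}\,f_0(0)=(-1)^{mn}\pi^{mn}\neq0$ shows $L\not\equiv 0$ on either space. Hence the kernels $\Hh^{(0)}$ and $\Hh_p^{(0)}$ are closed hyperplanes, i.e.\ closed subspaces of codimension $1$, completing the proof.

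I expect the only real obstacle to be the bookkeeping for the Leibniz expansion \eq{eq-Leibn0}: one must check that only finitely many $(\bar\bbb,\bbb),(\bar\bbb',\bbb')\in\Pp^n$ contribute and that the combinatorial factors $\sgn(\cdot)$ are bounded, so that the sum really is finite. Everything else (Cauchy--Schwarz/H\"older, Gaussian decay of derivatives of $e^{-\Tr(\bvp^\ot)}$, and the explicit non-vanishing on a Gaussian test function) is routine.
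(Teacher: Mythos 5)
Your proposal is correct and follows the same route as the paper: view $\Hh^{(0)}$, $\Hh_p^{(0)}$ as the kernel of $L$, prove $L$ bounded via the Leibniz rule \eqref{eq-Leibn0} and the fact that a product of two $L^2$ functions lies in $L^1$, and exhibit non-triviality on a Gaussian. Two small remarks: the worry about the Leibniz sum being finite is vacuous, since $\Pp^n$ is a finite set by construction and $\sgn(\cdot)\in\{-1,0,1\}$; and the extra H\"older step for $\Hh_p$ is redundant, because $\|f\|_{\widehat\Hh_p}\ge\hn f\hn_2$ already makes the Cauchy--Schwarz bound $|L(f)|\le C\hn f\hn_2$ give continuity on $\Hh_p$ for free.
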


Furthermore, we get the following.

\begin{lemma} \label{lem-PEB} Given any    complex symmetric $m\times m$ matrix $B$  with $\re B > 0$ and $p\in [1,\infty)$,
 $\Hh$ and  $\Hh_p$ are the closures of $\PE(B)$ in $\widehat\Hh$ and  
$\widehat\Hh_p$, respectively.
Consequently, $\Hh^{(0)}$ and $\Hh_p^{(0)}$ are the closures of $\PE^{(0)}(B)$.
\end{lemma}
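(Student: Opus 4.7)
The plan is to reduce to a single exponential family via a connectedness argument on the open convex cone $\mathcal{C}=\{B\in\Sym_\CC(m):\re B>0\}$. Since $\PE(m)$ is, by definition, the span of all $\PE(B')$ with $B'\in\mathcal{C}$, and $\Hh,\Hh_p$ are the closures of $\PE(m)$, it suffices to show that for any $B,B'\in\mathcal{C}$ one has $\PE(B')\subset\overline{\PE(B)}$ in the norms $\hn\cdot\hn_2$ and $\|\cdot\|_{\widehat\Hh_p}$.

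First I would establish a \emph{local density} statement: for every $B_0\in\mathcal{C}$ there is a neighborhood $\Uu(B_0)\subset\mathcal{C}$ such that $\PE(B')\subset\overline{\PE(B_0)}$ for all $B'\in\Uu(B_0)$. Given $f(M)=p(M)e^{-\Tr(MB')}\in\PE(B')$, I would factor $f(M)=p(M)\,e^{-\Tr(M(B'-B_0))}\,e^{-\Tr(MB_0)}$ and expand the middle factor as a power series, so that the partial sums
\begin{equation}
S_N(M)=p(M)\sum_{k=0}^{N}\frac{(-1)^k}{k!}\br{\Tr(M(B'-B_0))}^k\,e^{-\Tr(MB_0)}
\end{equation}
lie in $\PE(B_0)$. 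The main step is showing that the tail $R_N=f-S_N$ tends to zero in $\|\cdot\|_{\widehat\Hh_p}$. Applying the Leibniz rule \eq{eq-Leibn0} to $D_{\bar\aaa,\aaa}R_N(\bvp^\ot)$, each derivative produces a finite sum of terms of the form (polynomial in $\bvp$)$\cdot e^{-\Tr(\bvp^\ot B_0)}\cdot$(Taylor tail). Choosing $\Uu(B_0)$ small enough that $\re B_0-|B'-B_0|>0$ as a positive real matrix, the tail is bounded by a convergent series times $e^{\Tr(\bvp^\ot |B'-B_0|)}$, so that multiplying by $e^{-\Tr(\bvp^\ot\re B_0)}$ leaves genuine Gaussian decay, with the residual factor going to zero pointwise in $\bvp$ and dominated by an integrable Gaussian; dominated convergence then finishes both $L^2$ and $L^p$ convergence, including the polynomial prefactors.

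Next, since $\mathcal{C}$ is convex, any $B,B'\in\mathcal{C}$ are joined by the segment $B_t=(1-t)B+tB'$, $t\in[0,1]$, lying entirely in $\mathcal{C}$. Covering this compact segment by finitely many of the neighborhoods $\Uu(B_{t_j})$ produced above and chaining the local inclusions gives
\begin{equation}
\PE(B')\subset\overline{\PE(B_{t_N})}\subset\cdots\subset\overline{\PE(B_{t_0})}\subset\overline{\PE(B)},
\end{equation}
which is the required density.

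For the statement about $\Hh^{(0)}$ and $\Hh_p^{(0)}$, I would use the bounded linear functional $L$ defined via \eq{eq-int-id}, which is continuous on $\widehat\Hh$ and $\widehat\Hh_p$ and coincides with $(-1)^{mn}\pi^{mn}f(0)$ on $\Ss_n(\Sym^+(m))$. The function $e^{-\Tr(MB)}$ is in $\PE(B)$ with $L(e^{-\Tr(MB)})\neq 0$, so $\PE^{(0)}(B)=\ker L\cap\PE(B)$ has codimension one in $\PE(B)$. Combining this with the density of $\PE(B)$ in $\Hh$ already established and the continuity of $L$ yields that $\overline{\PE^{(0)}(B)}=\ker L\cap \Hh=\Hh^{(0)}$, and analogously for $\Hh_p^{(0)}$. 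The main obstacle is the tail estimate in the local density step, specifically ensuring that the Leibniz expansion of $D_{\bar\aaa,\aaa}R_N$ is dominated uniformly in $N$ by an integrable Gaussian so that the polynomial prefactors (whose degrees depend on $(\bar\aaa,\aaa)$ and on $p$) do not destroy convergence.
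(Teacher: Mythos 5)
Your argument is sound, and the overall structure is a reasonable reconstruction, but it is worth flagging how it relates to the paper's actual proof. For the first assertion the paper does not give an argument at all: it simply cites Lemma 2.5 of \cite{KS}. Your approach — truncating the power series of $e^{-\Tr(M(B'-B_0))}$ to move from $\PE(B')$ into $\PE(B_0)$, verifying convergence by a Gaussian domination when $B'$ is close enough to $B_0$ that $\re B_0$ still dominates $\|B'-B_0\|$, and then chaining along the segment in the convex cone $\{\re B>0\}$ — is exactly the right kind of self-contained argument, and the chaining step is genuinely needed (a single expansion fails when $\|B'-B\|$ exceeds the smallest eigenvalue of $\re B$). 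The uniform-in-$N$ domination you single out as the main obstacle does go through: each term produced by iterating \eqref{eq-Leibn0} is a polynomial times $e^{-\Tr(\bvp^\ot B_0)}$ times a tail of an exponential series, dominated pointwise by a polynomial times $e^{-(\lambda_{\min}(\re B_0)-\|B'-B_0\|)\|\bvp\|^2}$, and dominated convergence applies to each $D_{\bar\aaa,\aaa}$-component separately. For the second assertion, your argument is essentially identical to the paper's: the paper explicitly takes an approximating sequence $f_n\in\PE(B)$, sets $\tilde f_n(M)=f_n(M)-f_n(0)e^{-\Tr(MB)}\in\PE^{(0)}(B)$, and uses $f_n(0)=(-1)^{mn}\pi^{-mn}L(f_n)\to L(f)=0$; you phrase this abstractly via $\ker L$ and codimension one, but the subtraction of the $L$-scaled Gaussian is the same mechanism. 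In short: part two matches the paper; part one fills in what the paper outsources, and does so correctly.
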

\begin{proof}
The first statement is precisely Lemma 2.5 in \cite{KS}.
For the second statement, let $f\in\Hh^{(0)}$ or $\Hh_p^{(0)}$, respectively, 
and $f_n\in\PE(B)$ with $f_n\to f$ in $\Hh$ or $\Hh_p$, respectively.
Define $\tilde f_n(M)=f_n(M)-f_n(0) e^{-\Tr(MB)}$, then $\tilde f_n\in\PE^{(0)}(B)$.
Moreover, since $f_n(0)=(-\pi)^{mn}L(f_n)\to 0$, one finds 
$\tilde f_n\to f$ in $\Hh$ or $\Hh_p$, respectively.
\end{proof}

Furthermore, on $\Ss_n(\Sym^+(m))\otimes \Ss_n(\Sym^+(m))$ let us introduce the product norms
\begin{equation}
\hnn g\hnn_p^2 :=
\sum_{(\bar\aaa,\aaa),(\bar\bbb,\bbb) \in \Pp^n} 
\left\| 2^{|\aaa|+|\bbb|} D_{\bar\aaa,\aaa}^{(+)} D_{\bar\bbb,\bbb}^{(-)} g(\bvp_+^\ot,\bvp_-^\ot)
\right\|^2_{L^p(\RR^{4mn}, d^{2mn}\bvp_+ d^{2mn}\bvp_-)}\;,
\end{equation}
where $D_{\bar\aaa,\aaa}^{(\pm)}$ denotes the operator $D_{\bar\aaa,\aaa}$ with respect to the entry $\bvp^\ot_\pm$.
Then, $\widehat \Kk=\hat\Hh \otimes \hat \Hh$ is the completion of $\Ss_n(\Sym^+(m))\otimes \Ss_n(\Sym^+(m))$ with respect to the norm
$\hnn\cdot\hnn_2$. We further define the Banach spaces  
\begin{equation}
\widehat\Kk_p:= \{ g \in \hat\Kk : \hnn g \hnn_2 + \hnn g \hnn_p < \infty \}
\end{equation}
and $\Kk, \Kk_p$ as the closure of $\PE(m)\otimes\PE(m)$ in $\widehat\Kk$ and $\widehat\Kk_p$, respectively.
Similarly to above, we also define $\Kk^{(0)}$ and $\Kk_p^{(0)}$ as the set of functions
$g(\bvp_+^\ot,\bvp_-^\ot)$ in $\Kk$ and $\Kk_p$, respectively, with $g(0,0)=0$. 
Using the continuous linear map $g\mapsto \int e^{-\Tr(\bvp_+^\ot+\bvp_-^\ot)} g(\bvp_+^\ot,\bvp_-^\ot)d^{2mn}\bvp_+ d^{2mn}\bvp_-$
one can prove that these spaces are closed subspaces of co-dimension 1.

By Lemma~\ref{lem-PEB} $\PE(B)\otimes\PE(C)$ is dense in $\Kk$ and $\Kk_p$, with $p<\infty$, 
for any symmetric $m\times m$  matrices $B,\, C$ with $\re(B)>0$ and $\re(C)>0$. 
Similarly, the vector space sum $\PE^{(0)}(B)\otimes\PE(C) + \PE(B)\otimes\PE^{(0)} (C)$ is dense in $\Kk^{(0)}$ and $\Kk^{(p)}$ for $p<\infty$.

As in \cite{KS,KS2} let us introduce the supersymmetric Fourier transform $T$ acting on $\Ss_n(\Sym^+(m))$ by
\begin{equation} \label{eq-def-T0}
Tf (\bvp'^\ot) = \frac{(-1)^{mn}}{\pi^{mn}} \, \int
e^{\pm i \bvp \cdot \bvp'}\,D_{\III,\III} f(\bvp^\ot)\, d^{2mn}\bvp
\end{equation}
As explained in the calculation \eqref{eq-def-T2}, 
this definition is equivalent to \eqref{eq-def-T} in Appendix~\ref{sec-super} which is the same formula as \cite[eq (2.30)]{KS}.
For this definition it is important that $2n\geq m$, because this insures that the map $\bvp\mapsto \bvp^\ot$ from $\RR^{m\times 2n}$ to
$\Sym^+(m)$ is surjective and hence $Tf\in\Ss_n(\Sym^+(m))$ is well defined.
As $(-\bvp)^\ot=\bvp^\ot$, a change of variables also shows that the right hand side of \eqref{eq-def-T0}
does not depend on the sign of $\pm \bvp\cdot \bvp'$ in the first exponent.

A key identity is the following equation which is derived in \eqref{eq-T-id0}.
Let $B$ be a symmetric matrix and $\im(B)>0$. Then,
\begin{equation}
 \label{eq-T-id}
T(e^{i \Tr(B\bvp^\ot)})=
e^{-\frac{i}4 \Tr(B^{-1} \bvp^\ot)}
\end{equation}

Another important fact is given by Lemma 2.6 in \cite{KS} stating:

\begin{lemma}
\begin{enumerate}[{\rm (i)}]
\item $T$ is unitary on $\widehat \Hh$ and $\Hh$.
\item $T$ is a bounded operator from
$\widehat \Hh_1$ to $\widehat \Hh_\infty$, as well as from $\Hh_1$ to $\Hh_\infty$.
\item By \eqref{eq-int-id} one finds $Tf(0)=f(0)$ for $f\in\Ss_n(\Sym^+(m))$ (see also \eqref{eq-int-id0}). 
Hence $T$ is also unitary
on $\Hh^{(0)}$ and maps $\Hh_1^{(0)}$ to $\Hh_\infty^{(0)}$.
\end{enumerate}
\label{lem-op-T}
\end{lemma}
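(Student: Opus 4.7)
My plan is the following. Parts (i) and (ii) are essentially Lemma~2.6 of \cite{KS}, which I would simply cite; to indicate how that proof runs, the key is a commutation identity of the form
\[
D_{\bar\aaa',\aaa'}\,Tf \;=\; c_{\bar\aaa',\aaa'}\; T\!\bigl( D_{\bar\aaa'^c,\aaa'^c}\,f \bigr),
\]
where $\bar\aaa'^c,\aaa'^c$ denote the entry-wise complements in $\I$ and $c_{\bar\aaa',\aaa'}$ is a sign/phase constant. I would derive this via integration by parts in the defining formula \eqref{eq-def-T0}, exploiting that each $\tilde\partial$ inside $D_{\bar\aaa',\aaa'}$ pulls a $\bvp$-factor down from $e^{\pm i\bvp\cdot\bvp'}$ while $D_{\III,\III}$ is the maximal derivative block, so the remaining derivatives act on $f$ precisely through the complementary pair of subsets. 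Granted the identity, summing the squared norms \eqref{eq-def-norms} over $\Pp^n$ via the involution $(\bar\aaa',\aaa')\mapsto(\bar\aaa'^c,\aaa'^c)$ together with Plancherel in the $2mn$ variables of $\bvp$ would give unitarity on $\widehat\Hh$, and the $L^1\to L^\infty$ endpoint of Hausdorff--Young would give the bound $T:\widehat\Hh_1\to\widehat\Hh_\infty$. The invariance $T(\Hh)\subset\Hh$ and $T(\Hh_p)\subset\Hh_p$ then follows by applying \eqref{eq-T-id} and the Leibniz rule \eqref{eq-Leibn0} to each generator $p(M)e^{-\Tr(MB)}$ of $\PE(m)$, which is mapped to another such function.

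For (iii), which is the genuinely new content here, I would first verify the pointwise identity $Tf(0)=f(0)$ on $\Ss_n(\Sym^+(m))$: setting $\bvp'=0$ in \eqref{eq-def-T0} removes the exponential and leaves
\[
Tf(0) \;=\; \frac{(-1)^{mn}}{\pi^{mn}}\int D_{\III,\III}\, f(\bvp^\ot)\, d^{2mn}\bvp \;=\; f(0)
\]
by \eqref{eq-int-id}. Since the evaluation $f\mapsto f(0)$ is, up to the constant $(-\pi)^{-mn}$, the continuous linear functional $L$ on $\Hh$ and $\Hh_p$ constructed just above the statement, and $T$ is continuous by (i) and (ii), the identity $L\circ T = L$ extends from the dense subspace $\PE(m)$ to all of $\Hh$ and each $\Hh_p$. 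Consequently $T$ preserves $\ker L$, and the same argument applied to $T^{-1}$ (continuous by (i)) gives $T(\Hh^{(0)})=\Hh^{(0)}$. Unitarity on $\Hh^{(0)}$ is then inherited from (i) by restriction, and boundedness $\Hh_1^{(0)}\to\Hh_\infty^{(0)}$ from (ii).

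The main obstacle will be the combinatorial bookkeeping behind the commutation identity, since $D_{\bar a,a}$ is a determinant of partial derivatives and its interaction with $T$ genuinely uses the supersymmetric Grassmann machinery of Appendix~\ref{sec-super}. However, that calculation was already performed in \cite{KS} in exactly the present function-space setting, so the work here reduces to checking that none of its steps relies on structure beyond what is already fixed in Section~\ref{sec-ban}, which they do not.
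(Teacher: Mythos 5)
Your proposal is correct and follows essentially the same route as the paper: parts (i) and (ii) are taken directly from Lemma~2.6 of \cite{KS}, and part (iii) is derived from the pointwise identity $Tf(0)=f(0)$ together with density of $\PE(m)$ and continuity of $T$, exactly as the paper intends. One small inaccuracy worth correcting in your sketch: the commutation identity you write down has $T$ itself on the right-hand side, whereas the actual relation \eqref{eq-T-parts} is $D_{\bar\aaa,\aaa}(Tf)=\tfrac{2^{mn}}{4^{|\aaa|}}\,\sgn(\aaa,\bar\aaa)\,\Ff\bigl(D_{\aaa^\brc,\bar\aaa^\brc}f\bigr)$, i.e.\ the ordinary Fourier transform $\Ff$ on $\RR^{m\times 2n}$ appears on the right, and the barred/unbarred index tuples are interchanged before complementing. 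This distinction is what makes Plancherel and Hausdorff--Young applicable componentwise. Since you cite \cite{KS} for (i) and (ii) rather than re-deriving them, and your treatment of (iii) is sound, this does not affect the validity of the argument.
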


The operator $\Tt := T \otimes T$ is given by
\begin{align}
& \Tt g(({\bvp'}_+^\ot, {\bvp'}_-^\ot) \\ \notag
& \quad =\frac1{\pi^{2mn}} \int e^{\pm i (\bvp_+'\cdot \bvp_+ \pm \bvp_-'\cdot \bvp_-)} 
D_{\III,\III}^+ D_{\III,\III}^- \,g(\bvp_+^\ot,\bvp_-^\ot)\,d^{2mn} \bvp_+\,d^{2mn} \bvp_-
\end{align}
where $D_{\III,\III}^\pm$ denotes the operator $D_{\III,\III}$ with respect to the entry $\bvp_\pm^\ot$.
$\Tt$ is unitary on $\widehat\Kk$, $\Kk$ and $\Kk^{(0)}$ and it defines a bounded linear map from
$\widehat \Kk_1$ to $\widehat\Kk_\infty$, from $\Kk_1$ to $\Kk_\infty$ and
from $\Kk_1^{(0)}$ to $\Kk_\infty^{(0)}$.

\begin{remark}
The use of the spaces $\Hh^{(0)},\,\Hh_p^{(0)},\,\Kk^{(0)}$ and $\Kk_p^{(0)}$ is new in this work compared to 
\cite{KS}. The restriction to these spaces reduces the spectrum of the Frechet derivative calculated in Section~\ref{sec-frechet}
and helps avoiding an additional assumption on the substitution matrix $S$.
\end{remark}

We will first show some continuous extensions of certain $\Hh$, $\Hh_\infty$, $\Kk$ and $\Kk_\infty$ valued functions depending on the Green's function of the operator $H_\lambda$. 
In order to obtain a fixed point equation in the correct spaces and also in order to get to Theorem~\ref{main2}
we will use certain integral expressions (cf. \eqref{eq-EG-x-D}, \eqref{eq-EGG-x-D}). 
So it will be important that certain functions of the form $\bvp\mapsto D_{\bar\aaa,\aaa} f(\bvp^\ot)$ 
are $L^1$ functions. In order for this function to be in $L^1$, $f\in\Hh_1$ is sufficient. 
As the function $f$ itself will be given by some product, the following observation is important:

By H\"older's inequality, the product of two $L^2$ with some $L^\infty$ functions is an $L^1$ function,
and the product of an $L^2$ with some $L^\infty$ functions is an $L^2$ function.
Therefore, using the Leibniz type rule \eqref{eq-Leibn0} and approximating functions in $\Hh$, $\Hh_p$ by functions in $\PE(m)$, one obtains the following.
\begin{lemma}\label{lemma-prod}
 The product of a function in $\Hh$, or $\Kk$, and finitely many functions in $\Hh_\infty$, or $\Kk_\infty$,
 is in $\Hh_1$, or $\Kk_1$, respectively. Thus, for $r\geq 2$ we have continuous maps
\begin{align}
 (f_1,f_2,\ldots,f_r) \in \Hh \times \Hh_{\infty}^{r-1} &\;\mapsto \;
\prod_{p=1}^r f_p\;\in\Hh_1\;.\\
 (g_1,g_2,\ldots,g_r) \in \Kk \times \Kk_{\infty}^{r-1} &\;\mapsto \;
\prod_{p=1}^r g_p\;\in\Kk_1\;.
\end{align}
\end{lemma}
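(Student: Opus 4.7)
The plan is to reduce the lemma to a direct application of the Leibniz-type rule \eqref{eq-Leibn0} followed by H\"older's inequality, as the paragraph preceding the lemma already suggests. Because the $\Hh$ and $\Kk$ assertions have the same structure (the $\Kk$ case merely carries a second variable $\bvp_-$ on which one applies the same Leibniz/H\"older step a second time), I would write the argument only for $\Hh$ and note that the $\Kk$ version is entirely analogous.

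The core step is the bilinear estimate
$$
\|f_1 f_2\|_{\Hh_1} \le C_{m,n}\,\|f_1\|_{\Hh}\,\|f_2\|_{\Hh_\infty}
\qquad \text{for all } f_1,f_2 \in \PE(m),
$$
on the dense subspace $\PE(m)$, where everything is smooth and Schwartz so the pointwise product and all supersymmetric derivatives are well defined. For each $(\bar\aaa,\aaa)\in\Pp^n$, the rule \eqref{eq-Leibn0} writes $D_{\bar\aaa,\aaa}(f_1 f_2)(\bvp^\ot)$ as a finite signed sum, indexed by $(\bar\bbb,\bbb),(\bar\bbb',\bbb')\in\Pp^n$, of products $(D_{\bar\bbb,\bbb}f_1)(\bvp^\ot)(D_{\bar\bbb',\bbb'}f_2)(\bvp^\ot)$. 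The inclusion $L^2\cdot L^2\subset L^1$, with both factors estimated in $L^2$ (note that $\|f_2\|_{\Hh_\infty}$ controls the $L^2$-norms of the derivatives of $f_2$ as well, since the $\Hh_\infty$-norm contains $\hn\cdot\hn_2$), bounds $\hn f_1 f_2\hn_1$; and the inclusion $L^2\cdot L^\infty\subset L^2$ with the $f_2$-factor placed in $L^\infty$ bounds $\hn f_1 f_2\hn_2$. Summing over the finite index set $\Pp^n$ gives the bilinear estimate with constant depending only on $m$ and $n$. Since $\PE(m)$ is dense in $\Hh$, in $\Hh_1$, and in $\Hh_\infty$ by the very definitions of these spaces as closures, the standard density argument then extends multiplication to a continuous bilinear map $\Hh\times\Hh_\infty\to\Hh_1$.

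For $r\ge 3$ the plan is to induct on $r$ by peeling off $f_1$; this requires knowing that $\Hh_\infty$ is a Banach algebra, i.e.\ $\|fg\|_{\Hh_\infty}\le C_{m,n}\|f\|_{\Hh_\infty}\|g\|_{\Hh_\infty}$. This is proved by exactly the same Leibniz/H\"older recipe, but now using $L^\infty\cdot L^\infty\subset L^\infty$ everywhere to bound $\hn\cdot\hn_\infty$ and $L^2\cdot L^\infty\subset L^2$ to bound $\hn\cdot\hn_2$. Iterating then yields $\prod_{p=2}^r f_p \in \Hh_\infty$ with norm at most $C_{r,m,n}\prod_{p=2}^r\|f_p\|_{\Hh_\infty}$, and the bilinear case then places $f_1\prod_{p=2}^r f_p$ in $\Hh_1$ with the desired multilinear norm bound, simultaneously establishing continuity of the multilinear map.

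The only genuinely non-routine aspect is the combinatorial bookkeeping hidden in the iterated Leibniz rule: after $r-1$ iterations the number of summands grows, but it is bounded by a constant depending only on $r$, $m$ and $n$ because the index set $\Pp^n$ is finite, and each resulting monomial lies in the correct mixed-norm class by H\"older. I therefore do not expect any substantive analytic obstacle; the lemma is essentially a formal consequence of \eqref{eq-Leibn0}, H\"older's inequality, and the density of $\PE(m)$ in the relevant spaces.
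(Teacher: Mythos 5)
Your argument is correct and uses exactly the ingredients the paper relies on (the Leibniz rule \eqref{eq-Leibn0}, H\"older's inequality, and density of $\PE(m)$); the paper just sketches these in the paragraph preceding the lemma rather than giving a formal proof. The only cosmetic difference is that you package the $r$-fold product as a bilinear estimate $\Hh\times\Hh_\infty\to\Hh_1$ plus an induction via the Banach-algebra property of $\Hh_\infty$, whereas the paper implicitly handles all $r$ factors at once with the iterated Leibniz rule — the two phrasings are equivalent, and the cancellation of the $2^{|\aaa|}$ weights across the Leibniz decomposition works the same either way since $|\bbb|+|\bbb'|=|\aaa|$.
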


\begin{remark}\label{rem-Hh_1-lemma}
As one needs the product of two $L^2$ functions to get an $L^1$ function, the assumption $r\geq 2$ is
very important. In view of the fixed point equation developed and analyzed in the next sections,
the assumption {\rm (S1)} will therefore turn out to be crucial. In fact, this is the main reason why
assumption {\rm (S1)} is needed. Together with Lemma~\ref{lem-op-T}~(ii) and Lemma~\ref{lemma-prod} it will assure that the fixed point equations
\eqref{eq-fpz}, \eqref{eq-fpx} are valid in the spaces $\Hh_\infty^s$, $\Kk_\infty^s$
which in turn will be important for using Lemma~\ref{lemma-prod} again together with \eqref{eq-EG-x-D} and
\eqref{eq-EGG-x-D} to obtain Theorem~\ref{main2}.
\end{remark}


\section{Fixed point equations} \label{sec-fxp}

In this and the following sections let the assumptions (V) and (S1) hold.
For two neighboring sites $x,\, y\in\IS^{(q)}$ let $\IS^{(x|y)}$ denote the rooted tree with root $x$  obtained by
removing the branch from $x$ going through $y$ in $\IS^{(q)}$.
Furthermore, let $ H^{(x|y)}_\lb$ denote the operator $H_\lb$ restricted to $\IS^{(x|y)}$ with Dirichlet boundary
conditions and similar to \eqref{eq-def-G^x} let
\begin{equation}
 G_\lb^{(x|y)}(z):=
\left[ \langle x,j | (H_\lb^{(x|y)}-z)^{-1} | x,k\rangle \right]_{j,k}\;\in\,\CC^{m\times m}\;.
\end{equation}
For simplicity, we will denote the Green's functions at the roots by
\begin{equation}
 G_\lb^{(q)}(z):=
 G_\lb^{[0_q]}(z)=
\left[ \langle 0_q,j | (H_\lb-z)^{-1} | 0_q,k\rangle \right]_{j,k}\;\in\,\CC^{m\times m}\;.
\end{equation}

In \cite{KS} we used the operator $\frac 12\Delta$ instead of $\Delta$ on regular trees, so in order to
easily refer to these formulas, let us define the following $m\times m$ matrix valued Green's functions by
\begin{align} \label{tilde-G}
\tilde G^{[x]}_\lb(z)&:= 2 G_\lb^{[x]}(2z) = \left[ \langle x,j| (\tfrac12 H_\lb - z)^{-1} |x, k \rangle \right]_{j,k\in\I}\;,\;\;
\tilde G^{(q)}_\lb(z):=\tilde G^{[0_q]}_\lb(z)\;\\\label{tilde-G1}
 \tilde G^{(x|y)}_{\lb}(z) &:= 2 G^{(x|y)}_\lb(2z) = \left[ \langle 0_q,j| (\tfrac12 H^{(x|y)}_\lb - z)^{-1} |0_q, k \rangle \right]_{j,k\in\I}\;.
\end{align}
As
\beq
\tfrac12 H_\lb=\tfrac12 \Delta \otimes \one + \one\otimes \tfrac12 A + \bigoplus_{x\in\IS} \tfrac12 \lb  V(x)
\eeq
this means one needs to replace $A$ and $V(x)$ by $\frac12 A$ and $\frac12 V(x)$ compared to the formulas in \cite{KS, KS2}.
Therefore, for any $z=E+i\eta$ in the upper half plane, i.e., $\eta>0$, and label $q=1,\ldots, s$, define the functions  
$\ze^{(q)}_{\lb,z}$, $\ze^{(x|y)}_{\lb,z}$  on $\Sym^+(m)$ and 
$\xi^ {(q)}_{\lb,z}$, $\xi^ {(x|y)}_{\lb,z}$ on $\Sym^+(m)\times \Sym^+(m)$ by
\begin{align}\label{eq-zeta}
\ze^{(q)}_{\lb,z} (\bvp^{\od 2}) &:=  \E\left( e^{ \frac{i}{4} \Tr( \tilde G^{(q)}_{\lb}(\tfrac{z}{2})\,\bvp^{\od 2})}\right)
 = \E\left(\e^{\frac{i}2 \Tr(G_{\lb}^{(q)}(z)\,\bvp^\ot)} \right)\;, \\
\label{eq-zeta-g}
\ze^{(x|y)}_{\lb,z} (\bvp^{\od 2}) &:=  \E\left( e^{ \frac{i}{4} \Tr( \tilde G^{(x|y)}_{\lb}(\tfrac{z}{2})\,\bvp^{\od 2})}\right)
 = \E\left(\e^{\frac{i}2 \Tr(G_{\lb}^{(q)}(z)\,\bvp^\ot)} \right)\;, \\
 \xi^{(q)}_{\lb,z} ( \bvp_+^\ot\,,\, \bvp_-^\ot) &:= 
\E \left(e^{ \frac{i}2 \Tr\left ( G_{\lb}^{(q)}(z) \,\bvp_+^\ot\, - \,
\overline{ G_{\lb}^{(q)}(z) }\,\bvp_-^\ot\right)}\right)    \label{xia} \;,\\
 \xi^{(x|y)}_{\lb,z} ( \bvp_+^\ot\,,\, \bvp_-^\ot) &:= 
\E \left(e^{ \frac{i}2 \Tr\left ( G_{\lb}^{(x|y)}(z) \,\bvp_+^\ot\, - \,
\overline{ G_{\lb}^{(x|y)}(z) }\,\bvp_-^\ot\right)}\right)    \label{xia-g} \;.
\end{align}
Moreover, similar to \cite{KS} let us introduce
the operators 
\begin{align}\label{eq-def-B}
B_{\lb,z} &= \Mm\left(e^{i\Tr(\frac{1}2(z-A) \bvp^\ot) }\; h(\tfrac12 \lb  \bvp ^{\od 2})\right)\,,\\
\Bb_{\lb,z} &=\Mm\left( e^{i\Tr(\frac{1}2(z-A) \bvp_+^\ot-(\frac12(\bar z-A)\bvp_-^\ot)) }\; h(\tfrac12 \lb  (\bvp_+^{\ot}-\bvp_-^\ot))
\;\right)
\end{align}
where the function $h$ is the Fourier transform of the distribution of $V(x)$ as given in \eqref{eq-def-h}.
Here, for a given function $g$ defined on $\Sym^+(m)$ and $\tilde g$ defined on
$\Sym^+(m)\times \Sym^+(m)$ we use $\Mm(g)$, $\Mm(g(\bvp^\ot))$ and $\Mm(\tilde g), \Mm(\tilde g(\bvp_+^\ot, \bvp_-^\ot)$, respectively, 
to denote the multiplication-operator given multiplying by $g(\bvp^\ot)$ and $\tilde g(\bvp_+^\ot,\bvp_-^\ot)$, respectively. This means,  
\begin{equation}
(\Mm(g)f)(\bvp^\ot)\; = \;g(\bvp^\ot)f(\bvp^\ot)\;.
\end{equation}

Let $N_p$ denote the set of neighbors of the root $0_p$, i.e. $N_p=\{x: d(x,0_p)=1\}$.
Analogous to \cite[eq. (3.13)]{KS}, using the supersymmetric replica trick (cf. Appendix~\ref{app-gr}) one obtains
\begin{equation}\label{eq-recursion}
 e^{\frac{i}{4}\,\Tr\left(\tilde G^{(p)}_{\lambda}(z)\bvp^\ot\right)} = 
T \Mm(e^{i  \Tr((z-\lambda \frac12 V(x)-\frac 12 A)\bvp^\ot)})\left(
\prod\limits_{x\in N_p} e^{\frac{i}4 \Tr(\tilde G^{(x|0_p)}_{\lambda}(z) \bvp^\ot)} \right)
\end{equation}
In view of \eqref{eq-T-id} this equation is equivalent to
\begin{equation}\label{eq-recursion-G}
 G^{(p)}_{\lambda}(z)=
-\left(z-\lambda V(0_p)-A+\sum_{x\in N_p} G^{(x|0_p)}_\lambda(z) \right)^{-1}
\end{equation}
which is a well known recursion relation for the Green's functions that can be obtained from the resolvent identity.
It is used in many articles and gives an alternative proof of \eqref{eq-recursion}.
Taking expectations in \eqref{eq-recursion} and replacing $z$ by $\frac z2$ this gives
\begin{equation} \label{eq-zeta-recursion}
\ze^{(p)}_{\lb,z} = T B_{\lb,z} \left( \prod_{x\in N_p}\ze^{(x|0_p)}_{\lb,z}\right) = T B_{\lb,z} \left(  \prod_{q=1}^s
[\ze^{(q)}_{\lb,z}]^{S_{p,q}} \right)\;
\end{equation}
which is the analogue of \cite[eq. (3.16) and eq. (3.23)]{KS}.
In the equation above we used that $\ze^{(x|0_p)}=\ze^{(q)}$ if $x\in N_p$ and the label of $x$ is $q$.
This follows from the fact that the potential $(V(x))_{x\in\TT}$ is independent identically distributed and that for $x\in N_p$ with label $q$
the tree $\IS^{(x|0_p)}$ is equivalent to $\IS^{(q)}$.

Analogously, like in \cite[eq.~(4.4) and (4.12)]{KS}, one obtains
\begin{equation}
\xi^{(p)}_{\lb,z} = 
\Tt \Bb_{\lb,z} \left( \prod_{x\in N_p}\xi^{(x|0_p)}_{\lb,z}\right) =
\Tt \Bb_{\lb,z} \left(  \prod_{q=1}^s
[\xi^{(q)}_{\lb,z}]^{S_{p,q}} \right)\;.\label{eq-xi-recursion}
\end{equation}

 Recall that we assume without loss of generality that $A$ is diagonal.
Then, the Hamiltonian $H_0$ (i.e., $\lambda=0$) splits into a direct sum of shifted Laplacians on $m$ copies of the
forest $\IS$. The Laplacians are   shifted by the energies $a_i$, $i=1,\ldots,m$, where $A=\diag(a_1,\ldots,a_m)$.
Therefore, in the free case, $\lambda=0$, one obtains
\begin{equation}
\ze^{(q)}_{0,z} (\bvp^{\od 2}) =  
e^{\frac{i}2 \Tr(A^{(q)}_z \bvp^\ot)}\;,\;\;\text{where}
\;\;
A^{(q)}_z := \diag(\Gamma^{(q)}_{z-a_1},\Gamma^{(q)}_{z-a_2},\ldots,\Gamma^{(q)}_{z-a_m})\;.
\label{eq-ze0}
\end{equation}
By \eqref{eq-def-Iq} and \eqref{eq-def-hat-I}, for  $E\in \hat I_{A,S}\subset I_{A,S}$ the point-wise limits 
\begin{align}
\ze^{(q)}_{0,E} (\bvp^{\od 2}) &{:=} \lim_{\eta \downarrow 0} \ze^{(q)}_{0,E+i\eta} (\bvp^{\od 2})
=e^{\frac{i}2 \Tr(A^{(q)}_E\,\bvp^\ot)}, \label{eq-ze00} \\
\xi^{(q)}_{0,E} (\bvp_+^{\ot},\bvp_-^\ot) &{:=} \lim_{\eta \downarrow 0} \xi^{(q)}_{0,E+i\eta} (\bvp_+^{\ot},\bvp_-^\ot) =
e^{\frac{i}2 \Tr(A^{(q)}_E\,\bvp_+^\ot - \bar A^{(q)}_E\,\bvp_-^\ot)}
\end{align}
exist for any $q=1,\ldots,s$, where
\begin{equation}
 A_E^{(q)}=\lim_{\eta\downarrow 0} A_{E+i\eta}^{(q)}\;,\qquad
[A_E^{(q)}]_{jk}=\delta_{jk}\,\Gamma^{(q)}_{E-a_j} \label{eq-def-AE}
\end{equation}
are diagonal $m\times m$ matrices with strictly positive imaginary part.
Note that this combined with \eqref{eq-def-blb} leads to
\begin{equation} \label{eq-rel-blb-A}
\theta_{J,E}^{(q)}=\prod_{\substack{j,k\in\{1,\ldots,m\}\\j\leq k}} \left[[A_E^{(q)}]_{j,j} [A_E^{(q)}]_{k,k} \right]^{J_{jk}}\;.
\end{equation}

In order to write \eqref{eq-zeta-recursion} and \eqref{eq-xi-recursion} in a more compact way, let us introduce
the column-vectors
\begin{equation}
\vze_{\lb,z}=(\ze_{\lb,z}^{(1)},\ze_{\lb,z}^{(2)}, \ldots,\ze_{\lb,z}^{(s)})^\top\,,\quad
\vxi_{\lb,z}=(\xi_{\lb,z}^{(1)},\xi_{\lb,z}^{(2)}, \ldots,\xi_{\lb,z}^{(s)})^\top\,,\;
\end{equation}
and for a vector $\vec{v}=(v_1,\ldots,v_s)^\top \in\CC^s$ and an $s\times s$ matrix $M$  we define the 
notation $\vec{v}^M$ by
\begin{equation}
 \vec{v}^M\,=\,\vec{w}\in\CC^s\quad:\Leftrightarrow\quad
w_p=\prod_{q=1}^s (v_q)^{M_{pq}}\;.
\end{equation}

Then  \eqref{eq-zeta-recursion} and \eqref{eq-xi-recursion} can be written as 
$\vze_{\lb,z} = T B_{\lambda,z} (\vze_{\lb,z})^S$ 
and $\vxi_{\lb,z} = \Tt \Bb_{\lambda,z} (\vxi_{\lb,z})^S$.
Here, $B_{\lambda,z},\,T,\,\Bb_{\lambda,z},\,\Tt$  act on each component of the vector.
The crucial observation is that these are fixed point equations in $\Hh_\infty^s$ and $\Kk_\infty^s$.
Analogous to \cite[Proposition 3.2]{KS} one obtains

\begin{prop} \label{zeta} We have:
\begin{enumerate}
\item[{\rm (i)}] For $\eta=\im z\geq 0$ the operator $B_{\lb,z}$ is a bounded operator on $\widehat\Hh_1$, leaving  $\Hh_1$ invariant. 
The map 
\begin{equation}
(\lambda,E,\eta,\vec{f}) \mapsto\, T B_{\lambda,E+i\eta} \vec{f}\,^S
\end{equation}
is a continuous map from 
$\RR\times\RR\times[0,\infty)\times \Hh^s_\infty$ to $\Hh^s_\infty$.

\item[{\rm (ii)}]   $\vze_{\lb,z} \in {\Hh}^s_\infty$ for all $\lb \in \RR$ and $z=E+i\eta$ with 
$\eta >0$.   The map $(\lb,E, \eta) \to \vze_{\lb,E +i\eta}$ is continuous from
$\RR \times \RR \times (0,\infty)$ to $ {\Hh}^s_\infty$.

\item[{\rm (iii)}]  If $E\,\in\,\hat I_{A,S}$, then  $\vze_{0,E} \in (\PE(m))^s \subset {\Hh}^s_\infty$ and
\begin{equation}
\lim_{\eta \downarrow 0} \vze_{0,E+i\eta}\; = \;\vze_{0,E} \;\;\;\mbox{in}\;\; {\Hh}^s_\infty  \;.   
\end{equation}

\item[{\rm (iv)}] {The equality}  \eqref{eq-zeta-recursion} can be rewritten as a fixed point equation in 
${\Hh}^s_\infty$: 
\begin{equation} \label{eq-fpz}
\vze_{\lb,z}  \; = \;TB_{\lb,z} (\vze_{\lb,z})^S 
\end{equation}
valid for all $\lb \in \RR$ and $z=E+i\eta$ with $\eta >0$, and also valid for  $\lb=0$ and
 $z=E$ with $E\,\in\, \hat I_{A,S}$. 
\end{enumerate}
\end{prop}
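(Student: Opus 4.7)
I would adapt the argument of \cite[Proposition~3.2]{KS} to the present vector-valued setting (one component per root label $q$), handling parts (i)--(iv) in sequence.

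\textbf{Part (i).} Write $B_{\lb,z}$ as multiplication by the smooth function $m_{\lb,z}(\bvp):=e^{i\Tr(\tfrac12(z-A)\bvp^\ot)}h(\tfrac{\lb}{2}\bvp^\ot)$. For $\eta=\im z\ge 0$ one has $|m_{\lb,z}(\bvp)|=e^{-\tfrac{\eta}{2}\Tr(\bvp^\ot)}|h(\tfrac{\lb}{2}\bvp^\ot)|\le 1$, since $\bvp^\ot\ge 0$ and $|h|\le 1$ by \eqref{eq-def-h}; assumption (V) further provides uniform bounds on all partial derivatives of $h$, so the derivatives of $m_{\lb,z}$ grow at most polynomially in $\bvp$ with coefficients continuous in $(\lb,E,\eta)$. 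The Leibniz-type rule \eqref{eq-Leibn0} then expresses $D_{\bar\aaa,\aaa}(m_{\lb,z}f)$ as a finite sum of such polynomially bounded smooth multipliers against $D_{\bar\bbb,\bbb}f$, so $B_{\lb,z}:\widehat\Hh_1\to\widehat\Hh_1$ is bounded continuously in the parameters. Invariance of $\Hh_1$ follows as in \cite{KS}: $e^{i\Tr(\tfrac12(z-A)\bvp^\ot)}$ maps $\PE(B)$ into $\PE(B-\tfrac{i}{2}(z-A))$, and $h(\tfrac{\lb}{2}\bvp^\ot)$ is approximated by Gaussian multipliers preserving $\PE(m)$. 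For the nonlinearity, the $p$-th component of $\vec f^{\,S}$ is $\prod_{q=1}^{s}f_q^{S_{p,q}}$, a product of exactly $S_p\ge 2$ factors by assumption (S1), so Lemma~\ref{lemma-prod} gives continuity $\Hh_\infty^s\to\Hh_1^s$. Composition with $T:\Hh_1\to\Hh_\infty$ (Lemma~\ref{lem-op-T}(ii)) completes (i).

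\textbf{Parts (ii) and (iii).} For (ii) I would prove $\vec\ze_{\lb,z}\in\Hh_\infty^s$ by finite-volume approximation: for each $N$ let $\IS^{(q)}_N$ be the tree truncated at depth $N$ with Dirichlet conditions and let $\vec\ze^{\,(N)}_{\lb,z}$ denote the analogue of \eqref{eq-zeta} on this finite subtree, so that iterating \eqref{eq-zeta-recursion} inwards from the leaves yields $\vec\ze^{\,(0)}_{\lb,z}=\vec 1$ and $\vec\ze^{\,(N)}_{\lb,z}=TB_{\lb,z}(\vec\ze^{\,(N-1)}_{\lb,z})^S$. By part (i), every iterate lies in $\PE(m)^s\subset\Hh_\infty^s$; bounding the differences $\vec\ze^{\,(N)}-\vec\ze^{\,(N-1)}$ in $\Hh_\infty^s$ via (i) and the assumption (S1) (which ensures each application of the nonlinearity consumes a non-trivial product), together with the pointwise convergence $\vec\ze^{\,(N)}_{\lb,z}\to\vec\ze_{\lb,z}$, yields convergence in $\Hh_\infty^s$, hence $\vec\ze_{\lb,z}\in\Hh_\infty^s$ and continuous dependence on $(\lb,E,\eta)\in\RR\times\RR\times(0,\infty)$. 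For (iii), \eqref{eq-ze00} gives the explicit formula $\ze^{(q)}_{0,E}(\bvp^\ot)=e^{\tfrac{i}{2}\Tr(A_E^{(q)}\bvp^\ot)}$; the hypothesis $E\in\hat I_{A,S}\subset I_{A,S}$ ensures $\im\Gamma^{(q)}_{E-a_j}>0$ for every $q,j$, so $\re(-\tfrac{i}{2}A^{(q)}_E)=\tfrac12\im A^{(q)}_E$ is strictly positive diagonally, exhibiting $\ze^{(q)}_{0,E}\in\PE(-\tfrac{i}{2}A^{(q)}_E)\subset\PE(m)\subset\Hh_\infty$. The limit $\vec\ze_{0,E+i\eta}\to\vec\ze_{0,E}$ in $\Hh_\infty^s$ as $\eta\downarrow 0$ is a routine continuity statement for the Gaussian family $e^{\tfrac{i}{2}\Tr(M\bvp^\ot)}$ with diagonal parameter $M=A^{(q)}_{E+i\eta}\to A^{(q)}_E$ whose imaginary part remains uniformly bounded below on a neighborhood of $E$, giving uniform $L^2\cap L^\infty$ control of all Leibniz-derived factors.

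\textbf{Part (iv) and main obstacle.} Once (i)--(iii) are in place, the pointwise identity \eqref{eq-zeta-recursion} holds between elements of $\Hh_\infty^s$ and is therefore valid in $\Hh_\infty^s$; extension to $\lb=0,\,z=E\in\hat I_{A,S}$ follows by passing to the limit $\eta\downarrow 0$ via (iii) and the continuity of the right-hand side from (i). The principal difficulty is part (ii): a direct attempt via Minkowski's inequality, $\|\ze^{(q)}_{\lb,z}\|_{\Hh_\infty}\le\E\big(\big\|e^{\tfrac{i}{2}\Tr(G^{(q)}_\lb(z)\bvp^\ot)}\big\|_{\Hh_\infty}\big)$, reduces to realization-wise Gaussian integrals whose covariance $\im G^{(q)}_\lb(z)$ can degenerate with positive probability, calling for unavailable negative moments of $\det(\im G^{(q)}_\lb(z))$; the finite-volume/iteration scheme above avoids this pitfall by staying inside $\Hh_\infty^s$ at every truncation level and transferring control through the bounded operator $B_{\lb,z}$ established in (i).
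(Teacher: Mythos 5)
Your parts (i), (iii) and (iv) run essentially parallel to the paper's argument, but your treatment of part (ii) diverges and has genuine gaps that make it unconvincing as written.

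For (ii) the paper does not use a finite-volume iteration at all. It observes that for fixed potential and $\eta>0$ one has $\|G_\lb^{(q)}(E+i\eta)\|\leq 1/\eta$ uniformly, so that the functions $D_{\bar\aaa,\aaa}\,\ze^{(q)}_{\lb,z}(\bvp^\ot)$ are \emph{polynomially bounded} in $\bvp$ (with locally uniform constants). Since for $\eta>0$ the multiplier in $B_{\lb,z}$ carries the explicit Gaussian factor $e^{-\frac{\eta}{2}\Tr\bvp^\ot}$, the product $B_{\lb,z}(\vze_{\lb,z})^S$ lies in $\Hh_1^s$, and $T:\Hh_1\to\Hh_\infty$ then puts the right-hand side of the recursion \eqref{eq-zeta-recursion} into $\Hh_\infty^s$. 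Because $\vze_{\lb,z}$ already satisfies that recursion pointwise, this immediately gives $\vze_{\lb,z}\in\Hh_\infty^s$, and continuity in $(\lb,E,\eta)$ is a dominated-convergence argument in $\Hh_1^s$ inside the $T$. There is nothing to iterate, no contraction estimate to establish, and no sequence of approximants to pass to the limit in a Banach space.

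Your finite-volume scheme, by contrast, runs into the following concrete problems. First, the initial iterate $\vec\ze^{(0)}_{\lb,z}=\vec 1$ is not in $\Hh_\infty^s$ (the constant $1$ has infinite $\hn\cdot\hn_2$-norm), so the claim that part (i) puts every iterate in $\Hh_\infty^s$ — let alone in $\PE(m)^s$ — is not available at the start. Second, for $\lb\neq 0$ the iterates are averages over the random potential, hence are generically \emph{not} of the form polynomial times Gaussian; they are not in $\PE(m)^s$. Third, and most seriously, you assert without argument that the differences $\vec\ze^{(N)}-\vec\ze^{(N-1)}$ can be controlled in $\Hh_\infty^s$ and that pointwise convergence upgrades to $\Hh_\infty^s$-convergence. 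The map $\vec f\mapsto TB_{\lb,z}\vec f^{\,S}$ is not a contraction: its Frechet derivative at the fixed point is precisely the operator $C_E$ of Lemma~\ref{lem-DF}, whose spectrum is the nontrivial set in \eqref{eq-lbj}, not contained in a small disk. So a generic fixed-point-iteration convergence argument is not available, and you would need an essentially different mechanism (e.g.\ monotonicity of Green's functions in $\im z$ plus dominated convergence in the norms) to close this step. The paper's direct argument avoids the entire issue, so I would recommend replacing your part (ii) with the argument above.

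Two smaller remarks. In (iii) your exhibition $\ze^{(q)}_{0,E}\in\PE(-\tfrac{i}{2}A^{(q)}_E)$ is correct, and your "main obstacle" observation about Minkowski's inequality and degenerate covariance is a fair diagnosis of why one cannot estimate $\ze^{(q)}_{\lb,z}$ realization-wise; the resolution, however, is not your iteration but the decay supplied by $B_{\lb,z}$ for $\eta>0$ as explained above. Finally, you never explain why $\vze^{(N)}_{\lb,z}\to\vze_{\lb,z}$ even pointwise; the resolvent on the truncated tree does converge strongly, but that is an additional argument your write-up omits.
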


\begin{proof}
(i) The fact that $B_{\lambda,z}$ is a bounded operator on $\widehat\Hh_1$ and $\Hh_1$ as well as the fact that $(\lambda,E,\eta,f)\mapsto B_{\lb,z} f$ is a continuous map from
$\RR\times\RR\times[0,\infty)\times \Hh_1$ to $\Hh_1$ are already proved in \cite[Proposition~3.2]{KS}.
As $B_{\lambda,z}$ is a multiplication operator, including the multiplication by the Fourier transform $h$ of the distribution of the potential $V(x)$ (cf. definitions \eqref{eq-def-h} and \eqref{eq-def-B}), assumption {\rm (V)} is important for this observation.

By assumption {\rm (S1)} each component in $\vec{f}^S$ is the product of at least two factors,
hence by Lemma~\ref{lemma-prod}, $\vec{f}\mapsto\vec{f}^S$ defines a continuous map 
from $\Hh_\infty^s$ to $\Hh_1^s$. 
These two facts together with Lemma~\ref{lem-op-T}~(ii) immediately imply (i).

To get (ii) note that 
for fixed potential and $\eta>0$ the dependence of $G^{(q)}_{\lambda}(E+i\eta)$ on $\lambda, E,\eta$ is continuous and $\|G_{\lambda}(E+i\eta)\|\leq\frac{1}{\eta}$.
Also, as long as $\eta>0$, the multiplication operator $B_{\lambda,E+i\eta}$ multiplies by
a exponential decaying function (in $\bvp$, cf. \eqref{eq-def-B}), and the exponential decay is uniform in
a neighborhood of $z$. Therefore, $B_{\lambda,E+i\eta} (\vze_{\lambda,E+i\eta})^S \in \Hh_1^s$ and by Dominated Convergence, the dependence on $(\lambda,E,\eta)$ is continuous.
Using \eqref{eq-zeta-recursion} (which can be written as \eqref{eq-fpz}) and Lemma~\ref{lem-op-T}~(ii),
part (ii) now follows.

For part (iii) and (iv) the important fact is that by definition, 
$E \in \hat I_{A,S} \subset I_{A,S}$ assures that for $\lambda=0$ (no random potential) the limit
$G^{(q)}_0(E):=\lim_{\eta\downarrow 0} G^{(q)}_{0}(E+i\eta))$ exists for all $q$ and has a positive definite imaginary part. Hence, the limit $\vze_{0,E}(\bvp^\ot)=e^{\frac{i}2 \Tr(G^{(q)}_0(E)\bvp^\ot)}\in\PE(m)$ exists pointwise (in $\bvp^\ot$) and all derivatives
$D_{\bar\aaa,\aaa} \vze_{0,E}(\bvp^\ot)$
are exponentially decaying functions of $\bvp$, the decay is uniform in $z$ in a neighborhood of $E+i0$ in the upper half plane. 
Dominated Convergence gives $\vze_{0,E+i\eta} \to \vze_{0,E}$ in $\Hh^s$ and
$B_{0,E+i\eta} (\vze_{0,E+i\eta})^S \to B_{0,E} (\vze_{0,E})^S$ in $\Hh_1^s$.
By taking limits in \eqref{eq-zeta-recursion} we first obtain $\vze_{0,E}=TB_{0,E}\vze^S_{0,E}$ and using Lemma~\ref{lem-op-T}, part (iii) and (iv) follow.  
\end{proof}

Similarly, as in \cite[Proposition 4.2]{KS} one also obtains the analogue results for the function $\vxi_{\lambda,z}$.
\begin{prop}  \label{xita} We have:
\begin{enumerate}
\item[{\rm (i)}] For $\eta=\im z\geq 0$ the operator $\cB_{\lb,z}$ is a bounded operator on $\Kk_1$.
Furthermore, the map
\begin{equation}
(\lambda,E,\eta,\vec{g})\mapsto \Tt\cB_{\lambda,E+i\eta} \vec{g}^S
\end{equation} 
is a continuous map
from $\RR\times\RR\times [0,\infty)\times \Kk^s_\infty$ to $\Kk^s_\infty$.

\item[{\rm (ii)}] $\vxi_{\lb,z} \in {\Kk}^s_\infty$ for all $\lb \in \RR$ and $z=E+i\eta$ with 
$\eta >0$.   The map $(\lb,E, \eta) \to \vxi_{\lb,E +i\eta}$ is continuous from
$\RR \times \RR \times (0,\infty)$ to $ {\Kk}^s_\infty$.

\item[{\rm (iii)}]  If $E\,\in\,\hat I_{A,S}$, then
 $\vxi_{0,E} \in {\Kk}^s_\infty$ and
\begin{equation}
\lim_{\eta \downarrow 0} \vxi_{0,E+i\eta}\; = \;\vxi_{0,E} \;\;\;\mbox{in}\;\; {\Kk}^s_\infty   \;. 
\end{equation}

\item[{\rm (iv)}]  {The equality} \eqref{eq-xi-recursion} can be rewritten as a fixed point equation in 
${\Kk}^s_\infty$: 
\begin{equation}\label{eq-fpx}
\vxi_{\lb,z}  \; = \;{\Tt}{\cB}_{\lb,z} (\vxi_{\lb,z})^S \;,
\end{equation}
valid for all $\lb \in \RR$ and $z=E+i\eta$ with $\eta >0$, and also valid for  $\lb=0$ and
 $z=E$ with  $E\in \hat I_{A,S}$. 
 \end{enumerate}
\end{prop}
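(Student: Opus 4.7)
The plan is to mirror the proof of Proposition~\ref{zeta} verbatim, with the doubled-variable spaces $\Kk,\Kk_1,\Kk_\infty$ replacing $\Hh,\Hh_1,\Hh_\infty$, the operator $\Tt$ replacing $T$, $\cB_{\lb,z}$ replacing $B_{\lb,z}$, and $\vxi$ replacing $\vze$. For part (i), I would first argue that $\cB_{\lb,z}$ is a bounded multiplication operator on $\widehat\Kk_1$ (leaving $\Kk_1$ invariant) by exactly the argument used for $B_{\lb,z}$ in \cite[Proposition~3.2]{KS}: assumption (V) ensures that all partial derivatives of $h$ are bounded, so all mixed derivatives $D^{(+)}_{\bar\aaa,\aaa}D^{(-)}_{\bar\bbb,\bbb}$ of the kernel of $\cB_{\lb,z}$ are bounded functions on $\Sym^+(m)\times\Sym^+(m)$, with continuous dependence on $(\lb,E,\eta)\in\RR\times\RR\times[0,\infty)$. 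Assumption (S1) then makes each component of $\vec g^{\,S}$ a product of at least two factors, so Lemma~\ref{lemma-prod} applied in $\Kk$ yields continuity of $\vec g\mapsto \vec g^{\,S}$ from $\Kk_\infty^s$ to $\Kk_1^s$. Composing with $\Tt:\Kk_1\to \Kk_\infty$ (bounded by the $\Tt$-part of Lemma~\ref{lem-op-T}~(ii), as stated after it in the excerpt) gives the claimed continuity of $(\lb,E,\eta,\vec g)\mapsto \Tt\cB_{\lb,E+i\eta}\vec g^{\,S}$.

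For part (ii), I would use that for $\eta>0$ and any realization of the potential, $\|G_\lb^{(q)}(E+i\eta)\|\leq 1/\eta$, so $\cB_{\lb,E+i\eta}$ multiplies by a kernel with uniform Gaussian decay on compact $(\lb,E,\eta)$-neighborhoods with $\eta$ bounded away from $0$. Hence $\cB_{\lb,E+i\eta}(\vxi_{\lb,E+i\eta})^{S}\in\Kk_1^s$ and its dependence on $(\lb,E,\eta)$ is continuous by dominated convergence; applying $\Tt$ and invoking the recursion \eqref{eq-xi-recursion} shows $\vxi_{\lb,E+i\eta}\in\Kk_\infty^s$ continuously in the parameters on $\RR\times\RR\times(0,\infty)$. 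For part (iii), the point is that $E\in \hat I_{A,S}\subset I_{A,S}$ guarantees that the diagonal matrix $A_E^{(q)}$ in \eqref{eq-def-AE} has strictly positive-definite imaginary part, so the pointwise limit
\[
\xi_{0,E}^{(q)}(\bvp_+^{\ot},\bvp_-^{\ot})=e^{\frac{i}{2}\Tr(A_E^{(q)}\bvp_+^{\ot}-\bar A_E^{(q)}\bvp_-^{\ot})}
\]
belongs to $\PE(-\tfrac{i}{2}A_E^{(q)})\otimes \PE(\tfrac{i}{2}\bar A_E^{(q)})\subset \Kk_\infty$, and all mixed derivatives of $\xi_{0,E+i\eta}^{(q)}-\xi_{0,E}^{(q)}$ are pointwise-bounded by a $\bvp$-integrable Gaussian uniformly for small $\eta\geq 0$. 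Dominated convergence yields $\vxi_{0,E+i\eta}\to\vxi_{0,E}$ in $\Kk^s$ and $\cB_{0,E+i\eta}(\vxi_{0,E+i\eta})^{S}\to \cB_{0,E}(\vxi_{0,E})^{S}$ in $\Kk_1^s$; applying $\Tt$ and using Lemma~\ref{lem-op-T}~(ii) gives convergence in $\Kk_\infty^s$.

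Part (iv) then follows almost formally: for $\eta>0$ the identity \eqref{eq-xi-recursion} already reads $\vxi_{\lb,z}=\Tt\cB_{\lb,z}(\vxi_{\lb,z})^S$ pointwise, and (i)--(ii) show both sides live in $\Kk_\infty^s$ and agree there; for $\lb=0,\,z=E\in\hat I_{A,S}$ one obtains the identity by taking $\eta\downarrow 0$ using (i) and (iii). The main obstacle I anticipate is in the proof of (i): the $\Kk$-version of the boundedness of $\cB_{\lb,z}$ on $\Kk_1$ is strictly more involved than the $\Hh$-version in \cite{KS} because the Leibniz rule \eqref{eq-Leibn0} now has to be applied simultaneously in the $\bvp_+$ and $\bvp_-$ variables and then combined with H\"older between the four $L^2$-factors arising from a product of two elements of $\Kk$; however, since (S1) forces at least two such factors, the same scheme as in the scalar case goes through, with only an increase in combinatorial bookkeeping.
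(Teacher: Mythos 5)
Your proposal is correct and takes essentially the same approach as the paper, which simply remarks that the result follows ``similarly, as in \cite[Proposition 4.2]{KS}'' by mirroring the proof of Proposition~\ref{zeta} with $\Kk$-spaces, $\Tt$, $\cB_{\lb,z}$ and $\vxi$ in place of their $\Hh$-counterparts. Your explicit filling-in of the details (boundedness of $\cB_{\lb,z}$ on $\Kk_1$ via assumption (V), the $\Kk$-version of Lemma~\ref{lemma-prod} via (S1), dominated convergence for parts (ii)--(iii), and passing to the limit $\eta\downarrow 0$ in \eqref{eq-xi-recursion} for part (iv)) matches exactly what the paper leaves implicit.
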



\section{Frechet derivative and its spectrum} \label{sec-frechet}


In this section we will analyze the fixed point equations \eqref{eq-fpz} and \eqref{eq-fpx} in more detail.
Recall that $\Delta(m,\ZZ_+)$ denotes the collection of  $m\times m$  upper triangular matrices with 
non-negative integer entries and for $J=(J_{j,k})_{j,k}\in \Delta(m,\ZZ_+)$ we defined $|J| = \sum_{j\leq k} J_{j,k}$.
Let 
\begin{equation}
\epsilon(\bvp^\ot)=e^{-\Tr(\bvp^\ot)}\;,\quad
\vec\epsilon\,(\bvp^\ot):=\left(\epsilon(\bvp^\ot),\epsilon(\bvp^\ot),\ldots,\epsilon(\bvp^\ot)\right)^\top\,\in\,[\PE(m)]^s\;
\end{equation}
and define the map $F: \RR\times\RR\times [0,\infty)\times \big(\Hh^{(0)}_\infty \big)^s \to \big(\Hh^{(0)}_\infty \big)^s$ by
\begin{equation}
 F(\lb,E,\eta,\vec f):=TB_{\lb,z} \left((\vec\epsilon +\vec f\,)^S\right)- \vec\epsilon\;.
\end{equation}
By Proposition~\ref{zeta} this is a continuous map and using Lemma~\ref{lem-op-T}~(iii) one obtains that indeed
$F(\lb,E,\eta,\vec f)\in\big(\Hh^{(0)}_\infty \big)^s$
for $\vec f\in \big(\Hh^{(0)}_\infty \big)^s$. Moreover, one finds 
\begin{equation}\label{eq-fxp}
\vze_{\lb,z} -\vec\epsilon \in \big(\Hh^{(0)}_\infty \big)^s \qtx{and}
F(\lb,E,\eta,\vze_{\lb,z} -\vec\epsilon)\;=\;\vze_{\lb,z}-\vec\epsilon
\end{equation}
where the second equation follows from \eqref{eq-fpz}.
The following Lemma corresponds to \cite[Lemma 5.1]{KS}.

\begin{lemma}\label{lem-DF}
\noindent {\rm (i)} The map $F$ is continuous and Frechet-differentiable w.r.t. $\vec{f}\in\Hh_\infty^s$. The derivative $F_{\vec{f}}$ for $\vec{f}\in\big(\Hh^{(0)}_\infty \big)^s$
is a bounded operator on 
$\big(\Hh^{(0)}_\infty \big)^s$ and extends naturally to a bounded operator on $\big(\Hh^{(0)} \big)^s$ which we will also denote as $F_{\vec{f}}$\\
\noindent {\rm (ii)} For $E\in \hat I_{A,S}$ let $C_E=F_{\vec{f}}(0,E,0,\vec{\ze}_{0,E}-\vec\epsilon)$, then $C_E^2$ is a compact operator on $\big(\Hh^{(0)}\big)^s$ and $\big(\Hh^{(0)}_\infty\big)^s$.\\
\noindent {\rm (iii)}
The spectrum of $C_E$ as an operator on the Hilbert space  $\big(\Hh^{(0)}\big)^s$ is given by the eigenvalues 
of the matrices $\blb_{J,E} S$ for $|J|\geq 1$, and the accumulation point $0$.
This means, denoting the spectrum of $C_E$ on $\big(\Hh^{(0)}\big)^s$ by $\sigma_{\Hh^{(0)}}$ one obtains
\begin{equation} \label{eq-lbj}
\sigma_{\Hh^{(0)}}(C_E)=\bigcup_{\substack{J\in\Delta(m,\ZZ_+)\\ |J|\geq 1}} \sigma(\blb_{J,E} S)\;\cup\;\{0\}\;
\end{equation}
where 
$\blb_{J,E}$ are the matrices as defined in \eqref{eq-def-blb}.  
In particular, by the definition of $\hat I_{A,S}$ one has for $E\in \hat I_{A,S}$
\beq \label{eq-spectrum-CE}
1 \notin \sigma_{\Hh^{(0)}}(C_E)
\eeq

\noindent {\rm (iv)} The spectrum of  $C_E$ as an operator on 
$\big(\Hh^{(0)}_\infty\big)^s$, denoted by $\sigma_{\Hh^{(0)}_\infty}(C_E)$,
is the same as  its spectrum as an operator on $\big(\Hh^{(0)}\big)^s$:  
\beq \label{eq-spectrum-CEinfty}
\sigma_{\Hh^{(0)}_\infty}(C_E) = \sigma_{\Hh^{(0)}}(C_E).
\eeq
\end{lemma}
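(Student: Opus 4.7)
The proof parallels Lemma 5.1 in \cite{KS}, with two essential new ingredients: (a) the matrix structure induced by $S$, which promotes the scalar Jacobian of \cite{KS} to the $s\times s$ matrix $\blb_{J,E}S$, and (b) the restriction to the codimension-one subspaces $\Hh^{(0)}$ and $\Hh^{(0)}_\infty$, which removes the would-be block $J=0$ from the spectrum and thereby obviates an additional assumption on $S$ that would otherwise be required.

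For part (i), I would compute the Frechet derivative of $\vec f\mapsto(\vec\epsilon+\vec f)^S$; its $p$-th component applied to $\vec h$ is
\[
\sum_{q=1}^{s} S_{p,q}(\epsilon+f_q)^{S_{p,q}-1}\!\!\prod_{r\neq q}(\epsilon+f_r)^{S_{p,r}}\,h_q\,,
\]
a product that contains at least one factor $(\epsilon+f_r)$ besides $h_q$ thanks to assumption {\rm (S1)}, which guarantees $S_p\ge 2$. By Lemma~\ref{lemma-prod} the product belongs to $\Hh_1$ when $\vec h\in\Hh^s$ and $\vec f\in\Hh^s_\infty$, and Lemma~\ref{lem-op-T}(ii) sends $\Hh_1$ continuously into $\Hh_\infty$; composing with multiplication by $B_{\lambda,z}$ (a bounded operator on $\Hh_1$ by Proposition~\ref{zeta}) shows that $F_{\vec f}$ is a bounded map $\Hh_\infty^s\to\Hh_\infty^s$ which extends to a bounded operator $\Hh^s\to\Hh_\infty^s\subset\Hh^s$. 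Since $B_{\lambda,z}$ multiplies by a function equal to $1$ at $\bvp^\ot=0$ and $T$ preserves the value at the origin (Lemma~\ref{lem-op-T}(iii)), $F_{\vec f}$ preserves $\Hh^{(0)}$ and $\Hh^{(0)}_\infty$.

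For parts (ii) and (iii), the key is to reduce $C_E$ to closed form at the explicit fixed point $\vze_{0,E}^{(q)}(\bvp^\ot)=\exp\!\bigl(\tfrac{i}{2}\Tr(A_E^{(q)}\bvp^\ot)\bigr)$. Using the recursion \eqref{eq-gm-S} one obtains $\sum_r S_{p,r}A_E^{(r)}+(E-A)=-[A_E^{(p)}]^{-1}$, which telescopes $B_{0,E}\prod_r(\vze_{0,E}^{(r)})^{S_{p,r}}$ into the single Gaussian $e^{-\frac{i}{2}\Tr([A_E^{(p)}]^{-1}\bvp^\ot)}$, and hence
\[
(C_E\vec h)_p \;=\; T\!\left[\sum_{q=1}^{s} S_{p,q}\,\frac{h_q}{\vze_{0,E}^{(q)}}\,e^{-\frac{i}{2}\Tr([A_E^{(p)}]^{-1}\bvp^\ot)}\right].
\]
I would then test $C_E$ on the family $\{P_J\vze_{0,E}^{(q)}\}_{J\in\Delta(m,\ZZ_+),\,q}$ with $P_J(\bvp^\ot):=\prod_{j\leq k}((\bvp^\ot)_{jk})^{J_{jk}}$. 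Writing each $P_J$ as an iterated derivative of a Gaussian in its matrix parameter and using the identity \eqref{eq-T-id}, a direct computation shows that $C_E$ acts on the span of $\{P_J\vze_{0,E}^{(q)}\}_{q=1}^{s}$ by the matrix $\blb_{J,E}S$ (cf.\ \eqref{eq-rel-blb-A}), up to terms in the span of $\{P_{J'}\vze_{0,E}^{(q)}\}_{|J'|<|J|}$. This yields an infinite block lower-triangular representation of $C_E$ graded by $|J|$, with diagonal blocks exactly $\blb_{J,E}S$. Since $\|\blb_{J,E}\|\leq K^{-|J|}$ as in \eqref{eq-J+J'>=2}, the blocks decay geometrically; a Hilbert--Schmidt-type bound then shows $C_E^2$ is compact on $\Hh^{(0)}$ and $\Hh^{(0)}_\infty$, and identifies the spectrum of $C_E$ on $\Hh^{(0)}$ as $\bigcup_{|J|\geq 1}\sigma(\blb_{J,E}S)\cup\{0\}$. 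Crucially, the restriction to $\Hh^{(0)}$ removes the would-be block $J=0$ (which would have contributed $\sigma(S)$), because $P_0\vze_{0,E}^{(q)}=\vze_{0,E}^{(q)}$ has value $1$ at $\bvp^\ot=0$ and so lies outside $\Hh^{(0)}$.

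Part (iv) then follows from the fact that all eigenfunctions $P_J\vze_{0,E}^{(q)}$ lie in $\PE(m)^s\subset\Hh^{(0)}_\infty\cap\Hh^{(0)}$; combined with compactness of $C_E^2$ on both spaces, this forces the nonzero point spectra to coincide, with $0$ as the only accumulation point. The main technical hurdle I anticipate is the spectrum computation in (iii): block-triangularity alone does not immediately determine the spectrum of an infinite-dimensional operator, so I would need to justify, via finite-rank compressions combined with a quantitative version of the geometric decay $\|\blb_{J,E}S\|<K^{2-|J|}$, that the off-diagonal (lower-degree) terms introduce no additional spectral values and that the diagonal block eigenvalues accumulate only at $0$. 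This is where the strategy of \cite[Lemma~5.1]{KS} must be most carefully enlarged to handle the matrix structure given by $S$.
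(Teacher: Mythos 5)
Your overall strategy matches the paper's: compute the Frechet derivative via the chain rule, exploit (S1) to invoke Lemma~\ref{lemma-prod}, test $C_E$ on the polynomial-times-Gaussian family $\{P_J\,\vze_{0,E}^{(q)}\}$, and read off a block-triangular representation with diagonal blocks $\blb_{J,E}S$. You also correctly identify the role of $\Hh^{(0)}$: it excises the $J=0$ block, which would otherwise contribute $\sigma(S)$ and force an extra hypothesis $\det(S-\one)\neq 0$. Part (i) is essentially the paper's argument.

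There are, however, two genuine gaps. First, your compactness argument for (ii) does not work as stated. You propose to deduce compactness of $C_E^2$ from the geometric decay $\|\blb_{J,E}S\|<K^{2-|J|}$ by ``a Hilbert--Schmidt-type bound.'' But the subspaces $\VV_J$ are neither orthogonal nor a genuine direct-sum decomposition of $\Hh^{(0)}$ (only $\bigcup_u\VV_u$ is dense), and the block representation has uncontrolled strictly lower-degree terms $p_{J,E}$ off the diagonal. Estimating only the diagonal blocks says nothing about the operator norm of the tail, so no Hilbert--Schmidt bound follows. The paper instead observes that, since $\vze_{0,E}\in\PE(m)^s$, the entries of $C_E^2$ are finite sums of operators of the form $T\,\Mm(f_1)\,T\,\Mm(f_2)$ with $f_1,f_2\in\PE(m)$, and then invokes the compactness of $\Mm(f_1)T\Mm(f_2)$ already proved in \cite[Lemma~5.1(i)]{KS}. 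That is the argument you need; the decay of the diagonal blocks plays no role in (ii).

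Second, you correctly flag that block-triangularity alone does not determine the spectrum of an infinite-dimensional operator, but you leave the hole open, gesturing at ``finite-rank compressions'' and the geometric decay. The paper closes it with an abstract lemma (Proposition~\ref{prop-appendix}): if some power $C^j$ is compact and $C$ leaves finite-dimensional subspaces $\VV_u$ invariant with $\bigcup_u\VV_u$ dense, then $\sigma(C)=\bigcup_u\sigma(C|_{\VV_u})\cup\{0\}$. The proof is short (density of the range of $C-c$ on $\bigcup_u\VV_u$ gives $\ker(C^*-\bar c)=0$, and compactness of $(C^*)^j$ upgrades this to invertibility). Once you have compactness of $C_E^2$ by the correct route, this lemma gives \eqref{eq-lbj} directly, and the geometric decay of the diagonal blocks is then only used elsewhere (in Theorem~\ref{main1}, to rule out eigenvalue $1$ for $|J|+|J'|\geq 2$), not in the spectrum identification here. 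So your proposal needs both the compactness input from \cite{KS} and the abstract spectral lemma to become a complete proof.
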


\noindent{\bf Proof.} 
(i) The derivative $F_{\vec{f}}$ can be written
as a matrix of operators. 
Considering the $p$-th entry of $F$, we get formally
\begin{equation}\label{eq-der-F0}
[F_{\vec{f}}]_{p,q}=\partial_{f_q} F_p = 
TB_{\lambda,z} \Mm\left(S_{p,q} \prod_{r=1}^s ((\epsilon+f_r)^{S_{p,r}-\delta_{q,r}})\right)
\end{equation}
Let us define
\begin{equation}\label{eq-def-diagf}
\diag(\vec{f}\,)\,=\,\diag(f_1,\ldots,f_s)\,=\,\left(\begin{smallmatrix}
                                                f_1 \\ & \ddots \\ & & f_s
                                               \end{smallmatrix}\right)
\end{equation}
which will be considered as an operator acting by matrix multiplication on a vector of functions.
Then, \eqref{eq-der-F0} can be written as
\begin{equation}\label{eq-der-F1}
 F_{\vec{f}}\,=\,TB_{\lambda,z} \Mm(\diag((\vec\epsilon+\vec{f}\,)^S)\,S\,\diag((\vec\epsilon+\vec{f}\,)^{-\one}))\,,
\end{equation}
where $\Mm(D)$ denotes the multiplication operator $\vec{g}\mapsto D\vec{g}$ for a matrix  valued function $D$.

Even despite the term $(\vec\epsilon+\vec f)^{-\one}$,
one does not divide by any of the components of $\vec\epsilon+\vec f$. The terms in the denominators always cancel.
This can be seen in \eqref{eq-der-F0}. 
Because $S_{p,r}-\delta_{q,r}=-1$ is equivalent to $r=q$ and
$S_{p,q}=0$, one finds $[F_{\vec f}]_{p,q}=0$ in this case. 
If $S_{p,q}\neq 0$, then $S_{p,q}-\delta_{q,r}\geq 0$ for all $r$.
By assumption (S1) the product on the right hand side of \eqref{eq-der-F0} 
has at least one factor in this case. Thus for $\vec f\in\big(\Hh^{(0)}_\infty\big)^s$, $[F_{\vec{f}}]_{p,q} $ is 
the composition of $T$ and a multiplication operator by an $\Hh_\infty$ function.
Therefore, by Lemma~\ref{lem-op-T} and Lemma~\ref{lemma-prod}, $[F_{\vec f}]_{p,q}$ defines
a bounded linear operator on $\Hh^{(0)}$ and on $\Hh^{(0)}_\infty$
for $\vec f \in \big(\Hh^{(0)}_\infty\big)^s$.
This implies the Frechet differentiability and claim (i) follows.

To get (ii) note that $C_E\vec f= TB_{0,E} \diag((\vze_{0,E})^S) S\, \diag((\vze_{0,E})^{-\one})\,\vec f$ and that
$C_E$ and $C_E^2$ can be seen as $s\times s$ matrices of operators.
Compactness of $C_E^2$ then follows from compactness of the matrix entries.
This can be proved completely analogous to Lemma~5.1~(i) in \cite{KS}. There one shows
that for functions $f_1,f_2 \in \PE(m)$ the operator $\Mm(f_1) T \Mm(f_1)$
is compact on $\Hh$ and $\Hh_\infty$. As $\vze_{0,E} \in (\PE(m))^s$ for $E\in \hat I_{A,S}$,
the entries of $C_E^2$ are sums of operators of the form
$T\Mm(f_1)T\Mm(f_2)$ with $f_1, f_2\in\PE(m)$.

To obtain (iii), let $g\in C_n^\infty(\Sym^+(m))$ and let us start with the identity   
\begin{align}
 C_E\;  (g \,\ze^{(q)}_{0,E} \vec{e}_q )&=
TB_{0,E} \Big(\diag((\vze_{0,E})^S) \,S \,\diag((\vze_{0,E})^{-\one}) g \;\diag(\vze_{0,E}) \vec{e}_q \Big) \notag \\
&=TB_{0,E}\, \Big( g \,\diag((\vze_{0,E})^S) \,S\, \vec{e}_q \Big) \label{eq-start-id}
\end{align}
where $\vec{e}_q \in \CC^s$ denotes the $q$-th canonical basis vector, $(\vec{e}_q)_p = \delta_{p,q}$.
Using $g(\bvp^\ot)=\exp({\frac12 t \Tr(M\bvp^\ot)})$,  \eqref{eq-start-id} and 
\eqref{eq-T-id} imply for the $p$-th component
\begin{align}
&\left[C_E\; e^{\frac i2 t \Tr(M\bvp^\ot)} \zeta^{(q)}_{0,E}(\bvp^\ot) \vec{e}_q\right]_p \notag 
= S_{p,q} T\left( e^{\frac i2 \Tr\big((E-A+\sum_{r=1}^s S_{p,r} A^{(r)}_E-tM)\bvp^\ot\big)}\right) \\ 
&\qquad\qquad\qquad\qquad\qquad\qquad = S_{p,q} \,e^{\frac i2  \Tr\big(\big(A-E-\sum_{r=1}^s S_{p,r} A^{(r)}_E-tM\big)^{-1} \bvp^\ot\big)} \label{eq-CE}\;.
\end{align}

Let $\Pp_u(\bvp^\ot)$ denote the set of homogeneous
polynomials of degree $u$ in the entries of $\bvp^\ot$, together with the zero polynomial to make it a vector-space.
{Furthermore, let $\Pp_{\leq u}(\bvp^\ot)$ and $\Pp_{<u}(\bvp^\ot)$ denote the polynomials in the entries of $\bvp^\ot$ of degree
smaller or equal to $u$ and strictly less than $u$, respectively.}

Using \eqref{eq-ze00} and \eqref{eq-T-id}, a Taylor expansion with respect to $t$ of the right hand side of \eqref{eq-CE} gives
\begin{align}\notag
& e^{\frac{i}{2} \Tr\left((A-E-\sum_{r=1}^s S_{p,r} A^{(r)}_E-tM)^{-1} \bvp^\ot\right)} = 
e^{\frac{i}{2} \Tr(A^{(p)}_E\bvp^\ot)}
e^{\frac{i}{2} \sum_{u=1}^\infty t^u \Tr\left((A^{(p)}_E M)^u (A^{(p)}_E) \bvp^\ot\right)} \\
 & \qquad =
\zeta^{(p)}_{0,E}(\bvp^\ot)\left[1+\sum_{u=1}^\infty \frac{(\frac{i}2 t)^u}{u!}\left(\left[\Tr(A^{(p)}_E M A^{(p)}_E \bvp^\ot)\right]^u 
+p_{u,M}(\bvp^\ot) \right) \right],
\end{align}
where $p_{u,M}\in\Pp_{<u}(\bvp^\ot)$.
Performing a Taylor expansion of the left hand side of
\eqref{eq-CE} and comparing terms leads to
\begin{align}\label{eq-CE-2}
 &\left[C_E\left(\Tr(M \bvp^\ot)\right)^u  \diag(\vze_{0,E}(\bvp^\ot)) \vec{e}_q \right]_p \\ \notag
&\qquad=S_{p,q} \left[\left(\Tr(A_E M A_E \bvp^\ot)\right)^u \,+\,p_{u,M}(\bvp^\ot) \right]\zeta^{(p)}_{0,E}(\bvp^\ot)\;.
\end{align}
Since the natural projection from $\Pp_{\leq u}(\bvp^\ot)$ onto $\Pp_{u}(\bvp^\ot)$ as well as the operator $C_E$ are linear,
the map $[\Tr(M\bvp^\ot)]^u \mapsto [\Tr(A_E M A_E \bvp^\ot)]^u$, varying $M$, 
can be extended to a linear map on $\Pp_u(\bvp^\ot)$.
Using all real symmetric matrices $M$, the polynomials of the form $[\Tr(M\bvp^\ot)]^u$ 
span $\Pp_u(\bvp^\ot)$. Hence the extension is unique.
To expand these homogeneous polynomials let us define for $J\in\Delta(m,\ZZ_+)$
\beq
P_J(\bvp^\ot):=\prod_{\substack{j,k\in\{1,\ldots,m\}\\j\leq k}}
\left[(\bvp^\ot)_{jk}\right]^{J_{jk}}\,.
\eeq
Then one has
\beq \label{eq-expand1}
[\Tr(M\bvp^\ot)]^u=\sum_{\substack{j_1,\ldots,j_s\\k_1,\ldots,k_s}} \prod_{i=1}^u M_{j_i,k_i} 
(\bvp^\ot)_{j_i,k_i}
 = \sum_{\substack{J\in\Delta(m,\ZZ_+)\\|J|=u}} \!\!c(M,J) P_J(\bvp^\ot)\,,
\eeq
where the latter equation defines the coefficients $c(M,J)$.
Similarly, using that $A^{(p)}_E$ is diagonal as well as \eqref{eq-rel-blb-A} one obtains
\begin{align} \label{eq-expand2}
[\Tr( A^{(p)}_E M A^{(p)}_E \bvp^\ot)]^u &=\sum_{\substack{j_1,\ldots,j_u\\k_1,\ldots,k_u}} \prod_{i=1}^u 
M_{j_i,k_i}\,(A^{(p)}_E)_{j_i,j_i} (A^{(p)}_E)_{k_i,k_i} (\bvp^\ot)_{j_i,k_i} \nonumber \\
&=\sum_{\substack{J\in\Delta(m,\ZZ_+)\\|J|=u}} \theta^{(p)}_{J,E} \, c(M,J) P_J(\bvp^\ot)\,.
\end{align}
Thus, we conclude that
\begin{equation}
 \left[C_E P_J(\bvp^\ot)\,\zeta^{(q)}_{0,E}(\bvp^\ot)\vec{e}_q\right]_p  =  
\zeta^{(p)}_{0,E}(\bvp^\ot)\left[P_J(\bvp^\ot)\theta^{(p)}_{J,E} S_{p,q} \,+\, ( p_{J,E})_{p,q}(\bvp^\ot)\right] \,,
\end{equation}
giving
\begin{equation}\label{eq-CE-fin}
 C_E \,\diag(\vze_{0,E})P_J \vec{e}_q =
\diag(\vze_{0,E}) P_J\,
\blb_{J,E} S \vec{e}_q + \diag(\vze_{0,E}) p_{J,E} \vec{e}_q
\end{equation}
where $ p_{J,E} \in [\Pp_{<|J|}(\bvp^\ot)]^{s\times s}$. 
Now for $J\in\Delta(m,\ZZ_+)$ define the vector spaces 
\beq
\VV_J:=\diag(\vze_{0,E})P_J \CC^s =\{\diag(\vze_{0,E})P_J \vec{v}\,:\,\vec{v}\in\CC^s\}\subset[\PE(m)]^s\subset\Hh_\infty^s\;.
\eeq
For $|J|\geq 1$ one finds $\VV_J\subset [\PE^{(0)}(m)]^s\subset (\Hh^{(0)}_\infty)^s$.
Given $u\in\ZZ_+$ define
\beq
\VV_u:=\bigoplus_{J:1\leq |J|\leq u} \VV_J \;.
\eeq
Note that $\Hh^s$ can be written as the direct vector sum $\Hh^s=\VV_0 \oplus \big(\Hh^{(0)}\big)^s$.
Using Lemma~\ref{lem-op-T}~(iii) one finds that $C_E$ leaves $\big(\Hh^{(0)}\big)^s$ invariant. Hence, one obtains
for $|J|\geq 1$ that $p_{J,E} (0)=0$ implying
$\diag(\vze_{0,E})\,p_{J,E}\, \vec e_q \,\in\,\VV_{|J|-1}\subset\VV_{|J|}$.
Therefore, $C_E$ leaves each of the spaces $\VV_u$ invariant. The spaces $\VV_u$ are nested, $\VV_u\subset\VV_{u+1}$ and 
using the basis $\diag(\vze_{0,E})P_J \vec e_{q}$ for $1\leq |J|\leq u$ and $q=1,\ldots,s$, ordered first by $J$ and then by $q$,
we see from \eqref{eq-CE-fin} that $C_E$ restricted to $\VV_u$ is represented by an upper block-triangular matrix consisting of
$s\times s$ matrix blocks. Moreover, the $s\times s$ blocks along the diagonal are given by the matrices
$\blb_{J,E} S$. Therefore, the eigenvalues of $C_E$ restricted to $\VV_u$ are exactly the eigenvalues of
the matrices $\blb_{J,E}S$ for $1\leq |J|\leq u$, i.e. $\sigma(C_E|_{\VV_u})=\bigcup_{1\leq|J|\leq u} \sigma(\blb_{J,E}S)$.
Using Lemma~\ref{lem-PEB} one obtains that $\bigcup_{u\geq 1} \VV_u$ is dense in $\big(\Hh^{(0)}\big)^s$.
As $C_E^2$ is compact, \eqref{eq-lbj} follows from Proposition~\ref{prop-appendix}.

 For part (iv) note that $\sigma_{\Hh^{(0)}_\infty}(C_E)\subset \sigma_{\Hh^{(0)}}(C_E)$ by compactness of
$C_E^2$ in $\Hh_\infty$. Equality follows as one finds eigenfunctions corresponding to the eigenvalues of $\blb_{J,E} S$ in
$\VV_{|J|}$ which is a subspace of $\Hh_\infty$.
\hfill $\Box$

\vspace{.2cm}

Similarly to above, define 
\begin{equation}
\vec\chi(\bvp_+^\ot,\bvp_-^\ot):=\left(\epsilon(\bvp_+^\ot)\epsilon(\bvp_-^\ot),\ldots,
\epsilon(\bvp_+^\ot)\epsilon(\bvp_-^\ot)\right)^\top\,\in\,[\PE(m)\otimes\PE(m)]^s\;
\end{equation}
and define the map $Q: \RR\times\RR\times [0,\infty)\times \big(\Kk^{(0)}_\infty \big)^s \to \big(\Kk^{(0)}_\infty \big)^s$ by
\begin{equation}
 Q(\lb,E,\eta,\vec g):=\Tt \Bb_{\lb,z} \left((\vec\chi +\vec g\,)^S\right)- \vec\chi\;.
\end{equation}
By Proposition~\ref{xita} this is a continuous map and using the definition of $\Tt$ and Lemma~\ref{lem-op-T}~(iii)
one obtains that indeed
$Q(\lb,E,\eta,\vec g)\in\big(\Kk^{(0)}_\infty \big)^s$
for $\vec g\in \big(\Hh^{(0)}_\infty \big)^s$. Moreover, one finds
\begin{equation}\label{eq-fxp-xi}
\vxi_{\lb,z} -\vec\chi \in \big(\Kk^{(0)}_\infty \big)^s \qtx{and}
Q(\lb,E,\eta,\vxi_{\lb,z} -\vec\chi)\;=\;\vxi_{\lb,z}-\vec\chi\;.
\end{equation}

\begin{lemma} \label{lem-DQ}
 {\rm (i)} The map $Q$ is Frechet-differentiable w.r.t. $\vec{g}\in\big(\Kk^{(0)}_\infty\big)^s$. The derivative $Q_{\vec g}$ is a bounded linear operator on $\big(\Kk^{(0)}_\infty\big)^s$ and extends naturally to a bounded operator on $\big(\Kk^{(0)}\big)^s$. \\
{\rm (ii)} For $E\in \hat I_{A,S}$ let $\Cc_E=Q_{\vec{g}}(0,E,0,\vxi_{0,E}-\vec\chi)$, then $\Cc_E^2$ is a compact operator on $\big(\Kk^{(0)}\big)^s$ and $\big(\Kk_\infty^{(0)}\big)^s$.\\
{\rm (iii)} The spectrum of $\Cc_E$ as an operator on the Hilbert space $\Kk^s$  is given by the eigenvalues of the matrices $\blb_{J,E}\blb^*_{J',E} S$ and the accumulation point $0$, i.e.
denoting the spectrum of $\Cc_E$ on $\big(\Kk^{(0)}\big)^s$ by $\sigma_{\Kk^{(0)}}(\Cc_E)$ one finds
\begin{equation}\label{eq-spectrum-Cc_E}
 \sigma_{\Kk^{(0)}}(\Cc_E)=\bigcup_{\substack{J,J'\in\Delta(m,\ZZ_+)\\|J|+|J'|\geq 1}} \sigma(\blb_{J,E}\blb^*_{J',E}S) \cup\{0\}\,
\end{equation}
where $\blb_{J,E}, \blb_{J',E}$ are the matrices as defined in \eqref{eq-def-blb}.
By definition of $\hat I_{A,S}$ one finds for $E\in \hat I_{A,S}$
\beq
1 \,\not\in\,\sigma_{\Kk^{(0)}}(\Cc_E)\,\label{eq-spectrum-CcE}\,.
\eeq
\noindent {\rm (iv)} The spectrum of  $\Cc_E$ as an operator on $\big(\Kk^{(0)}_\infty\big)^s$, denoted by 
$\sigma_{\Kk^{(0)}_\infty}(\Cc_E)$, is the same
as  its spectrum as an operator on $\Kk^s$:  
\beq \label{eq-spectrum-CCEinfty}
\sigma_{\Kk^{(0)}_\infty}(\Cc_E) = \sigma_{\Kk^{(0)}}(\Cc_E).
\eeq

\end{lemma}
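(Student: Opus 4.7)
The plan is to mirror the proof of Lemma~\ref{lem-DF} line by line, keeping careful track of the doubled variables $\bvp_+,\bvp_-$ and the complex conjugation that appears in the $\bvp_-$ slot of $\xi^{(q)}_{0,E}$.

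For part (i), I would compute the Frechet derivative formally just as in \eqref{eq-der-F1}, obtaining
\begin{equation*}
Q_{\vec g} \;=\; \Tt\,\Bb_{\lb,z}\,\Mm\!\big(\diag((\vec\chi+\vec g)^S)\, S\, \diag((\vec\chi+\vec g)^{-\one})\big),
\end{equation*}
where the apparent divisions again cancel exactly as in the argument after \eqref{eq-der-F1}: whenever $S_{p,q}=0$ the $(p,q)$-entry is zero, and when $S_{p,q}\geq 1$ assumption {\rm (S1)} guarantees at least one surviving factor in the product. Then Lemma~\ref{lem-op-T}~(iii) (unitarity/boundedness of $\Tt$ on $\Kk^{(0)}$ and $\Kk^{(0)}\to\Kk^{(0)}_\infty$ boundedness) and the product rule Lemma~\ref{lemma-prod} give boundedness of each entry $[Q_{\vec g}]_{p,q}$ on $\Kk^{(0)}$ and $\Kk^{(0)}_\infty$. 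Compactness of $\Cc_E^2$ in (ii) follows from the analogue of the compactness result used in \cite{KS}: the entries of $\Cc_E^2$ are finite sums of operators of the form $\Tt\Mm(g_1)\Tt\Mm(g_2)$ with $g_1,g_2\in \PE(m)\otimes\PE(m)$, since $\vxi_{0,E}$ lies pointwise in $[\PE(m)\otimes\PE(m)]^s$ by \eqref{eq-ze00}.

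For part (iii), which is the main step, the plan is to test $\Cc_E$ on the analogue of the vectors used in \eqref{eq-start-id}, namely on elements $\exp(\tfrac{i}{2}t\Tr(M\bvp_+^\ot))\exp(-\tfrac{i}{2}s\Tr(N\bvp_-^\ot))\,\xi^{(q)}_{0,E}\vec{e}_q$. Using the identity \eqref{eq-T-id} in both the $+$ and $-$ slots of $\Tt=T\otimes T$, together with the fact that the $\bvp_-$-exponent of $\xi^{(q)}_{0,E}$ carries $-\overline{A^{(q)}_E}$, the same computation that led to \eqref{eq-CE} yields an expression of the form
\begin{equation*}
S_{p,q}\,\exp\!\big(\tfrac{i}{2}\Tr((A\!-\!E\!-\!{\textstyle\sum_r} S_{p,r}A^{(r)}_E-tM)^{-1}\bvp_+^\ot)\big)\,\overline{\exp(\cdots)}
\end{equation*}
in the $\bvp_-$ slot with $-s\overline{N}$ playing the role of $-tM$. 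A joint Taylor expansion in $(t,s)$, together with \eqref{eq-expand1}--\eqref{eq-expand2} applied in each variable, produces on the $(p,q)$-component of the target the polynomial $P_J(\bvp_+^\ot)P_{J'}(\bvp_-^\ot)$ with coefficient $\theta^{(p)}_{J,E}\,\overline{\theta^{(p)}_{J',E}}\,S_{p,q}$, plus lower-order terms in $|J|+|J'|$. Thus, defining
\begin{equation*}
\Vv_{J,J'}:=\diag(\vxi_{0,E})\,P_J(\bvp_+^\ot)P_{J'}(\bvp_-^\ot)\,\CC^s\;,
\end{equation*}
and ordering the basis by $|J|+|J'|$, the operator $\Cc_E$ acts in block upper-triangular form on $\bigoplus_{|J|+|J'|\leq u}\Vv_{J,J'}$ with diagonal blocks exactly $\blb_{J,E}\blb^*_{J',E}S$. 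For $|J|+|J'|\geq 1$ one has $\Vv_{J,J'}\subset(\Kk^{(0)})^s\cap(\Kk^{(0)}_\infty)^s$, and the density of $\bigoplus_{J,J'}\Vv_{J,J'}$ in $(\Kk^{(0)})^s$ follows from Lemma~\ref{lem-PEB} applied to the tensor product space. Compactness of $\Cc_E^2$ together with the appendix's Proposition~\ref{prop-appendix} then yields \eqref{eq-spectrum-Cc_E}.

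Part (iv) follows from the same compactness argument as in Lemma~\ref{lem-DF}~(iv): the inclusion $\sigma_{\Kk^{(0)}_\infty}(\Cc_E)\subset \sigma_{\Kk^{(0)}}(\Cc_E)$ comes from compactness of $\Cc_E^2$ on $\Kk^{(0)}_\infty$, and the reverse inclusion holds since eigenvectors for each eigenvalue of $\blb_{J,E}\blb^*_{J',E}S$ can be constructed inside $\Vv_{J,J'}\subset(\Kk^{(0)}_\infty)^s$. The hard part is the bookkeeping in step (iii): one must verify that the complex conjugation on the $\bvp_-$ side passes through $T$ correctly so that $\theta^{(p)}_{J',E}$ really appears conjugated, producing $\blb^*_{J',E}$ rather than $\blb_{J',E}$ in the diagonal blocks, and that restriction to the codimension-one subspace $\Kk^{(0)}$ is precisely what removes the eigenvalue $J=J'=0$ corresponding to $1\in\sigma(S)$ being replaced by the condition $|J|+|J'|\geq 1$ in \eqref{eq-def-I_AS}.
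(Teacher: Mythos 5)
Your proposal mirrors the paper's own proof step by step: the same Frechet derivative formula, the same compactness argument via operators $\Tt\Mm(g_1)\Tt\Mm(g_2)$, the same test functions $e^{\frac{i}{2}\Tr(tM_+\bvp_+^\ot - t'M_-\bvp_-^\ot)}\diag(\vxi_{0,E})\vec e_q$ with a joint Taylor expansion in $(t,t')$, the same invariant subspaces (the paper calls them $\WW_{J,J'}$, you call them $\Vv_{J,J'}$) with block upper-triangular structure and diagonal blocks $\blb_{J,E}\blb^*_{J',E}S$, and the same invocation of Lemma~\ref{lem-PEB} plus Proposition~\ref{prop-appendix}. The only slip is a citation: the pointwise description of $\vxi_{0,E}$ comes from the display directly after \eqref{eq-ze00}, not from \eqref{eq-ze00} itself, which is about $\vze_{0,E}$.
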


\noindent {\bf Proof.} The proof is completely analogous to the one for Lemma~\ref{lem-DF}.
For (i) and (ii) note that the Frechet derivative of $Q$ is given by
\begin{equation}
 Q_{\vec{g}}\,=\,\Tt\cB_{\lambda,z} \Mm(\diag((\vec\chi+\vec{g})^S)\,S\,\diag((\vec\chi+\vec{g})^{-\one}))\;.
\end{equation}
For (iii) one starts with the identity
\begin{align}\label{eq-Cc}
 &\left[\Cc_E e^{\frac{i}2 \Tr(tM_+\bvp_+^\ot-t'M_-\bvp_-^\ot)}\diag(\vxi_{E,0}) \vec{e}_q\right]_p \\ \notag
&\qquad = e^{\frac{i}2 \Tr\left(\bvp_+^\ot(A-E-\sum_{r} S_{p,r}A^{(r)}_E-tM_+)^{-1}\right)}
e^{-\frac i2 \Tr\left(\bvp_-^\ot(A-E-\sum_{r} \bar S_{p,r} \bar A^{(r)}_E-t'M_- )^{-1}\right)}\;.
\end{align}
Analogously to above let us define
\begin{equation}
 P_J^+(\bvp_+^\ot,\bvp_-^\ot)=P_J(\bvp_+^\ot)\;,\qquad
 P_J^-(\bvp_+^\ot,\bvp_-^\ot)=P_J(\bvp_-^\ot)\;
\end{equation}
for $J,J'\in\Delta(m,\ZZ_+)$.
Performing a multi-variable Taylor expansion with respect to $t, t'$ in \eqref{eq-Cc}, 
comparing the terms and following similar steps as above
one obtains
\begin{align}
 \Cc_E \,\diag(\vxi_{E,0})P^+_J P_{J'}^-\vec{e}_q=
\diag(\vxi_{0,E})\left[P_J^+ P_{J'}^- \blb_{J,E} \blb^*_{J',E} S+ p_{J,J',E}  \right] \vec{e}_q\;,
\end{align}
where ${p}_{J,J',E}\in[\Pp_{<|J|+|J'|}(\bvp_+^\ot,\bvp_-^\ot)]^{s\times s}$. Setting
\begin{equation}
 \WW_{J,J'}=\diag(\vxi_{E,0})P^+_J P_{J'}^-\CC^s\;,\quad
\WW_u=\bigoplus_{J,J': 1\leq |J|+|J'|\leq u} \WW_{J,J'}
\end{equation}
one obtains that $\Cc_E$ leaves $\WW_u$ invariant and the restriction can be written as a block upper triangular matrix using $s\times s$ blocks.
The blocks along the diagonal are given by $\blb_{J,E}\blb_{J',E}^* S$ and hence
$$
\sigma(\Cc_E|_{\WW_u})=\bigcup_{J,J':1\leq|J|+|J'|\leq u} \sigma(\blb_{J,E} \blb_{J',E}^* S)\;.
$$
Again, using Lemma~\ref{lem-PEB} one realizes that $\bigcup_{u\geq 1} \WW_u$ is dense in $\big(\Kk^{(0)}\big)^s$, hence 
\eqref{eq-spectrum-Cc_E} follows from Proposition~\ref{prop-appendix}.

(iv) follows by the same arguments as in Lemma~\ref{lem-DF}.
 \hfill $\Box$

   
\section{Proof of the main Theorems} \label{sec-proofs}


The most important ingredient is the following proposition.

\begin{prop}  \label{xize}
For any $E\in \hat I_{A,S}$ there exist $\lb_E > 0$ and $\varepsilon_E >0$, 
 such that the maps 
\begin{equation}
(\lb, E',\eta) \in (-\lb_E,\lb_E)\times (E - \ve_E,E + \ve_E) \times (0, \infty) \;\mapsto\;
 \vze_{\lb,E'+i\eta} \in {\Hh}^s_\infty  \label{xizeze}
\end{equation}
and 
\begin{equation}
(\lb, E',\eta) \in (-\lb_E,\lb_E)\times (E - \ve_E,E + \ve_E) \times (0, \infty) \;\mapsto\;
 \vxi_{\lb,E'+i\eta} \in {\Kk}^s_\infty  \label{xizexi}
\end{equation}
have  continuous extensions to $ (-\lb_E,\lb_E)\times (E - \ve_E,E + \ve_E) \times [0, \infty)$
satisfying \eqref{eq-fpz} and \eqref{eq-fpx}, respectively.   
\end{prop}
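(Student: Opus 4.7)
The plan is to invoke a Banach-space Implicit Function Theorem (IFT) applied to the maps $G := F - \mathrm{id}$ and $R := Q - \mathrm{id}$ already prepared in Section~\ref{sec-frechet}, centered at the base point $(0, E, 0)$. Consider first
\[
G(\lb, E', \eta, \vec f) \;:=\; F(\lb, E', \eta, \vec f) - \vec f
\]
as a map from $\RR \times \RR \times [0, \infty) \times \big(\Hh^{(0)}_\infty\big)^s$ into $\big(\Hh^{(0)}_\infty\big)^s$. By Proposition~\ref{zeta}(i) this map is jointly continuous, including at $\eta = 0$, and by Lemma~\ref{lem-DF}(i) it is Frechet differentiable in $\vec f$ with $G_{\vec f} = F_{\vec f} - \one$ depending continuously on all arguments; its zeros are exactly the solutions of the fixed point equation \eqref{eq-fpz}.

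At the base point $(0, E, 0, \vze_{0,E} - \vec\epsilon)$, the derivative $G_{\vec f}$ equals $C_E - \one$. By Lemma~\ref{lem-DF}(iii)--(iv), the condition \eqref{eq-cond-blb} that defines $\hat I_{A,S}$ is precisely what ensures $1 \notin \sigma_{\Hh^{(0)}_\infty}(C_E)$, so $C_E - \one$ is boundedly invertible on $\big(\Hh^{(0)}_\infty\big)^s$. The Banach-space IFT then produces $\lb_E, \ve_E > 0$, an open ball $B$ in $\big(\Hh^{(0)}_\infty\big)^s$ around $\vze_{0,E} - \vec\epsilon$, and a continuous map $\vec f_* : (-\lb_E, \lb_E) \times (E - \ve_E, E + \ve_E) \times [0, \infty) \to B$ with $\vec f_*(0, E, 0) = \vze_{0,E} - \vec\epsilon$ satisfying $G(\lb, E', \eta, \vec f_*(\lb, E', \eta)) = 0$, and $\vec f_*(\lb, E', \eta)$ is the unique zero of $G(\lb, E', \eta, \cdot)$ in $B$.

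The remaining task is to identify $\vec f_*(\lb, E', \eta)$ with $\vze_{\lb, E'+i\eta} - \vec\epsilon$ for $\eta > 0$ in the constructed neighborhood. By \eqref{eq-fxp} the latter also solves $G = 0$. Let $U := (-\lb_E, \lb_E) \times (E - \ve_E, E + \ve_E) \times [0, \infty)$ and set
\[
N \;:=\; \big\{(\lb, E', \eta) \in U \cap \{\eta > 0\} \,:\, \vec f_*(\lb, E', \eta) = \vze_{\lb, E'+i\eta} - \vec\epsilon \big\},
\]
noting that $U \cap \{\eta > 0\}$ is connected. Proposition~\ref{zeta}(iii), combined with continuity of $\vec f_*$, places both $\vec f_*(0, E, \eta)$ and $\vze_{0, E+i\eta} - \vec\epsilon$ inside $B$ for all small $\eta > 0$; IFT uniqueness therefore gives $(0, E, \eta) \in N$ for those $\eta$, so $N \neq \emptyset$. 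Proposition~\ref{zeta}(ii) supplies continuity of $\vze$ on $\{\eta > 0\}$, which together with the continuity of $\vec f_*$ and openness of $B$ makes $N$ both open (uniqueness persists under small perturbations) and closed (both sides are continuous) in $U \cap \{\eta > 0\}$. Connectedness then forces $N = U \cap \{\eta > 0\}$, so $\vec f_* + \vec\epsilon$ coincides with $\vze_{\lb, E'+i\eta}$ for $\eta > 0$ and provides its continuous extension to $\eta = 0$; passing to limits in $G = 0$ preserves \eqref{eq-fpz}.

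The proof of the corresponding statement \eqref{xizexi} for $\vxi$ is entirely parallel, working instead on $\big(\Kk^{(0)}_\infty\big)^s$ with the map $R(\lb, E', \eta, \vec g) := Q(\lb, E', \eta, \vec g) - \vec g$, base point $\vxi_{0,E} - \vec\chi$, Frechet derivative $\Cc_E - \one$, and invoking Proposition~\ref{xita}, Lemma~\ref{lem-DQ}, \eqref{eq-fxp-xi} and \eqref{eq-fpx}; condition \eqref{eq-cond-blb} for all $|J| + |J'| \geq 1$ ensures $1 \notin \sigma_{\Kk^{(0)}_\infty}(\Cc_E)$. The main technical point --- and the very reason for the codimension-one subspaces $\Hh^{(0)}, \Kk^{(0)}$ introduced in Section~\ref{sec-ban} --- is precisely this invertibility of the Frechet derivative at the base point: on the full $\Hh^s, \Kk^s$ the eigenvalue $1$ would be unavoidable through the $|J| = 0$ (respectively $|J| + |J'| = 0$) block, corresponding to the trivial $\vec\epsilon$ (respectively $\vec\chi$) fixed direction. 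Passing to the shifts $\vze - \vec\epsilon$ and $\vxi - \vec\chi$ projects out exactly this direction, reducing the IFT hypothesis to precisely the non-vanishing determinant condition \eqref{eq-cond-blb} that defines $\hat I_{A,S}$.
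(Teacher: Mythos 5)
Your proof is correct and takes essentially the same approach as the paper: a Banach-space Implicit Function Theorem applied to $\hat F = F - \mathrm{id}$ (resp.\ $Q - \mathrm{id}$) at the base point $(0,E,0,\vze_{0,E}-\vec\epsilon)$ (resp.\ $\vxi_{0,E}-\vec\chi$), with invertibility of $C_E - \one$ (resp.\ $\Cc_E - \one$) on the codimension-one subspaces furnished by Lemma~\ref{lem-DF} (resp.\ Lemma~\ref{lem-DQ}) together with the definition of $\hat I_{A,S}$, followed by an identification of the IFT branch with $\vze - \vec\epsilon$ (resp.\ $\vxi - \vec\chi$) via IFT uniqueness. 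Your open--closed connectedness argument spells out the identification step more explicitly than the paper's terse one-sentence appeal; the only small imprecision is that the IFT yields $\vec f_*$ only on a local $\eta$-window $[0,\eta_E)$, not all of $[0,\infty)$, so the connectedness argument should be run on $(-\lb_E,\lb_E)\times(E-\ve_E,E+\ve_E)\times(0,\eta_E)$ and the result then glued to the continuity of $\vze_{\lb,E'+i\eta}$ for $\eta\geq\eta_E$ supplied by Proposition~\ref{zeta}(ii).
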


\begin{proof}
We will use the Implicit Function Theorem on Banach Spaces 
as stated in \cite[Appendix~B]{Kl6}, a rewriting of \cite[Theorem~2.7.2]{N}.
Consider the function $\hat F(\lb,E,\eta,\vec f)= F(\lb, E,\eta,\vec f)-\vec f$.
By Lemma~\ref{lem-DF}, especially \eqref{eq-spectrum-CE}, the Frechet derivative $\hat F_{\vec f}(0,E,0,\vze_{0,E}-\vec\epsilon)=C_E-\one$
has no zero eigenvalue on $\big(\Hh^{(0)}_\infty\big)^s$ if $E\in \hat I_{A,S}$. 
Hence, the Implicit Function Theorem can be applied.
As a consequence, for each  $E \in \hat I_{A,S}$
there exist $\lb_E > 0$, $\varepsilon_E >0$, $\eta_E >0$ and $\de_E >0$, such that for each 
$$  (\lb, E',\eta) \in (-\lb_E,\lb_E)\times (E - \ve_E,E + \ve_E) \times [0, \eta_E) $$
there is a unique $\vec\omega_{\lb,E',\eta} \in \big(\Hh^{(0)}_\infty\big)^s$ with 
$\| \vec\omega_{\lb,E',\eta}- (\vze_{0,E}-\vec\epsilon) \|_{ {\Hh}^s_\infty}< \de_E$, 
such that we have  $  F(\lb, E',\eta,\vec\omega_{\lb,E',\eta}) =\vec\omega_{\lb,E',\eta}$. 
Moreover, the map
$$
 (\lb, E',\eta) \in (-\lb_E,\lb_E)\times (E - \ve_E,E + \ve_E) \times [0, \eta_E) \;\longrightarrow\;
 \vec\omega_{\lb,E',\eta} \in {\Hh}^s_\infty
$$
is continuous.
To obtain the statement for the map \eqref{xizeze} it is left to show
\begin{equation} \label{eq-om-ze}
\vze_{\lb,E' + i\eta}-\vec\epsilon = \vec\omega_{\lb,E',\eta}
\end{equation}
 for all $(\lb, E',\eta) \in (-\lb_E,\lb_E)\times (E - \ve_E,E + \ve_E) \times [0, \eta_E)$.
But it follows  from Proposition~\ref{zeta} that  $\vze_{\lb,E'+i\eta}-\vec\epsilon \in \big(\Hh^{(0)}_\infty\big)^ s$ is continuous
on $\left(\{0\} \times  \{E'\}\times  [0, \eta_1]\right) \cup  
 \left(\RR \times\RR \times [\eta_1, \infty) \right)$, for any $\eta_1 >0$, and it satisfies 
\eqref{eq-fxp}.  Thus \eqref{eq-om-ze} follows from the uniqueness in the Implicit Function Theorem.  

Using Proposition~\ref{xita}, \eqref{eq-fxp-xi} and Lemma~\ref{lem-DQ}, the proof for the map in \eqref{xizexi} is 
completely analogous.
\end{proof}

\begin{remark}
 The use of the spaces $\Hh^{(0)}$ and $\Kk^{(0)}$ lies in equations
\eqref{eq-lbj}, \eqref{eq-spectrum-CE}, \eqref{eq-spectrum-Cc_E} and \eqref{eq-spectrum-CcE} which are crucial for the proposition above. 
If one works with the spaces $\Hh$ and $\Kk$ instead, this would mean that the spaces $\VV_{{\mathbf 0}}$ and $\WW_{{\mathbf 0},{\mathbf 0}}$ have to be added
and the calculated Frechet derivatives $F_{\vec{f}}$ and $Q_{\vec g}$ on $\Hh$ and $\Kk$ would 
get additional eigenvalues which are equal to the eigenvalues of $S$.
For this reason, we would have to demand $\det(S-\one)\neq 0$ in order to use the Implicit Function Theorem as done above.
This would artificially rule out some of the substitution trees, e.g. the ones associated to 
$S=\left(\begin{smallmatrix} 4 & 3 \\ 2 & 3
\end{smallmatrix} \right)$ which satisfies {\rm (S1)}, {\rm (S2)} and {\rm (S3)}.
\end{remark}

\begin{coro}\label{cor-xize}
For $\im(z)>0$ and neighboring sites $x,y\in\IS^{(q)}$ one obtains the following.
\begin{align}
& \ze^{(x|y)}_{\lb,z} \in {\Hh}_\infty,\qtx{if $x\neq 0_q\;$ and} \ze^{(x|y)}_{\lb,z}=\ze^{(0_q|y)}_{\lb,z} \in \Hh\;,
\quad\text{if $x=0_q$} \label{eq-st1}\\
& \xi^{(x|y)}_{\lb,z} \in {\Kk}_\infty,\qtx{if $x\neq 0_q\;$ and}
\xi^{(x|y)}_{\lb,z}=\xi^{(0_q|y)}_{\lb,z} \in \Kk\;,\quad\text{if $x=0_q$}\;.\label{eq-st2}
\end{align}
Moreover, there is a neighborhood $U$ of $\{0\} \times \hat I_{A,S}$ in $\RR^2$ such that
for all $x,y\in \IS^{(q)}$ with $d(x,y)=1$, the maps
\begin{align}
 & (\lb,E,\eta)\in\RR\times\RR\times(0,\infty)\;\mapsto\;\ze^{(x|y)}_{\lb,z} \in {\Hh}_\infty \;,\quad 
\text{{\rm (}$\Hh$, respectively, if $x=0_q${\rm )}}\label{eq-cext1}\\
& (\lb,E,\eta)\in\RR\times\RR\times(0,\infty)\;\mapsto\;\xi^{(x|y)}_{\lb,z} \in {\Kk}_\infty\;,\quad 
\text{{\rm (}$\Kk$, respectively, if $x=0_q${\rm )}} \label{eq-cext2}
\end{align}
have continuous extensions to $(\lb,E,\eta)\in U\times[0,\infty)$.
\end{coro}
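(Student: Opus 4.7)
The plan is to derive a general recursion for $\ze^{(a|b)}_{\lb,z}$ (and analogously for $\xi^{(a|b)}_{\lb,z}$) along every edge $\{a,b\}$ of $\IS^{(q)}$, and then to bootstrap the known behaviour of the root objects $\ze^{(q)}_{\lb,z},\xi^{(q)}_{\lb,z}$ provided by Propositions~\ref{zeta}, \ref{xita} and \ref{xize} via an induction on the distance $d(a,0_q)$. Exactly as in the derivation of \eqref{eq-zeta-recursion} and \eqref{eq-xi-recursion}, the supersymmetric replica trick on the rooted tree $\IS^{(a|b)}$ yields
\[
\ze^{(a|b)}_{\lb,z} \;=\; TB_{\lb,z}\!\left(\prod_{v\in M_{a|b}}\ze^{(v|a)}_{\lb,z}\right),\qquad
\xi^{(a|b)}_{\lb,z} \;=\; \Tt\cB_{\lb,z}\!\left(\prod_{v\in M_{a|b}}\xi^{(v|a)}_{\lb,z}\right),
\]
where $M_{a|b}$ denotes the set of neighbours of $a$ in $\IS^{(a|b)}$. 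A count of branches gives $|M_{a|b}|=S_{l(a)}\ge 2$ if $a\neq 0_q$ and $|M_{a|b}|=S_q-1\ge 1$ if $a=0_q$; assumption {\rm (S1)} is precisely what makes the former at least $2$.

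The easy situation is when $b$ is the parent of $a$: then $\IS^{(a|b)}$ is canonically isomorphic to $\IS^{(l(a))}$, so by stationarity of the i.i.d.\ potential $\ze^{(a|b)}_{\lb,z}=\ze^{(l(a))}_{\lb,z}$ and $\xi^{(a|b)}_{\lb,z}=\xi^{(l(a))}_{\lb,z}$, which are in $\Hh_\infty$ and $\Kk_\infty$ by Propositions~\ref{zeta}(ii) and \ref{xita}(ii), and continuous on $U\times[0,\infty)$ by Proposition~\ref{xize}. The remaining ``backward'' case, with $b$ a child of $a$, is then handled by induction on $d(a,0_q)$. At the base $a=0_q$, all factors $\ze^{(u|0_q)}_{\lb,z}$ are of the easy forward type, hence in $\Hh_\infty$; if $S_q-1\ge 2$, Lemma~\ref{lemma-prod} places the product in $\Hh_1$ and Lemma~\ref{lem-op-T}(ii) outputs an element of $\Hh_\infty\subset\Hh$, while if $S_q-1=1$ the product reduces to a single $\Hh_\infty$-function and one invokes boundedness of $B_{\lb,z}$ on $\Hh$ together with unitarity of $T$ on $\Hh$ to conclude $\ze^{(0_q|b)}_{\lb,z}\in\Hh$. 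For the inductive step at $a\neq 0_q$, among the $S_{l(a)}\ge 2$ factors the $S_{l(a)}-1$ ``child'' ones lie in $\Hh_\infty$ by the forward case, while the single ``parent'' factor $\ze^{(v|a)}_{\lb,z}$ lies in $\Hh$ (in fact in $\Hh_\infty$ unless $v=0_q$) by the inductive hypothesis; Lemma~\ref{lemma-prod} then delivers a product in $\Hh_1$ and Lemma~\ref{lem-op-T}(ii) gives $\ze^{(a|b)}_{\lb,z}\in\Hh_\infty$.

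The continuity extension to $U\times[0,\infty)$ is obtained by running this same induction with membership replaced by joint continuity. Proposition~\ref{xize} supplies continuous extensions of $\vze_{\lb,z}$ and $\vxi_{\lb,z}$ on $U\times[0,\infty)$ in $\Hh^s_\infty$ and $\Kk^s_\infty$; continuity of the multiplications in the recursion is controlled by the continuity versions of Lemma~\ref{lemma-prod}; and continuity of $TB_{\lb,z}$ as a map $\Hh_1\to\Hh_\infty$ and separately on $\Hh$ is already contained in the proof of Proposition~\ref{zeta}(i) together with Lemma~\ref{lem-op-T}. The $\xi^{(a|b)}_{\lb,z}$ half is then verbatim, substituting $T,B_{\lb,z},\Hh_\bullet$ by $\Tt,\cB_{\lb,z},\Kk_\bullet$ and invoking Proposition~\ref{xita} in place of Proposition~\ref{zeta}.

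The main technical obstacle is the borderline case $a=0_q$ with $S_q=2$ (and its $\xi$ analogue): the product on the right-hand side of the recursion then has only a single factor, so Lemma~\ref{lemma-prod} is unavailable to upgrade the range to $\Hh_1$, and one must verify directly that $TB_{\lb,z}$ is bounded and jointly continuous on $\Hh$ uniformly down to $\eta=0$. This obstruction is precisely what prevents $\ze^{(0_q|y)}_{\lb,z}$ from belonging to $\Hh_\infty$ in general and explains the asymmetric conclusion of the corollary between root and non-root vertices.
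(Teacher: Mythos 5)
Your proposal is correct and follows essentially the same route as the paper's own proof: you split the edges into the ``forward'' case (where $\IS^{(x|y)}\cong\IS^{(l(x))}$ and the claim follows directly from Propositions~\ref{zeta}, \ref{xita} and \ref{xize}), and the ``backward'' case, which you handle by the same induction on $d(x,0_q)$ starting from the recursion \eqref{eq-ze-rec-gen}--\eqref{eq-xi-rec-gen}, with the same base-case distinction at the root (single $\Hh_\infty$-factor landing only in $\Hh$ when $S_q-1=1$, versus Lemma~\ref{lemma-prod} plus Lemma~\ref{lem-op-T}(ii) when $S_q-1\ge 2$) and the same use of {\rm (S1)} to guarantee at least one child factor besides the parent in the inductive step. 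The only cosmetic difference is that you make explicit the boundedness/continuity of $TB_{\lb,z}$ on $\Hh$ needed in the borderline root case, a point the paper also invokes but states more tersely.
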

\begin{proof}
If $d(x,y)=1$ and $x$ is a child of $y$ (i.e. $d(x,0_q)=1+d(y,0_q)$), 
then $\IS^{(x|y)}$ is equivalent to $\IS^{(p)}$ if the label of $x$ is $p$, $l(x)=p$, and therefore 
$\ze^{(x|y)}_{\lb,z}=\ze^{(p)}_{\lb,z}$ and $\xi^{(x|y)}_{\lb,z}=\xi^{(p)}_{\lb,z}$.
Together with Proposition~\ref{xize} this implies all the statements if $x$ is a child of $y$.
For the other statements,
let us first consider the case $x=0_q$. From Appendix~\ref{app-gr}, equation \eqref{eq-ze-rec-gen} we find
\begin{align}
 \ze^{(0_q|y)}_{\lb,z} &=
TB_{\lb,z} \bigg(\prod\limits_{\substack{y': d(0_q,y')=1\\ y'\neq y}} 
\ze^{(y'|0_q)}_{\lb,z}
\bigg)\;.
\end{align}
As the $y'$ above are all children of the root $0_q$, $\ze^{(y'|0_q)}_{\lb,z}\in\Hh_\infty$.
By assumption (S1), $0_q$ has at least two children so the product over $y'$ is not empty.
If it has one factor, then the product is an element of $\Hh_\infty\subset\Hh$. 
If it has more than one factor we use Lemma~\ref{lemma-prod} to get that the product is in $\Hh_1\subset\Hh$.
As $T$ and $B_{\lb,z}$ are bounded operators on $\Hh$
one gets $\ze^{(0_q|y)}_{\lb,z}\in\Hh$.

Now we will use an induction argument over $d(x,0_q)$. Let $x'$ be the parent of $x$ and $x$ be the parent of $y$. Assume we already know $\ze^{(x'|x)}_{\lb,z}\in \Hh$ and that it extends continuously in $\Hh$.
(We need to take the space $\Hh$ in the assumption in the first step, when $x'=0_q$).

Then
\begin{equation}
\ze^{(x|y)}_{\lb,z}=TB_{\lb,z} \bigg(\prod\limits_{\substack{y': d(x,y')=1\\ y'\neq y}}
\ze^{(y'|x)}_{\lb,z}
\bigg)\;.
\end{equation}
Except for $y'=x'$, all neighbors $y'$ of $x$ will be children of $x$ and hence 
$\ze^{(y'|x)}_{\lb,z}$ extends continuously in $\Hh_\infty$. 
As $x$ has at least two children by assumption (S1), there is at least one which is not $y$. Using the induction assumption, Lemma~\ref{lemma-prod} 
and the fact that $TB_{\lb,z}$ maps continuously from $\Hh_1$ to $\Hh_\infty$, we obtain that 
$\ze^{(x|y)}_{\lb,z}$ extends continuously in $\Hh_\infty$.

With similar arguments, based on the relations \eqref{eq-xi-rec-gen}, one obtains the statements for the functions $\xi^{(x|y)}_{\lb,z}$.
\end{proof}
\begin{remark}
 If the root $0_q$ has 3 children, then $\ze^{(0_q|y)}\in\Hh_\infty$ and $\xi^{(0_q|y)}_{\lb,z}\in\Kk_\infty$.
\end{remark}

Analogously to \cite{KS}, Theorem~\ref{main2} follows directly from the Corollary above.
More precisely, integrating out the Grassmann variables of the supersymmetric integral identities \eqref{eq-EG-x} and \eqref{eq-EGG-x},
we obtain 
\begin{align}\label{eq-EG-x-D}
\E\big(G^{[x]}_{\lb}\, (z)\big) &= -i\int {\mathbf D}B_{\lb,z} \left( \prod_{y:d(x,y)=1} \ze^{(y|x)}_{\lb,z}(\bvp^\ot)\right)\,
d^{2mn} \bvp\;, \\
\E\left(\left| G^{[x]}_{\lb}\, (z)\right|^2\right) &= 
\int {\mathbf D}^{(-)} {\mathbf D}^{(+)} \Bb_{\lb,z} \left( \!\prod_{y:d(x,y)=1}\!\!\! \xi^{(y|x)}_{\lb,z}(\bvp_+^\ot, \bvp_-^\ot)\right)
\,d^{2mn} \bvp_+\,d^{2mn} \bvp_- \;, \label{eq-EGG-x-D}
\end{align}
where ${\mathbf D}$ is an $m\times m$ matrix of differential operators with entries
\begin{equation}\label{eq-def-bfD}
 {\mathbf D} :=\left( \tfrac{(-1)^{mn+j+k}}{2\pi^{mn}}\,D_{\I,\I}^{n-1}\,D_{\I\setminus\{k\},\I\setminus\{j\}}\right)_{j,k\in\I}\;.
\end{equation}
where $D_{\I,\I}$ and $D_{\I\setminus\{k\},\I\setminus\{j\}}$ are defined as in \eqref{eq-def-Daa}.
${\mathbf D}^{(\pm)}$ represent the matrix-operator ${\mathbf D}$ acting with respect to $\bvp_\pm^\ot$ and
${\mathbf D}^{(-)}{\mathbf D}^{(+)}$ has to be understood as a matrix product.
By assumption {\rm (S1)}, the products inside the integrals in \eqref{eq-EG-x-D} and \eqref{eq-EGG-x-D} contain at least two factors,
by Corollary~\ref{cor-xize}, all factors are in $\Hh_\infty$, or $\Kk_\infty$, respectively, except for possibly one, 
which is in $\Hh$, or $\Kk$, respectively,
and they depend continuously on $(\lambda,E,\eta) \in U\times [0,\infty)$.
By Lemma~\ref{lemma-prod}, the product is in $\Hh_1$, or $\Kk_1$, respectively, and by Dominated Convergence, Theorem~\ref{main2} now follows.

\vspace{.2cm}

Now we can prove Theorem~\ref{main}. Let us first start with part (ii).
For $\lambda=0,\,E\in \hat I_{A,S}$ one has $\im (G_0^{(q)}(E))>0$. Therefore, Theorem~\ref{main2} implies
that there is an open neighborhood $U$ of $\{0\}\times \hat I_{A,S}$ such that
for all $x\in\IS$, $\E (G^{[x]}_{\lb,z}(E+i\eta))$ and $\E(|G_\lambda^{[x]}(E+i\eta)|^2)$ extend continuously to $(\lb,E,\eta)\in U\times[0,\infty)$
and such that for all $q\in\{1,\ldots,s\}$ one has
\begin{equation}
\lim_{\eta\downarrow 0} \im\,\E(G_\lambda^{(q)}(E+i\eta)) > 0\qtx{for} (\lb,E)\in U\;. 
\end{equation}
The latter can be achieved by continuity and possibly shrinking the neighborhood $U$ as in
Theorem~\ref{main2}, since $\lim_{\eta\downarrow 0} \im\, G_0^{(q)}(E+i\eta)>0$ for $E\in \hat I_{A,S}$. 
Theorem~\ref{main}~(ii) follows as
$\E(G_\lambda^{[x]}(E+i\eta))$ is the Stieltjes transform of $\E(\mu_{x})$.

Using Fubini's Theorem, Fatou's Lemma and the continuous extension of
$z\mapsto \E\big(\big|G^{[x]}_\lb(z) \big|^2\big)$ to $z\in U_\lb$, one obtains for an interval $[a,b]\subset U_\lb$
\begin{align}
&\E \left( \liminf_{\eta \downarrow 0}\int_{a}^b \Tr\left(\left|G^{[x]}_{\lb}\, (E +i\eta)\right|^2\right)\,dE \right)\\
& \qquad\qquad
\leq\;\; \liminf_{\eta \downarrow 0} \int_{a}^b \E \left(\Tr\left(\left|G^{[x]}_{\lb}\, (E +i\eta)\right|^2\right)  \right)\,dE\;\;<\;\; \infty\;. \notag
\end{align}
Thus, 
\beq\label{limprobone}
\liminf\limits_{\eta \downarrow 0}\int\limits_{a}^b \Tr\left(\left|G^{[x]}_{\lb}\, (E +i\eta)\right|^2\right)\,dE \,<\,\infty \quad
\text{with probability one}. 
\eeq

Clearly, $G_\lb^{[x]}(z)$ is the Stieltjes transform of $\mu_{x}$. Therefore,
similarly as in \cite{KS}, based on \cite[Theorem 4.1]{Kl6} (or \cite[Theorem 2.6]{Kel}) one obtains from \eqref{limprobone}
that, almost surely, $\mu_{x}$ is absolutely continuous with respect to the Lebesgue measure in 
$(a,b)\subset U_\lambda$ and the density is a matrix valued $L^2$ function. As there are countably many vertices $x\in\IS$ 
and the open set $U_\lb$ can be obtained as countable union of intervals $(a,b)$ satisfying
$[a,b]\subset U_\lb$, we obtain that with probability one,  all measures $\mu_x$ are absolutely continuous in $U_\lb$. Thus, we finally proved part (i) of Theorem~\ref{main}.

\appendix

\section{Spectrum of operators with compact power}

\begin{prop} \label{prop-appendix}
Let $C$ be an operator on a Hilbert space $\Hh$ such that a power $C^j$ is compact.
Let moreover $\VV_u$ for $u\in\ZZ_+$ be finite dimensional subspaces such that
$C$ leaves $\VV_u$ invariant and such that $\bigcup_{u\in\ZZ_+} \VV_u$ is dense in $\Hh$.
Then one has
\begin{equation}
 \sigma(C)=\bigcup_{u\in\ZZ_+} \sigma(C|_{\VV_u}) \cup \{0\}\;,
\end{equation}
where $\sigma(C|_{\VV_u})$ denotes the set of eigenvalues of the restriction of $C$ to the finite
dimensional space $\VV_u$.
\end{prop}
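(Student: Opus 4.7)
The plan is to reduce the claim to the Riesz--Schauder spectral theory for operators having a compact power (sometimes called Riesz operators), combined with a straightforward density argument.

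First I would handle the easy inclusion $\bigcup_u \sigma(C|_{\VV_u}) \cup \{0\} \subseteq \sigma(C)$. Any eigenvalue of the restriction $C|_{\VV_u}$ immediately lifts to an eigenvalue of $C$ via the same eigenvector. To see $0 \in \sigma(C)$, note that if $\Hh$ is infinite dimensional (the only non-trivial case, since otherwise $\Hh = \VV_u$ for some $u$) then $C^j$ compact cannot be invertible, so $0 \in \sigma(C^j)$, and the spectral mapping theorem forces $0 \in \sigma(C)$.

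For the harder inclusion, fix $\lambda \in \sigma(C)$ with $\lambda \neq 0$. Then $\lambda^j \in \sigma(C^j)\setminus\{0\}$ is an isolated eigenvalue of finite algebraic multiplicity because $C^j$ is compact. Since $\{z : z^j = \lambda^j\}$ is finite, $\lambda$ is isolated in $\sigma(C)$. Standard Riesz--Schauder theory (or direct inspection: the $C$-invariant generalized eigenspace of $C$ at $\lambda$ embeds into the finite dimensional generalized eigenspace of $C^j$ at $\lambda^j$) shows that the Riesz projection
\begin{equation}
P_\lambda \;:=\; \frac{1}{2\pi i}\oint_\gamma (z-C)^{-1}\,dz,
\end{equation}
where $\gamma$ is a small circle around $\lambda$ enclosing no other point of $\sigma(C)$, is a non-zero, finite rank projection commuting with $C$. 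The key observation is now that $P_\lambda$ maps each $\VV_u$ into itself: since any eigenvalue of $C|_{\VV_u}$ lies in $\sigma(C)$, the contour $\gamma$ is in the resolvent set of $C|_{\VV_u}$, and for $v\in\VV_u$ one has $(z-C)^{-1}v=(z-C|_{\VV_u})^{-1}v\in\VV_u$; because $\VV_u$ is finite dimensional (hence closed) the integral defining $P_\lambda v$ also lies in $\VV_u$. Moreover, if $\lambda\notin \sigma(C|_{\VV_u})$, the integrand is analytic throughout the disc bounded by $\gamma$, so $P_\lambda v=0$ for all $v\in\VV_u$.

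To conclude, suppose for contradiction that $\lambda \notin \sigma(C|_{\VV_u})$ for every $u$. Then $P_\lambda$ vanishes on the dense subspace $\bigcup_u \VV_u$, and by boundedness $P_\lambda \equiv 0$, contradicting $\lambda\in\sigma(C)$. Hence $\lambda \in \sigma(C|_{\VV_u})$ for some $u$, which finishes the proof. The main obstacle, and the reason the hypothesis on $C^j$ is essential, is ensuring that the Riesz functional calculus is available at $\lambda$ with a finite rank projection; once this holds, the $C$-invariance of $\VV_u$ and density of $\bigcup_u \VV_u$ make the spectral identification essentially automatic.
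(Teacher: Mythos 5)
Your proof is correct, but it takes a genuinely different route from the paper. You argue by isolating $\lambda\neq 0$ in $\sigma(C)$ (via the spectral mapping theorem and the compactness of $C^j$), forming the Riesz projection $P_\lambda$, observing that the resolvent restricts correctly on each $\VV_u$ so that $P_\lambda$ annihilates $\VV_u$ whenever $\lambda\notin\sigma(C|_{\VV_u})$, and then invoking density of $\bigcup_u\VV_u$ to force $P_\lambda=0$, a contradiction. The paper instead avoids holomorphic functional calculus altogether: assuming $c\neq 0$ is not an eigenvalue of any $C|_{\VV_u}$, it notes that $(C-c)(\VV_u)=\VV_u$ since $(C-c)|_{\VV_u}$ is invertible on a finite-dimensional space, so the range of $C-c$ is dense; hence $\ker(C^*-\bar c)=[(C-c)\Hh]^\perp=\{0\}$, and since $(C^*)^j$ is compact the Fredholm alternative gives $\bar c\notin\sigma(C^*)$, i.e.\ $C-c$ is invertible. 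Both proofs lean on the density hypothesis and on Riesz--Schauder theory for operators with compact power; the paper deploys compactness at the adjoint step (nonzero spectrum of $C^*$ consists of eigenvalues) while you deploy it to get an isolated spectral point with a well-defined Riesz projection. The paper's argument is shorter and needs only the Fredholm alternative, not contour integration; your argument is a bit heavier but equally valid, and incidentally also shows $P_\lambda$ preserves each $\VV_u$, which could be of independent interest. One small remark: you assert that $P_\lambda$ is finite rank, which is true for Riesz operators, but your contradiction only uses $P_\lambda\neq 0$ together with its vanishing on a dense subspace, so finite rank is not actually needed.
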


\begin{proof}
As $C$ leaves $\VV_u$ invariant and as $0\in\sigma(C)$ since $C^j$ is compact, the inclusion '$\supset$' is trivial.
So let $c\not\in\sigma(C|_{\VV_u})$ for any $u\in\ZZ_+$ and $c\neq 0$, we have to show $c\not\in\sigma(C)$.
Now, $(C-c)|_{\VV_u}$ is invertible on $\VV_u$ and hence $(C-c)(\VV_u)=\VV_u$.
Therefore, the range of $C-c$ includes $\bigcup_u \VV_u$ and is dense in $\Hh$, and hence
$$
\ker(C^*-\bar c)= [(C-c)(\Hh)]^\perp = \{0\},.
$$
Thus, $\bar c\neq 0$ is not an eigenvalue of $C^*$. Since $(C^*)^j$ is compact, this means $\bar c$ is not in the spectrum of
$C^*$, and $C^*-\bar c=(C-c)^*$ is invertible. But this means $C-c$ is invertible and hence $c\not\in\sigma(C)$.
\end{proof}

\section{Supersymmetric methods}
\label{sec-super}


For the readers convenience we briefly give the supersymmetric background for the definitions and identities in Sections~\ref{sec-ban}
and \ref{sec-fxp}. We use the notations
as introduced in \cite{KS,KS2} which give a more conceptual introduction.

Given an alphabet $\Aa$ (set of symbols) let $\Lambda(\Aa)$ denote the Grassmann algebra generated by the symbols in $\Aa$.
$\Lambda(\Aa)$ is given by the free algebra over the alphabet $\Aa$ modulo the anti-commutation relations $ab+ba=0$ for $a,b\in\Aa$. 
The elements of $\Aa$ will be called independent Grassmann variables. The set of linear combinations of elements of $\Aa$ are the
so called one forms in $\Lambda(\Aa)$.
Clearly, if $\Aa\subset\Bb$, then $\Lambda(\Aa)\subset\Lambda(\Bb)$ 
can be naturally considered as a sub-algebra. Hence, one can always introduce a new Grassmann variable which is independent 
to all Grassmann variables already used.

Let $\psi_{k,\ell}, \overline{\psi}_{k,\ell}$ for $k=1,\ldots,m$ and $\ell=1,\ldots,n$ be $2mn$ independent Grassmann variables
(letters in the alphabet). Together with an ordinary variable $\varphi_{k,\ell}\in\RR^2$ one has $mn$ so called supervariables 
$\phi_{k,\ell}=(\varphi_{k,\ell},\overline{\psi}_{k,\ell},\psi_{k,\ell})$.
The collection $\BPhi=(\phi_{k,\ell})_{k,\ell}$ will be called a $m \times n$ supermatrix.
Its ordinary variables part, $\bvp=(\varphi_{k,\ell})_{k,\ell}$ with entries in $\RR^{2}$, as well as the Grassmann variables part
$\BPsi=(\overline\psi_{k,\ell},\psi_{k,\ell})$ will be considered as $m\times 2n$ matrices and one may write
$\BPhi=(\bvp,\BPsi)$. The set of all $m\times n$ supermatrices whose Grassmann variables are one forms of a Grassmann algebra
$\Lambda(\Aa)$ will be denoted by $\Ll_{m,n}(\Aa)$.
Two supermatrices $\BPhi,\BPhi'\in\Ll_{m,n}(\Aa)$ will be called independent, if all one-forms are linearly independent.

For an $m\times n$ supermatrix $\BPhi\in\Ll_{m,n}(\Aa)$ and an $m\times m$ matrix $B$, we define
$B\BPhi \in\Ll_{m,n}(\Aa)$ by
\begin{equation}\label{eq-BBPhi}
\BPhi'=B\BPhi\quad:\Leftrightarrow\quad \phi'_{j,\ell} = \sum_{k=1}^m B_{j,k} \phi_{k,\ell}=
\sum_{k=1}^m\left(B_{j,k} \varphi_{k,\ell}, B_{j,k}\overline \psi_{k,\ell}, B_{j,k} \psi_{k,\ell} \right)\;.
\end{equation}
Moreover, for supervariables $\phi_1=(\varphi_1,\overline{\psi}_1,\psi_1)$, $\phi_2=(\varphi_2,\overline{\psi}_2,\psi_2)$
we define
\begin{equation}
 \label{eq-dot-supervec}
\phi_1 \cdot \phi_2 : = \varphi_1\cdot\varphi_2+\tfrac12(\overline{\psi}_1 \psi_2 + \overline\psi_2 \psi_1)\;,
\end{equation}
and for supermatrices $\BPhi'$, $\BPhi$, let 
\begin{equation}\label{eq-dot-supermatrix}
 \BPhi' \cdot \BPhi : = 
\sum_{k=1}^m \sum_{\ell=1}^n \phi'_{k,\ell}\cdot\phi_{k,\ell}
\qtx{and}
\BPhi'\cdot B \BPhi := \sum_{j,k=1}^m \sum_{\ell=1}^n B_{j,k} \phi'_{j,\ell} \cdot \phi_{k,\ell}\;.
\end{equation}
Furthermore, we define the $m\times m$ matrix $\BPhi^\ot$ with entries in $\Lambda(\BPsi)$, by
\begin{equation}\label{eq-supermatrix-tensor}
(\BPhi^{\od 2})_{j,k} :=  \sum_{\ell=1}^n \phi_{j,\ell} \cdot \phi_{k,\ell} \;.
\end{equation}
Note that
\begin{equation}
\BPhi\cdot B \BPhi = \Tr(B \BPhi^\ot)\;.
\end{equation}
For a supermatrix $\BPhi=(\bvp,\BPsi)$, 
one finds
\beq \label{eq-BPhi2}
\BPhi^\ot = \bvp^\ot\,+\,\BPsi^\ot\,, \qtx{with}
\bvp^\ot:=\bvp\bvp^\top\;\text{and}\;  \BPsi^\ot:=\BPsi J \BPsi^\top\; ,
\eeq
where $J$ is the $2n\times 2n$ matrix consisting of $n$ blocks 
$\left[\begin{smallmatrix} \,0 & \frac12 \\ -\frac12 & 0 \end{smallmatrix}\right]$ along the diagonal.
Given a matrix $B\in \CC^{m\times m}$ and $\bvp', \bvp\in \RR^{m\times 2n}$, we write
\begin{equation}\label{eq-supermatrix-prod25}
\bvp'\cdot B \bvp := \sum_{j,k,\ell} B_{j,k} \varphi'_{j,\ell} \cdot \varphi_{k,\ell} = \Tr((\bvp')^\top B \bvp )\;\in\;\CC\; .
\end{equation}


Next let us recall the convenient notation for Grassmann monomials as in \cite{KS2}.
For convenience, let $\I=\{1,\ldots,m\}$ and  denote the set of subsets of $\I$ by
$\PP(\I)$.
Given $(\bar a, a)\in  (\PP(\I))^2=\PP(\I)\times\PP(\I)$
and $\bar a=\{\bar k_1,\ldots, \bar k_c\},\;a=\{k_1,\ldots, k_d\}$, both ordered ({ i.e.}, $\bar k_i < \bar k_j$
and $k_i<k_j$ if $i<j$),  we set
\begin{equation}\label{Psiaal}
\Psi_{\bar a, a,\ell} :=  \left(\prod_{j=1}^{|\bar a|}\;\overline{\psi}_{\bar k_j, \ell}\,\right)\left(\prod_{j=1}^{|a|}
\psi_{k_j,\ell}\,\right)\;,
\end{equation}
using the conventions $\prod_{j=1}^c \psi_j=\psi_1 \psi_2 \cdots\, \psi_c$ for non-commutative products and
$\prod_{j=1}^0 \psi_j = 1$.
In particular, $\Psi_{\emptyset,\emptyset,\ell}=1$.
Given a pair of $n$-tuples of subsets of $\I$, $(\bar\aaa,\aaa) \in (\PP(\I))^n\times(\PP(\I))^n $ with 
$\bar\aaa=(\bar{a}_1,\ldots,\bar{a}_n)$ and $\aaa=(a_1,\ldots,a_n)$, we set
\begin{equation} \label{eq-Psi-a-a}
 \Psi_{\bar\aaa,\aaa}:=
\prod_{\ell=1}^n \Psi_{\bar a_\ell,a_\ell, \ell}\;.
\end{equation}

For $f\in C_n^\infty(\Sym^+(m))$ and $\det(\bvp^\ot) \neq 0$, a formal Taylor expansion yields the definition
(cf. \cite[eq.(2.17)]{KS2} and \cite[eq.(2.26)]{KS})
\begin{equation}\label{eq-super-Taylor}
 f(\BPhi^{\od 2}) := 
\sum_{ (\bar\aaa, \aaa)\,\in\,\Pp^n}
D_{\bar\aaa,\aaa}\; f(\bvp^\ot)\;\sgn(\aaa) \Psi_{\bar\aaa,\aaa}\;.
\end{equation}
where $D_{\bar\aaa,\aaa}$ is defined as in \eqref{eq-def-Daa} and
\begin{equation}
 \sgn(\aaa)=\prod_{\ell=1}^n (-1)^{\frac{| a_\ell|(| a_\ell|-1)}{2}}\;,\quad\aaa\in(\PP(\I))^n\;.
\end{equation}

Consider the Grassmann algebra over some alphabet $\Aa$ and let $\psi\in \Aa$. 
Then, using the anti-commutation relations, $F\in \Lambda(\Aa)$ can be uniquely written as $F=F_0+F_1\psi$
with $F_0,F_1\in\Lambda(\Aa\setminus\{\psi\})$. The Berezin integral over $d\psi$ is defined 
as a linear map from $\Lambda(\Aa)$ to $\Lambda(\Aa\setminus\{\psi\})$ by
\begin{equation}\label{eq-def-intdpsi}
\int F\,d\psi=
\int F_0+F_1 \psi\,d\psi:= F_1 \;.
\end{equation}
Because of the anti-commutativity of $\psi,\bar \psi \in \Aa$ in $\Lambda(\Aa)$ one finds
\begin{equation}
 \int F\bar\psi\psi\, d\bar\psi\,d\psi\,=\,
\int -F \psi \bar\psi\,d\bar\psi\,d\psi=\int -F \psi\,d\psi =  - F\;.
\end{equation}

A superfunction $F(\BPhi)$ for $\BPhi=(\bvp,\BPsi)$ is a function mapping $\bvp\in\RR^{m\times 2n}$ 
to $\Lambda(\BPsi)$ and can be written as
\begin{equation}
F(\BPhi)= \sum_{(\bar\aaa,\aaa)\in(\PP(\I))^n\times(\PP(\I))^n} F_{\bar\aaa,\aaa}(\bvp) \Psi_{\bar\aaa,\aaa}\;,
\end{equation}
where $F_{\bar\aaa,\aaa}$ are functions from $\RR^{m\times 2n}$ to $\CC$.
We call $F$ smooth or integrable if all $F_{\bar\aaa,\aaa}$ are smooth or integrable, respectively.
If $F(\BPhi)$ is integrable, then we define
\begin{equation}
 \int F(\BPhi)\,D\BPhi\,=\,
\frac{1}{\pi^{mn}} \int F(\BPhi) \prod_{k=1}^m \prod_{\ell=1}^n d^2 \varphi_{k,\ell}\,d\overline{\psi}_{k,\ell}\,d\psi_{k,\ell}\;.
\end{equation}
A superfunction is called supersymmetric, if $F(\BPhi)=f(\BPhi^\ot)$.
An important integral equality is given in \cite[Proposition II.2.10]{KSp} which implies for a 
smooth, integrable, supersymmetric function $F(\BPhi)=f(\BPhi^\ot)$, that
\begin{equation}\label{eq-sup-id}
 \int F(\BPhi)\,D\BPhi =
\int f(\BPhi^\ot)\,D\BPhi=f(0)\;.
\end{equation}

Let $\BPhi,\BPhi'$ be independent super matrices, 
and let $A,B $ be two complex $m\times m$ matrices with $B=B^\top$ and $2\re(B)=B+B^*>0$. Then,
a special case of \eqref{eq-sup-id} together with some changes of variables shows
(cf. \cite[Theorem~III.1.1]{KSp})
\begin{equation}\label{eq-sup-id-1}
 \int e^{-(\BPhi+A\BPhi')\cdot B (\BPhi+A\BPhi')}\,D\BPhi = 1\;,
\end{equation}
where the exponential is defined by a formal Taylor expansion in the Grassmann variables.
Using $2\BPhi \cdot A\BPhi' - \BPhi \cdot B \BPhi =
-(\BPhi-B^{-1}A \BPhi' ) \cdot B (\BPhi-B^{-1}A \BPhi')+\BPhi' \cdot A^\top B^{-1} A \BPhi'$ one obtains
for $B=B^\top, \re(B)>0$ that
\begin{equation}\label{eq-sup-id-2}
 \int e^{2\BPhi \cdot A \BPhi'}\,e^{-\BPhi \cdot B \BPhi}\,D\BPhi\,=\,
e^{\BPhi' \cdot (A^\top B^{-1} A) \BPhi'}\;.
\end{equation}


\vspace{.2cm}

Next, let us define some algebraic operations on $(\PP(\I))^n$ which will give a Leibniz type formula.
Let $\aaa, \bbb \in (\PP(\I))^n$. 
If $ a_\ell\cap b_\ell=\emptyset $ for each $\ell=1,\ldots,n$, then we say $\aaa$ and $\bbb$ are {\em addable} and define
$\ccc=\aaa+\bbb \in (\PP(\I))^n$ by
$c_\ell= a_\ell\cup b_\ell$.
We say that $(\bar\aaa,\aaa)$ and $(\bar\bbb,\bbb)\in \PP(\I)^n \times \PP(\I)^n$ are addable if
$\bar\aaa+\bar\bbb$ and $\aaa+\bbb$ are defined by the notion above.
In this case we define $\sgn(\bar\aaa,\aaa,\bar\bbb,\bbb)\in\{-1,1\}$ by
\begin{equation} \label{eq-def-sgn4}
\Psi_{\bar\aaa,\aaa}\,\Psi_{\bar\bbb,\bbb} = 
\sgn(\bar\aaa,\aaa,\bar\bbb,\bbb)\; \Psi_{\bar\aaa+\bar\bbb,\aaa+\bbb}\;.
\end{equation}
Since the product of two supersymmetric functions is supersymmetric,  for all
$f, g\in {C^\infty_n(\Sym^+(m))}$ and all $(\bar\aaa,\aaa)\in\Pp^n$ we have
\begin{equation} \label{eq-Leibn}
D_{\bar\aaa,\aaa}\,(fg) = 
\sum_{\substack{ (\bar\bbb,\bbb),(\bar\bbb',\bbb')\in\Pp^n\\ \bar\bbb+\bar\bbb'=\bar\aaa\;,
\bbb+\bbb'=\aaa}}\;\frac{\sgn(\bbb)\sgn(\bbb')\sgn(\aaa)}{\sgn(\bar\bbb,\bbb,\bar\bbb',\bbb')}\,
D_{\bar\bbb,\bbb}\, g\,D_{\bar\bbb',\bbb'}\, f\;.
\end{equation}
Furthermore, let us introduce $\III=(\I,\ldots,\I)\in(\PP(\I))^n$ as the $n$-tuple of subsets of $\I$ where each entry is the full set $\I$,
and for $\aaa\in(\PP(\I))^n$ define $\aaa^\brc \in (\PP(\I))^n$ as the element which is addable to $\aaa$ and
$\aaa+\aaa^\brc=\III$.

\vspace{.2cm}

For $n\geq \frac{m}{2}$, $f\in\Ss_n(\Sym^+(m))$ the supersymmetric Fourier transform $Tf$ is defined by
\begin{equation}\label{eq-def-T}
 (Tf)((\BPhi')^{\ot}) = 
\int e^{\pm\imath \BPhi'\cdot\BPhi}\, f({\BPhi}^{\ot})\; D\BPhi\; ,
\end{equation}
$T$ maps $\Ss_n(\Sym^+(m))$ into itself.
Since $(-\BPhi)^\ot=\BPhi^\ot$ a change of variables shows that $Tf$ does not depend on the sign
in the first exponent, $ \pm\imath \BPhi'\cdot\BPhi$.
One finds  \cite[eq. (2.37)]{KS}
\begin{equation}\label{eq-T-parts}
D_{\bar\aaa,\aaa} (Tf) = 
\tfrac{2^{mn}}{4^{|\aaa|}}\;\sgn(\aaa,\bar\aaa)\;\Ff(D_{\aaa^\brc,\bar\aaa^\brc}\,f) \qtx{for all} (\bar\aaa,\aaa) \in \Pp^n\;,
\end{equation}
where 
\begin{equation}\label{eq-def-sgn2}
\sgn(\bar\aaa,\aaa) :=
(-1)^{mn}\frac{\sgn(\aaa)\sgn(\aaa^\brc)\sgn(\III)}{\sgn(\bar\aaa,\aaa,\bar\aaa^\brc,\aaa^\brc)}
\end{equation}
as in \cite{KS, KS2} and
$\Ff$ denotes the Fourier transform on $\RR^{m\times 2n}$; we abuse the notation by letting $\Ff f$ denote  
the function in $\Ss_n(\Sym^+(m))$ such that $(\Ff f)(\bvp^\ot)$ is the Fourier transform of the function  $F(\bvp)=f(\bvp^\ot)$.   

The equation \eqref{eq-T-parts} is also the main reason for the definition of the norms and the Banach spaces $\Hh$, $\Hh_p$.
In particular, one can read off directly that $T$ is an involution and unitary on $\Hh$.

\vspace{.2cm}

Considering the real variables part, $Tf((\bvp')^\ot)$, and integrating out the Grassmann
variables, \eqref{eq-def-T} leads to
\begin{align} \label{eq-def-T2}
Tf((\bvp')^\ot) &= \int e^{\pm i \bvp \cdot \bvp'} \sum_{(\bar\aaa,\aaa)\in\Pp^n} D_{\bar\aaa,\aaa} f(\bvp^\ot) \sgn(\aaa) \Psi_{\bar\aaa,\aaa} \;D\BPhi \\
&= 
\frac{(-1)^{mn}}{\pi^{mn}} \int e^{\pm i \bvp\cdot\bvp'} D_{\III,\III} f (\bvp^\ot)\;
\prod_{k=1}^m \prod_{\ell=1}^n d^2 \varphi_{k,\ell}
\notag
\end{align}
which is exactly \eqref{eq-def-T0}.
Here, we used $\BPsi_{\III,\III} = \sgn(\III) \prod_{k=1}^n \prod_{\ell=1}^m \bar\psi_{k,\ell} \psi_{k\ell}$ and
$\prod_{k=1}^n \prod_{\ell=1}^m \int \bar\psi_{k,\ell} \psi_{k\ell}\; d\bar\psi_{k,\ell} \,d\psi_{k,\ell} = (-1)^{mn}$.
Plugging in $\bvp'=0$ and using \eqref{eq-sup-id} we obtain
\begin{equation}\label{eq-int-id0}
 Tf(0)=\frac{(-1)^{mn}}{\pi^{mn}}
\int e^{\pm i \bvp\cdot\bvp'} D_{\III,\III} f (\bvp^\ot)\;
\prod_{k=1}^m \prod_{\ell=1}^n d^2 \varphi_{k,\ell}\,=\, f(0)\;.
\end{equation}

Moreover, for $A=\frac{i}2 \one$, $B=-i\tilde B$ with $\tilde B=\tilde B^\top,\;\im(\tilde B)>0$ 
one has $A^\top B^{-1} A= -\frac{i}{4} \tilde B^{-1}$ and using \eqref{eq-sup-id-2},
one obtains
\begin{equation}
T(e^{i\Tr(\tilde B\BPhi'^\ot)})= \int e^{i\BPhi\cdot\BPhi'} e^{i \BPhi \cdot \tilde B \BPhi} \,D\BPhi
= e^{-\frac{i}{4} \BPhi' \cdot \tilde B^{-1} \BPhi'}\;.
\label{eq-T-id0}
\end{equation}
Considering the real variables part, this is equivalent to \eqref{eq-T-id}.

\section{Supersymmetric identities for the matrix Green's functions}\label{app-gr}

We will use the notations as introduced in Appendix~\ref{sec-super} and Section~\ref{sec-fxp}, esp. \eqref{tilde-G}, \eqref{tilde-G1}, \eqref{eq-zeta-g} and \eqref{xia-g}.
Using the supersymmetric replica trick \cite{Ber,Efe,Kl1} one obtains 
similar to \cite[eq. (3.7), (3.15), (4.3)]{KS} that
\begin{align} \notag
\tilde G^{[x]}_{\lambda}(z)
&=\left[ i\int  \psi_{j,\ell} \,\bar\psi_{k,\ell}\; e^{i\BPhi \cdot (z-\frac12(\lb V(x)-A))\BPhi}\:
e^{\left(\frac{i}4\! \sum\limits_{y:d(x,y)=1} \BPhi \cdot \tilde G^{(y|x)}_{\lambda} \BPhi \right)}\;D\BPhi\right]_{j,k} \\
&=-\frac{i}{n}\int  \BPsi^\ot\, e^{i\BPhi \cdot (z-\frac12(\lb V(x)-A))\BPhi}\:
e^{\left(\frac{i}4\! \sum\limits_{y:d(x,y)=1} \BPhi \cdot \tilde G^{(y|x)}_{\lambda} \BPhi \right)}\;D\BPhi
 \label{eq-G}
\end{align}
and
\begin{align}\label{eq-GG}
\left| \tilde G^{[x]}_{\lb}\, (z)\right|^2 & = 
\tfrac{1}{n^2}\int \, \BPsi_-^\ot\;\BPsi_+^\ot\;
e^{i \BPhi_+\cdot(z-\frac12(\lb V(x)+A))\BPhi_+ - i \BPhi_-\cdot(\bar z - \frac12(\lb V(x)+A))\BPhi_-}\; \\
&  \quad   \times e^{\left(\frac{i}{4}\,\sum\limits_{y:d(x,y)=1} (\BPhi_+\cdot \tilde G_\lb^{(y|x)}(z) \BPhi_+\,-\,
\BPhi_- \cdot \overline{\tilde G_\lb^{(y|x)}(z)} \BPhi_-)\right)}\, D\BPhi_+\,D\BPhi_-   \notag\,.
\end{align}

Next, let us take expectation values and use the fact that the matrix potential $V(x)$ is independent identically distributed.
Then one obtains
\begin{align}
\E(G^{[x]}_{\lb}\, (z)) &=
 -\frac{i}{2n} \int   \BPsi^\ot \,e^{i\BPhi\cdot\frac12 (z -A)\BPhi}\, h(\tfrac12 \lb  \BPhi^{\od 2}) 
\prod_{y:d(x,y)=1} \ze^{(y|x)}_{\lb,z} (\BPhi^{\od 2})\,  D\BPhi  \;,
\label{eq-EG-x} \\
\label{eq-EGG-x}
\E\left(\left| G^{[x]}_{\lb}\, (z)\right|^2\right) &= \frac{1}{4n^2}\int\;
 \BPsi_-^\ot\, \BPsi_+^\ot\;
e^{i [\BPhi_+ \cdot\frac12(z-A)\BPhi_+\,-\,\BPhi_-\cdot\frac12(\bar z -A)\BPhi_-]}\;
\\ 
&  \qquad \quad \times\; h(\tfrac12\lambda(\BPhi_+^{\od 2}-\BPhi_-^{\od 2}))\;
\!\!\!\prod_{y:d(x,y)=1}\!\!\! \xi^{(y|x)}_{\lb,z}(\BPhi_+^{\od 2} , \BPhi_-^{\od 2})\;D\BPhi_+\,D\BPhi_- \,. \notag
\end{align}

Moreover, like in \cite[eq. (3.13)]{KS} we get the following relations.
Let $d(x,y)=1$, then one obtains the following.
\begin{equation}
 e^{\frac{i}{4}\,\BPhi'\cdot \tilde G^{(x|y)}_{\lambda}(z)\,\BPhi'} = 
 \int e^{i\BPhi'\cdot\BPhi} e^{i \BPhi \cdot (z-\lambda \frac12 V(x)-\frac 12 A)\BPhi}\Big(
\prod\limits_{\substack{x': d(x,x')=1\\ x'\neq y}} e^{\frac{i}4 \BPhi\cdot \tilde G^{(y|x)}_{\lambda}(z) \BPhi} \Big)\,D\BPhi\;.
\end{equation}
Taking expectations leads to
\begin{align}\label{eq-ze-rec-gen}
 \ze^{(x|y)}_{\lb,z} &=
TB_{\lb,z} \bigg(\prod\limits_{\substack{x': d(x,x')=1\\ x'\neq y}} 
\ze^{(x'|x)}_{\lb,z}
\bigg)\;,\\ \label{eq-xi-rec-gen}
 \xi^{(x|y)}_{\lb,z} &=
\Tt \Bb_{\lb,z} \bigg(\prod\limits_{\substack{x': d(x,x')=1\\ x'\neq y}} 
\xi^{(x'|x)}_{\lb,z}
\bigg)\;.
\end{align}
Similar calculations give \eqref{eq-zeta-recursion} and \eqref{eq-xi-recursion}.



\begin{thebibliography}{9999}




\bibitem[Aiz]{A}  M. Aizenman, {\sl Localization at weak disorder:  some elementary bounds},
Rev. Math. Phys. {\bf 6}, 1163-1182 (1994)

\bibitem[ASW]{ASW} M. Aizenman, R. Sims and S. Warzel, {\sl Stability of the absolutely continuous spectrum of random
Schr\"odinger operators on tree graphs}, Prob. Theor. Rel. Fields, {\bf 136}, 363-394 (2006)


\bibitem[AM]{AM}  M. Aizenman and S. Molchanov, {\sl Localization at large disorder and extreme
energies:  an elementary derivation},  Commun. Math. Phys. {\bf 157}, 245-278 (1993)

\bibitem[AW]{AW} M. Aizenman and S. Warzel,
{\sl Resonant delocalization for random Schrödinger operators on tree graphs}, preprint 
arXiv:1104.0969 (2011)


\bibitem[Ber]{Ber} F.A. Berezin, {\sl The method of second quantization}, Academic Press, New York, 1966

\bibitem[Breu]{Breu} J. Breuer, {\sl Localization for the
Anderson model on trees with finite dimensions}, Ann. Henri Poincar\`e {\bf 8}, 1507-1520 (2007)




\bibitem[CKM]{CKM}  R. Carmona, A. Klein and F. Martinelli,  
{\sl Anderson localization for Bernoulli and other singular potentials}, 
Commun. Math. Phys. {\bf 108}, 41-66 (1987)

 


\bibitem[DLS]{DLS}  F. Delyon, Y. Levy and B. Souillard,  {\sl Anderson
localization for multidimensional systems at large disorder or low
energy},  Commun. Math. Phys. {\bf 100}, 463-470 (1985)


\bibitem[DK]{DK}  H. von Dreifus and A. Klein,  {\sl A new proof of localization
in the Anderson tight binding model},  Commun. Math. Phys. {\bf 124},
285-299 (1989)

\bibitem[Efe]{Efe} K. B. Efetov, 
{\sl Supersymmetry and theory of disordered metals}, Advances Phys. {\bf 32}, 53-127 (1983)


\bibitem[FHH]{FHH} R. Froese, F. Halasan and D. Hasler, 
{\sl Absolutely continuous spectrum for the Anderson model on a product
of a tree with a finite graph}, arXiv:1008.2949v1 (2010)


\bibitem[FHS]{FHS} R. Froese, D. Hasler and W. Spitzer, 
{\sl Absolutely continuous spectrum for the Anderson Model on a tree:
A geometric proof of Klein's Theorem}, Commun. Math. Phys., {\bf 269}, 239-257 (2007)

\bibitem[FHS2]{FHS2} R. Froese, D. Hasler and W. Spitzer, 
{\sl Absolutely continuous spectrum for a random potential on a tree with strong transverse correlations and large weighted loops}, Rev. Math. Phys. {\bf 21}, 709-733 (2009)


\bibitem[FMSS]{FMSS}  J. Fr\"ohlich, F. Martinelli, E. Scoppola and T. Spencer
 {\sl Constructive proof of localization in the Anderson tight binding
model},  Commun. Math. Phys. {\bf 101}, 21-46 (1985)


\bibitem[FS]{FS}  J. Fr\"ohlich and T. Spencer,  {\sl Absence of diffusion in the
Anderson tight binding model for large disorder or low energy},
Commun. Math. Phys. {\bf 88}, 151-184 (1983)


\bibitem[GMP]{GMP}  Ya. Gol'dsheid, S. Molchanov and L. Pastur,  
{\sl Pure point spectrum of stochastic one dimensional Schr\"odinger operators},
Funct. Anal. Appl. {\bf 11}, 1-10 (1977)

\bibitem[Hal]{H} F. Halasan, {\sl Absolutely continuous spectrum for the Anderson model on trees},
PhD thesis 2009, 
arXiv:0810.2516v3 (2008)

\bibitem[Ka]{Ka}  T. Kato,  {\sl Wave operators and similarity for some non self-adjoint operators},
Mat. Ann. {\bf 162}, 258-279 (1966) 

\bibitem[Kel]{Kel} M. Keller,
{\sl On the spectral theory of operators on trees}, PhD Thesis 2010, accessible at arXiv:1101.2975

\bibitem[KLW]{KLW} M. Keller, D. Lenz and S. Warzel, 
{\sl On the spectral theory of trees with finite cone type}, arXiv:1001.3600v2 (2011),
to appear in Israel J. Math.

\bibitem[KLW2]{KLW2} M. Keller, D. Lenz and S. Warzel, 
{\sl Absolutely continuous spectrum for random operators on trees of finite cone type}, arXiv:1108.0057 (2011)


\bibitem[Kl1]{Kl1}  A. Klein,  {\sl The supersymmetric replica trick and smoothness of the density of states 
for random Schrodinger operators},   Proc. Symposia in Pure Mathematics {\bf 51}, 315-331 
(1990)

\bibitem[Kl2]{Kl2}  A. Klein,  {\sl Localization in the Anderson model with long range hopping}, 
Braz. J. Phys. {\bf 23}, 363-371 (1993)

\bibitem[Kl3]{Kl3}  A. Klein, {\sl Absolutely continuous spectrum in the Anderson model on the Bethe lattice},
  Math. Res. Lett. {\bf 1}, 399-407 (1994)
  
 \bibitem[Kl4]{Kl4}  A. Klein, {\sl Absolutely continuous spectrum in random Schr\"odinger operators},
 Quantization, nonlinear partial differential equations, and operator
 algebra (Cambridge, MA, 1994), 
 139-147, Proc. Sympos. Pure Math. \textbf{59}, Amer. Math. Soc., Providence, RI,  1996
 
 \bibitem[Kl5]{Kl5}  A. Klein,   {\sl Spreading of wave packets in the Anderson model on the Bethe lattice},
Commun. Math. Phys. {\bf 177}, 755--773 (1996)
  
\bibitem[Kl6]{Kl6} A. Klein, {\sl Extended states in the Anderson model on the Bethe lattice},
Advances in Math. {\bf 133}, 163-184 (1998)


 \bibitem[KLS]{KLS}  A. Klein, J. Lacroix and A. Speis,  {\sl Localization for the
Anderson model on a strip with singular potentials},  J. Funct. Anal.
 {\bf  94}, 135-155 (1990)

\bibitem[KS]{KS} A. Klein and C. Sadel, {\sl Absolutely Continuous Spectrum for Random Schr\"odinger Operators
on the Bethe Strip}, Math. Nachr. {\bf 285}, 5-26 (2012)


\bibitem[KS2]{KS2} A. Klein and C. Sadel,
\emph{Ballistic Behavior for Random Schr\"odinger Operators on the Bethe Strip},
J.~Spectr. Theory {\bf 1}, 409-442 (2011) 


\bibitem[KSp]{KSp} A. Klein and A. Speis,  {\sl Smoothness of the density of states in
the Anderson model on a one-dimensional strip}, Annals of Phys. {\bf 183}, 352-398 (1988)

\bibitem[Klo]{Klo} F. Klopp,  {\sl Weak disorder localization and Lifshitz tails},  
Commun. Math. Phys. \textbf{232}, 125-155 (2002)


\bibitem[KuS]{KuS}  H. Kunz and B. Souillard,  {\sl Sur le spectre des operateurs
aux differences finies aleatoires},  Commun. Math. Phys. {\bf 78},
201-246 (1980)


\bibitem[Lac]{Lac}  J. Lacroix, {\sl Localisation pour l'op\'erateur de
Schr\"odinger al\'eatoire dans un ruban},  Ann. Inst. H. Poincar\'e ser
{\bf A40}, 97-116 (1984)


\bibitem[Nir]{N}  L. Nirenberg,  {\sl Topics in Nonlinear Functional Analysis},  New York:  Courant Institute of 
Mathematical Sciences, 1974    



\bibitem[SS]{SS} C. Sadel and H. Schulz-Baldes, 
{\sl Random Dirac Operators with time reversal symmetry},
Commun. Math. Phys. {\bf 295}, 209-242 (2010)


\bibitem[SW]{SW}  B. Simon and T. Wolff, {\sl Singular continuum spectrum under
rank one perturbations and localization for random Hamiltonians},
Commun. Pure. Appl. Math. {\bf 39}, 75-90 (1986)


\bibitem[Wa]{Wang} W.-M Wang, {\sl Localization and universality of Poisson statistics
 for the multidimensional Anderson model at weak disorder}, 
 Invent. Math. \textbf{146}, 365-398 (2001)
  

\end{thebibliography}
\end{document}